\newtheorem{theorem}{Theorem}
\newtheorem*{theorem*}{Theorem}
\newtheorem{corollary}[theorem]{Corollary}
\newtheorem{question}{Question}
\newtheorem{lemma}{Lemma}
\newtheorem{observation}{Observation}
\newtheorem{proposition}[theorem]{Proposition}
\newtheorem*{proposition*}{Proposition}
\newtheorem{criterion}{Criterion}
\theoremstyle{definition}
\newtheorem{problem}[question]{Problem}
\newtheorem{example}{Example}
\newtheorem{remark}{Remark}
\newcommand{\Iff}{iff\xspace}
\newcommand{\Sec}[1]{Section~\ref{#1}\xspace}
\newcommand{\N}{\mathbb{N}}
\newcommand{\F}{\mathbb{F}}
\newcommand{\Z}{\mathbb{Z}}
\newcommand{\ie}{i.\,e.}
\newcommand{\eg}{e.\,g.}
\renewcommand{\P}{{\rm P}\xspace}
\newcommand{\NP}{{\rm NP}\xspace}
\newcommand{\cc}[1]{\mathsf{#1}} % operator for complexity class
\newcommand{\algprob}[1]{{\sc #1}\xspace}
\newcommand{\algprobm}[1]{\mbox{\sc #1}\xspace}
\newcommand{\GrI}{\algprobm{GraphI}}
\newcommand{\GpI}{\algprobm{GpI}}
\newcommand{\cohiso}{{\sc Cohomology Class Isomorphism}\xspace}
\newcommand{\actcomp}{{\sc Action Compatibility}\xspace}
\newcommand{\edpcs}{{\sc EDPC}\xspace}
\newcommand{\cohisos}{{\sc CCIso}\xspace}
\newcommand{\actcomps}{{\sc ActComp}\xspace}
\newcommand{\aut}{\mathrm{Aut}} % operator as AUTmorphism
\newcommand{\sg}{\mathrm{S}}  % symmetric groups
\newcommand{\Ind}{\mathrm{Ind}}
\newcommand{\clo}{\mathrm{Clo}}
\newcommand{\Iota}{I}
\newcommand{\id}{\mathrm{id}} % the IDentity element of a group
\newcommand{\Iso}{\mathrm{Iso}}
\newcommand{\Aut}{\mathrm{Aut}}
\newcommand{\Gal}{\mathrm{Gal}}
\newcommand{\PSL}{\mathrm{PSL}}
\newcommand{\PSU}{\mathrm{PSU}}
\newcommand{\PGL}{\mathrm{PGL}}
\newcommand{\SL}{\mathrm{SL}}
\DeclareMathOperator{\Char}{char}
\DeclareMathOperator{\im}{Im}
\newcommand{\GL}{\mathrm{GL}}
\newcommand{\M}{\mathrm{M}}
\newcommand{\poly}{\mathrm{poly}}
\newcommand{\sym}{\mathrm{Sym}}
\DeclareMathOperator{\rad}{Rad}
\newcommand{\dihedral}[1]{\mathrm{D}_{2^{#1}}}
\newcommand{\sdihedral}[1]{\mathrm{SD}_{2^{#1}}}
\newcommand{\gquaternion}[1]{\mathrm{GQ}_{2^{#1}}}
\newcommand{\cohisol}{{\sc Cohomology Class Isomorphism}\xspace}
\newcommand{\actcompl}{{\sc Action Compatibility}\xspace}
\newcommand{\edpcl}{{\sc Extension Data Pseudo-congruence}\xspace}
\renewcommand{\cohiso}{{\sc CCIso}\xspace}
\renewcommand{\actcomp}{{\sc ActComp}\xspace}
\renewcommand{\Iff}{if and only if\xspace}
\newcommand{\wrt}{with respect to\xspace}
\newcommand{\st}{such that\xspace}
\newtheorem*{lemma*}{Lemma}
\title{Polynomial-time isomorphism test of groups that are tame extensions}
\author{
Joshua A. Grochow\thanks{Santa Fe Institute, Santa Fe, NM, USA, 
{\tt jgrochow@santafe.edu}} 
\and 
Youming Qiao\thanks{Centre for Quantum 
Computation and Intelligent Systems, University of Technology, Sydney, Australia, 
{\tt jimmyqiao86@gmail.com}}}
\date{\today}
\begin{document}

\maketitle

\begin{abstract}
We give new polynomial-time algorithms for testing isomorphism of a class of groups given by multiplication tables (\GpI). 
Two results (Cannon \& Holt, J. Symb. Comput. 2003; Babai, Codenotti \& 
Qiao, ICALP 2012) imply that \GpI reduces to the 
following: given groups $G, H$ with characteristic subgroups of the same 
type and
isomorphic to $\Z_p^d$, and given the coset of isomorphisms $\Iso(G/\Z_p^d, 
H/\Z_p^d)$, 
compute $\Iso(G, H)$ in time $\poly(|G|)$. Babai \& Qiao (STACS 2012) solved this problem  
when a Sylow $p$-subgroup of $G/\Z_p^d$ is trivial.
In this paper, we solve the preceding problem in the so-called ``tame'' case, \ie,
when a Sylow $p$-subgroup of $G/\Z_p^d$ is cyclic, dihedral,
semi-dihedral, or generalized 
quaternion. These cases correspond exactly to the group algebra 
$\overline{\F}_p[G/\Z_p^d]$ being of 
tame type, as in the celebrated tame-wild dichotomy in representation theory. 
We then solve new cases of \GpI in polynomial time.

Our result relies crucially on the divide-and-conquer strategy proposed earlier by 
the authors (CCC 2014), which splits \GpI into two problems, one on group actions 
(representations), and one on group cohomology. Based on this strategy, we combine 
permutation group and representation 
algorithms with new mathematical results, including bounds on the number of 
indecomposable representations of groups in the tame case, and on the size of 
their cohomology groups. 

Finally, we note that when a group extension is \emph{not} tame, the preceding 
bounds do not hold. This suggests a precise sense in which the tame-wild dichotomy 
from representation theory may also be a dividing line between the (currently) easy and hard 
instances of \GpI.
\end{abstract}

\section{Introduction}
The group isomorphism problem \GpI is to decide whether two finite groups, given by 
their multiplication tables, are isomorphic. It is one of the few natural 
problems not known to be in $\P$, and unlikely to be $\NP$-complete, as it reduces to Graph Isomorphism (\GrI; see, \eg, \cite{struct}).
In addition to its intrinsic interest, resolving the exact complexity of $\GpI$ is thus a tantalizing question. Further, there is
a surprising connection between $\GpI$ and the Geometric Complexity Theory program (see, \eg, \cite{gct_acm} and references therein): Techniques from \GpI were used to solve cases of \algprob{Lie Algebra Isomorphism} that have applications in Geometric Complexity Theory \cite{grochowLie}. In a survey article \cite{Babaisurvey} in 1995, after enumerating several isomorphism-type problems including $\GrI$ and $\GpI$, Babai expressed the belief that \GpI might be the only one expected to be in $\cc{P}$.\footnote{The exact quotation from Babai's 1995 survey \cite{Babaisurvey} is: ``None of the problems mentioned in this section, with the possible exception of isomorphism of groups given by a Cayley table, is expected to have polynomial time solution.''} 
Despite its connection with \GrI, $\cc{P}$ seems an achievable goal for \GpI, as there are many reasons \GpI seems easier than \GrI (see, \eg, the introduction to \cite{GQccc} for an overview of these reasons).

As a group of order $n$ can be generated by $\lceil \log n\rceil$ elements, 
$\GpI$ is solvable in time $n^{\log n+O(1)}$ \cite{FN,Mil78}.\footnote{Miller 
\cite{Mil78} attributes this algorithm to Tarjan.} The only 
improvement for the general case was Rosenbaum's recent $n^{0.5\log 
n+O(1)}$ \cite{Rosen2}. However, there have been more significant improvements for 
special group classes, representing a more structural approach to the problem. 
Isomorphism of Abelian groups was recognized as easy quite early 
\cite{Sav80,Vik96}, 
leading to an $O(n)$-time algorithm \cite{Kav07}. Since 2009, 
there have been several non-trivial polynomial-time algorithms for much more 
complicated group classes: groups with no Abelian normal subgroups 
\cite{BCGQ,BCQ}, groups with Abelian Sylow towers \cite{Gal09,QST11,BQ}, and 
quotients of generalized Heisenberg groups \cite{LW12}. 

Partly motivated to distill 
a common pattern from the three recent major polynomial-time 
algorithms \cite{BCQ,BQ,LW12}, the authors proposed \cite{GQccc} a 
divide-and-conquer strategy for $\GpI$ based on the extension theory of groups. 
This strategy is crucial for Theorem~\ref{thm:main}. Before 
getting to the details of this strategy, let us first 
examine an approach for \GpI that motivates the problem that we study. 

In 2003, Cannon and Holt \cite{CH03} suggested the 
following outline for \GpI. First, they introduce a natural sequence of 
characteristic 
subgroups:
$G=G_0\rhd G_1\rhd \dots \rhd G_\ell=\id, $
where $G_1=\rad(G)$ is the solvable radical of $G$---the largest solvable normal 
subgroup---and $G_i/G_{i+1}$ is elementary Abelian for all $1\leq i\leq 
\ell-1$. 
This filtration is easily computed, and for each 
factor we know how to test isomorphism: $G/\rad(G)$ has no Abelian normal 
subgroups, so is handled by \cite{BCQ}.

Given two groups $G$ and $H$, after computing these filtrations of $G$ and $H$, the strategy is to first test isomorphisms of the corresponding factors, which is necessary for $G$ 
and $H$ to be isomorphic. Then, starting from $G_0/G_1(=G/\rad(G))$, proceed 
inductively along this filtration. Note that for $G_0 / G_1$, not only is isomorphism decidable in polynomial time, but a generating set for the 
coset of isomorphisms $\Iso(G_0 / G_1, H_0 / H_1)$ can be found in 
polynomial time \cite{BCQ}. 
%We add 
%this as an inductive assumption; note that this is a natural assumption, 
%both for its utility, and because most known results 
%\cite{BCQ,BCGQ,BQ,LW12,FN,Mil78} satisfy it. Under this assumption, 
After this initial step, a positive solution to the following 
problem would show that $\GpI \in \P$:

\begin{problem} \label{prob:key}
Given two groups $G, H$ with characteristic elementary Abelian subgroups $A$ 
and $B$, respectively, compute $\Iso(G, H)$ from $\Iso(G/A, H/B)$ in time $\poly(|G|)$.
\end{problem}

In fact, by developing a heuristic algorithm for Problem~\ref{prob:key} in 
\cite[Sec.~5]{CH03}, Cannon \& Holt obtained a practical algorithm for \GpI, but 
their algorithm uses a backtrack search that does not have good worst-case 
guarantees.\footnote{Due to different goals and settings, it is natural that our setting and the setting of Cannon \& Holt use different algorithmic ideas. 
That is, Cannon \& Holt work with more succinct representations of groups, and their goal is to 
obtain algorithms fast in practice, even if only heuristically. We work with the more ``redundant'' Cayley tables, 
but our goal is worst-case analysis.} Still, this is a 
very natural approach, and the polynomial-time algorithm for testing isomorphisms for $G/\rad(G)$ \cite{BCQ} solves the first step to 
this approach in the Cayley table model. 

To the best of our knowledge, the only previous result about 
Problem~\ref{prob:key} 
with a worst-case analysis in the Cayley table model is by Babai and the second 
author \cite{BQ}, who solved the case when $A\cong \Z_p^k$ and the Sylow 
$p$-subgroup\footnote{Although Sylow $p$-subgroups of a group need not be unique, for a given $p$ they are all isomorphic, so we may speak of ``the'' Sylow $p$-subgroup unambiguously, when we only need to refer to its isomorphism type.}
 of $G/A$ is trivial; that is, when $p\nmid |G/A|$. 
This was the key to the main result in \cite{BQ}.

In this paper, we solve Problem~\ref{prob:key} under 
certain conditions on the Sylow subgroups of $G$, more general than the 
aforementioned one for $\cite{BQ}$. Furthermore, these conditions are very 
natural, as they are aligned with the celebrated tame-wild dichotomy in the 
representation theory of associative algebras \cite{CR66,Ben98}. 

The following is a high-level picture of the tame-wild dichotomy; defining tame and wild rigorously requires terminology that is unnecessary for this article; we refer to \cite[Sec.~4.4]{Ben98} for a comprehensive introduction. For an algebra $L$ over an infinite field, classifying its
indecomposable representations up to isomorphism---those representations that are 
not direct sums of smaller ones---is a fundamental problem. The nicest possibility 
is when 
there are only finitely many indecomposables, in which case $L$ is said to be of 
\emph{finite type}. Beyond this, some algebras have the property that their indecomposables come in finitely many one-parameter families in each fixed dimension $d$,\footnote{For readers not familiar 
with this concept, here is an example to illustrate intuitively what one-parameter families mean. For an algebraically 
closed field $\F$, the Jordan blocks form a one-parameter family with the 
eigenvalue $\lambda\in \F$ as the parameter. The indecomposable $d$-dimensional representations of $\F[x]$ are given exactly by the $d \times d$ Jordan blocks. } possibly with finitely many 
exceptions. While this can be much more complicated than finite type, it is still ``classifiable;'' such algebras are said to be of \emph{tame type}.\footnote{Note that finite type can be considered as a special case of tame type, namely 
when the number of one-parameter families is $0$. In the literature, some 
authors take the definition of ``tame type'' to explicitly exclude 
finite type. We do not adopt that approach here.} Finally, some algebras $L$ have 
the surprising property that any indecomposable representation of \emph{any} algebra can be 
``embedded as'' (or ``simulated by'') an indecomposable of $L$; such algebras are 
called \emph{wild}. Drozd's celebrated dichotomy theorem \cite{Drozd80} says that 
every algebra over an algebraically closed field is either tame or wild.

In the case of groups, there is an explicit description of the three cases (see 
\cite[Theorem 4.4.4]{Ben98}): let $p$ be the characteristic of the field $\F$. $\F G$ 
is of finite type \Iff $p=0$, or $p>0$ and the Sylow $p$-subgroup of $G$ is 
cyclic. $G$ is of tame type, but not finite, \Iff $p=2$ and the Sylow $2$-subgroup 
of $G$ is dihedral, semi-dihedral, or generalized quaternion (see 
\Sec{sec:prelim} for definitions). All other cases are wild.

Suppose a group $G$ has a normal subgroup $A$ isomorphic to $\Z_p^d$, and let 
$Q=G/A$. $G$ is called a \emph{tame extension} of $A$ by $Q$, if 
$\overline{\F}_pQ$ 
is of tame type.\footnote{$\overline{\F}_p$ is the algebraic closure of $\F_p$. Although it is not 
standard to apply ``tame'' to extensions, this slight abuse is justified by the mathematical results behind our main theorem. }
We solve Problem~\ref{prob:key} exactly for groups of this form. Note that the 
Sylow $p$-subgroup being cyclic already generalizes the condition for $\cite{BQ}$. 

\begin{theorem}\label{thm:main}
Suppose $G, H$ come from the class of groups that have characteristic subgroups of 
the same type and isomorphic to the
elementary Abelian subgroup $\Z_p^d$. There is a polynomial-time algorithm to compute the coset of isomorphisms $\Iso(G, H)$ from the coset of isomorphisms $\Iso(G / 
\Z_p^d, H / \Z_p^d)$, if $G$ is a tame extension of $\Z_p^d$, namely if the Sylow 
$p$-subgroups of $G/\Z_p^d$ are cyclic, dihedral, semi-dihedral, or generalized quaternion.
\end{theorem}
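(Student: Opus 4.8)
The plan is to follow the divide-and-conquer framework of \cite{GQccc}: first reduce to the extension data of $G$ and $H$, then split the isomorphism problem into an ``action'' part (an \actcomp-type problem) and a ``cohomology'' part (a \cohiso-type problem), each of which becomes computationally tractable precisely because of the tame hypothesis. First I would reduce to pseudo-congruences. Write $A \iso B \iso \Z_p^d$, $Q = G/A$, $\bar Q = H/B$; since $A$ and $B$ are characteristic and of the same type, every isomorphism $G \to H$ carries $A$ onto $B$, so it suffices to compute the isomorphisms respecting these subgroups. Fixing set-theoretic sections of $G \twoheadrightarrow Q$ and $H \twoheadrightarrow \bar Q$ presents $G$ and $H$ by actions $\theta_G \colon Q \to \Aut(A)$ and $\theta_H \colon \bar Q \to \Aut(B)$ (making $A$ and $B$ into $\F_p Q$- and $\F_p \bar Q$-modules) together with $2$-cocycles $\gamma_G$ and $\gamma_H$. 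By the correspondence of \cite{GQccc} the subgroup-respecting isomorphisms are built from a \emph{pseudo-congruence}: a pair $(\alpha, \bar\beta) \in \Iso(A,B) \times \Iso(Q, \bar Q)$ with (a) $\alpha\theta_G(q)\alpha^{-1} = \theta_H(\bar\beta(q))$ for all $q$, and (b) $\alpha_*[\gamma_G] = \bar\beta^*[\gamma_H]$ in $H^2(Q,B)$; together with a $1$-cochain adjusting the sections, which ranges over a coset of $Z^1(Q,B)$ once (a)--(b) hold. We are given $\Iso(Q, \bar Q)$, hence a generating set for $\Aut(Q)$, and the goal is to output $\Iso(G,H)$ as a coset --- a generating set for $\Aut(G)$ together with one isomorphism --- or to report that none exists.

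For the action part I would compute the set $\mathcal T$ of pairs satisfying (a); if nonempty it is a coset of the group $\mathcal T_0 = \{(\alpha,\sigma) : \sigma \in \Aut(Q),\ \alpha\theta_G(q)\alpha^{-1} = \theta_G(\sigma(q)) \text{ for all } q\}$. Decompose $A$ over $\F_p Q$ and $B$ over $\F_p \bar Q$ into indecomposables in polynomial time. The kernel of $\mathcal T_0 \to \Aut(Q)$ is the module-automorphism group $\Aut_{\F_p Q}(A)$, obtained from the endomorphism algebra by linear algebra; its image is the stabilizer of the isomorphism class $[A]$ under the action of $\Aut(Q)$ on $d$-dimensional $\F_p Q$-modules, which I would compute by building the permutation action of the given generators of $\Aut(Q)$ on the orbit of $[A]$ --- using \modiso to recognize classes --- and running Schreier--Sims. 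For each stabilizer generator $\sigma$ one recovers a matching $\alpha$ from a single \modiso call between $\theta_G$ and $\theta_G \circ \sigma$, and one more \modiso call with a fixed $\bar\beta_0 \in \Iso(Q,\bar Q)$ yields a representative of $\mathcal T$ or declares it empty. Tameness is used here in one essential place: the orbit above has size $|G|^{O(1)}$ because, in the tame case, the number of indecomposable $\F_p Q$-modules of dimension at most $d$ is $|G|^{O(1)}$.

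For the cohomology part I would intersect $\mathcal T$ with the solution set of (b). Compute $H^2_{\theta_G}(Q,A)$ by $\F_p$-linear algebra on the bar cochain complex (the cochain groups $C^k(Q,A)$ for $k \le 2$ have polynomial $\F_p$-dimension), and express $[\gamma_G]$ and the relevant image of $[\gamma_H]$ as vectors. The group $\mathcal T_0$ acts $\F_p$-linearly on $H^2_{\theta_G}(Q,A)$ via $(\alpha,\sigma) \cdot [\gamma] = [\alpha \circ \gamma \circ (\sigma^{-1} \times \sigma^{-1})]$, which is well-defined precisely because $(\alpha,\sigma)$ satisfies (a), and condition (b) asks for the transporter of $[\gamma_G]$ to a fixed target --- a \cosetint-type problem. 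Tameness is used again: if $|H^2_{\theta_G}(Q,A)| \le |G|^{O(1)}$ then the orbit of $[\gamma_G]$ is polynomial-sized and the transporter is found by running Schreier--Sims on the permutation action induced by the generators of $\mathcal T_0$. Finally I would reassemble $\Iso(G,H)$ as a coset, combining the generators of the resulting pseudo-congruence group with generators of $Z^1(Q,A)$, which account for the section-adjustment freedom.

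The main obstacle is proving the two structural bounds, since this is exactly where tameness enters. For the module count one descends to the block level: blocks of $\F_p Q$ with trivial or cyclic defect group have finite representation type, with only $|Q|^{O(1)}$ indecomposables, whereas the dihedral, semidihedral, and generalized-quaternion cases ($p=2$) give tame blocks --- Morita equivalent to explicit special biserial algebras by Erdmann's classification of tame blocks --- whose string and band modules have bounded combinatorial branching and hence only $2^{O(d)} = |G|^{O(1)}$ indecomposables of dimension at most $d$; in the wild case the analogous count is super-polynomial. For the cohomology bound one must show $\dim_{\F_p} H^2(Q,V) = O(\dim V)$ for indecomposable $V$ in the tame case --- morally because such groups have periodic or polynomial-growth cohomology --- so that $\dim_{\F_p} H^2(Q,A) = O(\log_p |G|)$ and $|H^2(Q,A)| = |G|^{O(1)}$, and one must exhibit wild extensions where this fails. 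I expect these representation-theoretic and cohomological estimates in the genuinely tame (non-finite-type) cases to be the hard part; once the search spaces are known to be polynomial, the algorithm is a fairly standard orchestration of module-isomorphism, cohomology-by-linear-algebra, and permutation-group orbit/stabilizer computations, with the one subtlety that $\Aut(Q)$ is given only by a generating set.
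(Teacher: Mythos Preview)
Your framework matches the paper's: reduce to extension data via \cite{GQccc}, split into an action-compatibility problem and a cohomology-class problem, and use tameness to bound the relevant search spaces. The cohomology half is essentially what the paper does. But there is a genuine gap in the action step. You propose to compute the stabilizer of $[A]$ in $\Aut(Q)$ by enumerating the $\Aut(Q)$-orbit of the isomorphism class $[A]$ and running Schreier--Sims, justified by the claim that this orbit has size $|G|^{O(1)}$ because the number of indecomposables of dimension $\le d$ is $|G|^{O(1)}$. That inference fails: a polynomial bound on indecomposables does not bound the orbit of a \emph{direct sum} of them. Concretely, take $p=3$, $Q=(\Z_2)^n$ (trivial Sylow $3$-subgroup, hence tame), and $A=\bigoplus_{i=1}^{n/2}\F_3(\chi_i)$ for linearly independent characters $\chi_i\in\Hom(Q,\F_3^{\times})\cong\F_2^{\,n}$. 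Then $|G|=2^n\cdot 3^{n/2}$ and there are only $2^n$ indecomposables in all, yet $\Aut(Q)=\GL(n,2)$ carries the set $\{\chi_1,\dots,\chi_{n/2}\}$ over all linearly independent $(n/2)$-subsets of $\F_2^{\,n}$, of which there are $2^{\Theta(n^2)}$---super-polynomial in $|G|$. So your orbit enumeration does not terminate in polynomial time.

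The paper avoids this by working not on the orbit of $[A]$ but on the polynomial-size closure $\clo(\Ind(\theta))$ of the indecomposable summands under $\Aut(Q)$, and then, for each multiplicity $m$, solving a \emph{setwise} transporter problem on that domain for the set $S_m$ of summands of multiplicity $m$; since $|S_m|\le d$, the parametrized setwise-transporter algorithm of Theorem~\ref{thm:setwise} runs in time $\poly(|\clo|,2^{d})=\poly(|Q|,p^d)$, and iterating over $m$ gives the desired coset. Two smaller remarks: extracting $\Aut_{\F_pQ}(A)$ from the endomorphism algebra is not plain linear algebra but requires computing the unit group of a matrix algebra (Theorem~\ref{thm:unit}); and ``periodic cohomology'' covers only the cyclic and generalized-quaternion Sylow cases, not dihedral or semidihedral, so your sketch of the $H^2$ bound is incomplete there---the paper instead combines the Lyndon--Hochschild--Serre inequality with the cyclic-Sylow bound of \cite{GKKL} to get $\dim_\F H^2(Q,M)\le 3\dim_\F M$ uniformly across all four tame Sylow types.
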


The condition on $G/\Z_p^d$ is satisfied by several well-known group classes:
\begin{itemize}
\item Groups with dihedral Sylow $2$-subgroups are classified \cite{gorensteinWalterAll,bender}: Let $\mathrm{O}(G)$ be the maximal normal odd-order subgroup. If $G$ has 
a dihedral Sylow subgroup, $G/\mathrm{O}(G)$ must be isomorphic to one of: (i) a subgroup of $\text{P}\Gamma\text{L}_2(\F_q)$ containing $\PSL_2(\F_q)$;\footnote{$\text{P}\Gamma\text{L}_n(\F_q)$ is the semi-direct product $\PGL_n(\F_q) \rtimes \Gal(\F_q / \F_p)$, where the Galois group $\Gal(\F_q / \F_p)$ acts on $n \times n$ matrices by sending each entry $\alpha$ to $\alpha^p$, where $p$ is the unique prime dividing $q$.} (ii) the alternating group $\mathrm{A}_7$; 
(iii) a Sylow 2-subgroup of $G$. 

\item The Sylow 2-subgroup of $\SL_2(\F_q)$ is generalized quaternion when $q$ is 
odd \cite[p.~42]{gorenstein} (or see \cite[Corollary~4.12]{conradNotes}).

\item If $D$ is a division ring, then any Sylow subgroup of a finite subgroup of the unit group $D \backslash \{0\}$
is cyclic or generalized quaternion (see \cite[Corollary~4.10]{conradNotes}). 

\item The Sylow 2-subgroups of the following groups are semi-dihedral: 
$\PSL_3(\F_q)$ for $q \equiv 3 \pmod{4}$, $\PSU_3(\F_q)$ for $q \equiv 1 
\pmod{4}$, the Mathieu group $M_{11}$, and $\GL_2(\F_q)$ for $q \equiv 3 \pmod{4}$ (see, \eg, \cite{ABG}).
\end{itemize}

Theorem~\ref{thm:main} allows us to solve \GpI in $\P$ for a class of groups that we now describe. Following \cite{BQ}, we say that a group $G$ has a \emph{Sylow 
tower} if there is 
a normal series $\id = G_\ell \lhd \dotsb \lhd G_1 \lhd G_0 = G$ where each $G_i 
/ 
G_{i+1}$ is isomorphic to a Sylow subgroup of $G$. We say that $G$ has an 
\emph{elementary Abelian Sylow tower} if furthermore all its Sylow 
subgroups are elementary Abelian. 

\begin{corollary} \label{cor:main}
The coset of isomorphisms 
between two groups $G, H$ can be computed in polynomial time when (1) $\rad(G)$ 
has an elementary 
Abelian Sylow tower, and (2) for any prime $p$ dividing $|\rad(G)|$, the Sylow 
$p$-subgroup of $G/\rad(G)$ is cyclic, dihedral, semi-dihedral, or generalized 
quaternion.
\end{corollary}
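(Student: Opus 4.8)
The plan is to bootstrap Theorem~\ref{thm:main} along a characteristic filtration, reducing \GpI for such a pair $G,H$ to Problem~\ref{prob:key} one Sylow layer of the solvable radical at a time, in the spirit of the Cannon--Holt outline discussed above.

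First I would set up the filtration. By hypothesis~(1), $\rad(G)$ has an elementary Abelian Sylow tower: if $q_1,\dots,q_k$ are the distinct primes dividing $|\rad(G)|$, there is a normal series $\id = G_0 \lhd G_1 \lhd \dots \lhd G_k = \rad(G)$ with $G_i/G_{i-1}$ a Sylow $q_i$-subgroup of $\rad(G)/G_{i-1}$, each elementary Abelian (a Sylow subgroup of $\rad(G)$ is, by hypothesis), say $G_i/G_{i-1}\cong \Z_{q_i}^{e_i}$; append $G_{k+1}=G$. Each $G_i$ is characteristic in $G$ (inductively: $G_i/G_{i-1}$ is the unique Sylow $q_i$-subgroup of $\rad(G)/G_{i-1}$, hence characteristic there, and $\rad(G)/G_{i-1}$ is characteristic in $G/G_{i-1}$ since it equals $\rad(G/G_{i-1})$), and by the same argument $G_i/G_{i-1}$ is characteristic in $G/G_{i-1}$. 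I would run the same construction on $H$ to obtain $G_0'\lhd\dots\lhd G_k'=\rad(H)$; if $\rad(H)$ has no elementary Abelian Sylow tower, or the multiset of pairs $(q_i,e_i)$ differs from that of $G$, then $G\not\cong H$ and the algorithm returns $\Iso(G,H)=\emptyset$.

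The algorithm then recurses down this filtration. The base case is $\Iso(G/\rad(G),\,H/\rad(H))$: since $\rad(G/\rad(G)) = \id$, this group has no nontrivial Abelian normal subgroup, so a generating set for this coset is computable in polynomial time by the algorithm for such groups \cite{BCGQ,BCQ}. For $i = k, k-1,\dots,1$, given a generating set for $\Iso(G/G_i,\,H/G_i')$, I would apply Theorem~\ref{thm:main} to $G/G_{i-1}$ and $H/G_{i-1}'$ with their characteristic elementary Abelian subgroups $G_i/G_{i-1}\cong\Z_{q_i}^{e_i}\cong G_i'/G_{i-1}'$: since $(G/G_{i-1})/(G_i/G_{i-1})\cong G/G_i$ (and likewise for $H$), the coset that the theorem takes as input is exactly $\Iso(G/G_i,H/G_i')$, already in hand, and its output is a generating set for $\Iso(G/G_{i-1},H/G_{i-1}')$ (or a report that it is empty, in which case we stop and return $\emptyset$). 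After the step $i=1$ we have $\Iso(G/G_0,H/G_0') = \Iso(G,H)$. To see that Theorem~\ref{thm:main} applies at step $i$, we must check that $G/G_{i-1}$ is a tame extension of $G_i/G_{i-1}$, i.e.\ that the Sylow $q_i$-subgroup of $(G/G_{i-1})/(G_i/G_{i-1})\cong G/G_i$ is cyclic, dihedral, semi-dihedral, or generalized quaternion. The key point is that $q_i \nmid |\rad(G)/G_i| = \prod_{j>i} q_j^{e_j}$, since the primes $q_1,\dots,q_k$ in the Sylow tower are distinct and $q_i$ is ``used up'' in the single layer $G_i/G_{i-1}$; hence $\rad(G)/G_i$ is a normal $q_i'$-subgroup of $G/G_i$, and the Sylow $q_i$-subgroup of $G/G_i$ maps isomorphically onto one of $(G/G_i)/(\rad(G)/G_i)\cong G/\rad(G)$. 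As $q_i \mid |\rad(G)|$, hypothesis~(2) of the corollary says precisely that this Sylow subgroup is tame.

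Finally, $k \le \log_2|G|$, so the recursion has $O(\log|G|)$ stages, each taking time $\poly(|G|)$ --- the base case by \cite{BCGQ,BCQ} and each lifting step by Theorem~\ref{thm:main} --- and each coset of isomorphisms is carried as a polynomial-size generating set (standard for subcosets of a symmetric group), giving a polynomial total running time. I do not expect a genuine obstacle beyond Theorem~\ref{thm:main} itself: the work is entirely the bookkeeping above, and the one place the elementary-Abelian-Sylow-tower hypothesis is really needed --- rather than just solvability of $\rad(G)$ --- is the observation that, as the radical is peeled off layer by layer, the Sylow $q_i$-subgroup of the shrinking quotient never grows past that of $G/\rad(G)$, so every intermediate extension stays tame.
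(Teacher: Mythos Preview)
Your proposal is correct and follows essentially the same approach as the paper: apply \cite{BCGQ,BCQ} to the semisimple quotient $G/\rad(G)$ as the base case, then iterate Theorem~\ref{thm:main} down the characteristic elementary Abelian Sylow tower of $\rad(G)$, using at each step the fact that quotienting by a normal $q_i'$-subgroup preserves the Sylow $q_i$-subgroup up to isomorphism, so tameness is inherited from $G/\rad(G)$. If anything, you are more explicit than the paper about why each $G_i$ is characteristic (via the normal-Sylow-of-the-radical functor) and why $\rad(G)/G_{i-1}=\rad(G/G_{i-1})$; the one cosmetic point is that when you ``run the same construction on $H$,'' you should use the \emph{same ordered sequence} of primes $q_1,\dots,q_k$ dictated by $G$ (not merely check that the multisets agree), so that the $G_i'$ arise from the same characteristic subgroup functors as the $G_i$.
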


\begin{proof} 
The algorithm of \cite{BCGQ} computes the coset of isomorphisms for groups of the 
form $G / \rad(G)$. Apply Theorem~\ref{thm:main} iteratively, with this as the 
base case. 

To ensure that the condition is satisfied iteratively, we need the following fact. Let $\id = G_\ell \lhd \dotsb \lhd 
G_1 \lhd G_0 = G$ be the filtration 
where $G_1=\rad(G)$ and the rest is an elementary Abelian Sylow tower. Suppose at 
the $i$th step, $i\geq 1$, $G_i/G_{i+1}$ is an elementary 
Abelian $p_i$-group. We need to show that the Sylow $p_i$-subgroup $G/G_i$ is 
isomorphic to the Sylow $p_i$-subgroup of $G/G_1$. 

The preceding fact follows from the claim: If $N$ is a normal subgroup of $G$, and 
$p\nmid |N|$, then a Sylow $p$-subgroup of $G/N$ is isomorphic to a Sylow 
$p$-subgroup of $G$. The claim follows from the Schur--Zassenhaus Theorem, 
but there is also a more direct, elementary proof, as follows. Let $P$ be a Sylow 
$p$-subgroup of $G$, and consider the restriction of the quotient map 
$\varphi\colon G \twoheadrightarrow G/N$ to $P$. Since $p \nmid |N|$, $P \cap N = 
1$, so $P$ is mapped isomorphically onto its image in $G/N$. Since $p \nmid |N|$, 
$|P|$ is the largest power of $p$ dividing $|G/N|$, so the image of $P$ under the 
quotient map $G \twoheadrightarrow G/N$ is a Sylow $p$-subgroup of $G/N$.
\end{proof}

We now compare our result with the previous one in \cite{BQ}. 
Firstly, a 
critical difference is that in our setting we need to deal with both actions and 
cohomology classes (see \Sec{sec:strategy}). In the setting of \cite{BQ}, the
Schur--Zassenhaus theorem implies that the cohomology classes are always trivial, so this part 
does not appear in \cite{BQ} at all. 
Secondly, to deal with actions (Problem~\ref{prob:action}), though we follow the 
algorithmic framework of \cite{BQ}, for the supporting algorithmic subroutines, we 
need to use some sophisticated algorithms in computational algebra (see 
\Sec{sec:prelim}), while in \cite{BQ} the corresponding subroutines are rather 
straightforward. Finally, we bound the running time of our algorithms by proving size bounds on representations and on group cohomology in the tame case, using an explicit description of representations from the literature, and using previously known results on group cohomology.
This was not needed in \cite{BQ}.

More broadly, to achieve Theorem~\ref{thm:main}, for the first time in the worst-case analysis of \GpI, we step into the regime of modular representation theory---that is,
when the characteristic of the underlying field divides the order of the group. 
This theory is much less 
well-understood than ordinary representation theory. As the reader may see 
later, to solve Problem~\ref{prob:key} in general seems to require certain 
deep use of this theory.  
We hope this article serves as a first step in this direction.

\paragraph{Organization.} We first 
present some preliminaries in 
\Sec{sec:prelim}. In \Sec{sec:strategy} we show how the splitting strategy of 
\cite{GQccc} applies in this case, and in \Sec{sec:proofOverview} we give an 
overview of the proofs. Detailed proofs for the action aspect and the cohomology 
aspect are presented in \Sec{sec:action} and \Sec{sec:cohomology}. 
Finally, in \Sec{sec:conclusion} we discuss the general relationship between 
\GpI and the tame-wild dichotomy in representation theory, and present some open 
questions. The appendix is devoted to reproduce Crawley-Boevey's description of 
the indecomposable modules of semi-dihedral algebras for readers' convenience. 

\section{Preliminaries} \label{sec:prelim}

\paragraph{Notations and definitions.} For a prime $p$, $\F_p$ denotes the 
field of size $p$. 
The characteristic of a field $\F$ is 
denoted $\Char(\F)$. $\M(n, p)$ is the set of $n\times n$ matrices over 
$\F_p$, and $\GL(n, p)$ is the group of $n\times n$ invertible matrices of $\F_p$. 
For $n\in\N$, 
$[n]:=\{1, \dots, n\}$. $\sym(\Omega)$ denotes the symmetric group over a set $\Omega$; when $\Omega=[n]$ we write $\sg_n$. A permutation group over $\Omega$ is a 
subgroup of $\sym(\Omega)$. 

$\Z_p$ denotes the cyclic group of order $p$. A 
group is elementary Abelian if it is isomorphic to $\Z_p^d$ for some prime $p$ and 
some integer $d$. 
The dihedral groups (of order a power of $2$) are $\dihedral{m}=\langle x, y \mid x^2=y^{2^m}=1, 
yx=xy^{-1}\rangle$.
The semi-dihedral or quasi-dihedral groups are $\sdihedral{m}=\langle x, y\mid x^2=y^{2^m}=1, yx=xy^{2^{m-1}-1}\rangle$.
The (generalized) quaternion groups are $\gquaternion{m}=\langle x, y\mid 
x^2=y^{2^{m-1}}, yx=xy^{-1}\rangle$. 
$\dihedral{m}$, $\sdihedral{m}$, and $\gquaternion{m}$ are of order $2^{m+1}$; $\dihedral{1}$ is the Klein four group. 

\begin{remark}
There is a polynomial-time algorithm to decide whether a given group is $\dihedral{m}$, $\sdihedral{m}$, or $\gquaternion{m}$, because these groups are 
generated by two elements.
\end{remark}

\paragraph{General group theory.} 
A $p$-group for $p$ prime is a group whose 
order is $p^d$ for some $d$. A Sylow $p$-subgroup of a group $G$ is a maximal 
$p$-subgroup of $G$, under inclusion. Two of the Sylow theorems say that every 
finite group has a Sylow $p$-subgroup whose order is the largest power of $p$ 
that 
divides $G$, and all Sylow $p$-subgroups of $G$ are conjugate to one another. 
Thus, up to isomorphism, we may speak of ``the'' Sylow $p$-subgroup of a group 
$G$. 
Given the Cayley table of a group, a Sylow $p$-subgroup can be found in 
polynomial time.

A subgroup $N$ of $G$ is 
\emph{characteristic} if $N$ is sent to itself by every automorphism of $G$. A 
\emph{characteristic subgroup functor} is a function $\mathcal{S}$ from finite 
groups to finite groups such that (1) $\mathcal{S}(G) \leq G$ for all $G$, and (2) 
any isomorphism $\varphi \colon G_1 \to G_2$ restricts to an isomorphism 
$\varphi|_{\mathcal{S}(G_1)} \colon \mathcal{S}(G_1) \to \mathcal{S}(G_2)$. In 
particular, it follows that $\mathcal{S}(G)$ is always characteristic in $G$. 
Examples of characteristic subgroup functors include most ``natural'' 
characteristic subgroups such as the center, the derived subgroup, and the terms 
of the derived, lower central, and upper central series. A characteristic 
subgroup 
functor is Abelian (resp. elementary Abelian), if $\mathcal{S}(G)$ is Abelian 
(resp., elementary Abelian) for all $G$. 

\noindent\textit{Convention:} In this paper, whenever we say ``characteristic 
subgroup'' we 
mean the image of an implied characteristic subgroup functor.

\paragraph{Indecomposable modules.} As representations of a group $Q$ 
over a field $\F$ are the same as modules over the group algebra $\F Q$, 
we shall use 
the terms module and representation interchangeably. For two representations 
$\theta$ and $\eta$, we use $\theta\cong \eta$ to denote that they are 
equivalent. Let $M$ be a module of an 
algebra $L$. $M$ is 
\emph{indecomposable} if it cannot be written as a direct sum of two submodules. We denote the set of $d$-dimensional indecomposable modules of an algebra $L$ by $\Ind(L,d)$.
The decomposition of $M$ into a direct sum of indecomposables is essentially unique:

\begin{theorem}[{Krull--Schmidt (see, \eg, \cite[Theorem 1.4.6]{Ben98})}]\label{thm:krull_schmidt}
Let $\phi$ and $\psi$ be two linear representations of a group $Q$. Suppose
$\phi=\iota_1^{d_1}\oplus \dots \oplus \iota_\ell^{d_\ell}$ and 
$\psi=\iota_1^{e_1}\oplus \dots \oplus \iota_\ell^{e_\ell}$, where $\iota_i$'s 
are indecomposable and pairwise non-isomorphic, and all $d_i,e_i \geq 0$. Then $\phi \cong \psi$ \Iff $d_i=e_i$ for every $i\in[\ell]$.
\end{theorem}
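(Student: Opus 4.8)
The plan is to reduce the statement to the uniqueness half of Krull--Schmidt, which is the only content here: the ``if'' direction is trivial because $\iota_1^{d_1}\oplus\cdots\oplus\iota_\ell^{d_\ell}$ and $\iota_1^{e_1}\oplus\cdots\oplus\iota_\ell^{e_\ell}$ are literally the same formal direct sum when all $d_i=e_i$, and the existence of a decomposition into indecomposables is already supplied by the hypothesis. So, writing $M$ for the common underlying $\F Q$-module (so $M\cong\bigoplus_i\iota_i^{d_i}\cong\bigoplus_i\iota_i^{e_i}$ with the $\iota_i$ indecomposable and pairwise non-isomorphic), I would show that the multiplicity of each $\iota_i$ in any decomposition of $M$ into indecomposables depends only on $M$. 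The single structural input is that $M$, hence every submodule, is finite-dimensional over $\F$, so that $\End_{\F Q}(N)$ is always a finite-dimensional $\F$-algebra; in particular the argument is characteristic-free, which is exactly what the modular setting $\Char(\F)\mid|Q|$ of this paper requires. Write $\End$ for $\End_{\F Q}$ below.

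\emph{Step 1: endomorphism rings of finite-dimensional indecomposables are local.} I would first prove Fitting's lemma: for finite-dimensional $N$ and $f\in\End(N)$, the chains $\ker f\subseteq\ker f^2\subseteq\cdots$ and $\im f\supseteq\im f^2\supseteq\cdots$ stabilize, and for $n$ large enough $N=\ker f^n\oplus\im f^n$ as $\F Q$-modules (directness from $\ker f^n=\ker f^{2n}$, fullness from rank--nullity applied to $f^n$). Hence if $N$ is indecomposable then every $f\in\End(N)$ is nilpotent (if $\ker f^n=N$) or, using finite-dimensionality, invertible (if $\ker f^n=0$). Next, since $gf$ or $fg$ being invertible forces $f$ to be injective resp.\ surjective, hence invertible, the set of non-units of $\End(N)$ is closed under left and right multiplication by arbitrary elements; and if $f,g$ were non-units with $f+g=:u$ a unit, then $u^{-1}f$ would be a non-unit, hence nilpotent, so $\id-u^{-1}f=u^{-1}g$ would be a unit, contradicting that $u^{-1}g$ is a non-unit. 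Thus the non-units form a proper two-sided ideal, i.e.\ $\End(N)$ is local.

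\emph{Step 2: the exchange argument.} I would induct on $\sum_i d_i$; the base case $\sum_i d_i=0$ forces $M=0$ and hence all $e_i=0$. Otherwise choose an index with $d_1\geq1$ (relabel if necessary) and a copy $L\cong\iota_1$ of it inside the first decomposition, with inclusion $\sigma\colon L\to M$ and projection $\pi\colon M\to L$ satisfying $\pi\sigma=\id_L$. Expanding $\id_M=\sum_j\sigma_j\pi_j$ along the second decomposition $M=\bigoplus_jN_j$ (with the evident inclusions $\sigma_j$ and projections $\pi_j$) gives $\id_L=\pi\sigma=\sum_j(\pi\sigma_j)(\pi_j\sigma)$ inside the \emph{local} ring $\End(L)$, so some summand, say for $j=j_0$, is an automorphism $u$ of $L$. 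Then $\pi_{j_0}\sigma\colon L\to N_{j_0}$ has left inverse $u^{-1}\pi\sigma_{j_0}$, hence is a split monomorphism, so $N_{j_0}\cong L\oplus(\text{complement})$; indecomposability of $N_{j_0}$ and $L\neq0$ force $N_{j_0}\cong L\cong\iota_1$, and as the $\iota_i$ are pairwise non-isomorphic, $N_{j_0}$ is one of the $\iota_1$-summands of the second decomposition, so $e_1\geq1$. Using that $\pi_{j_0}\sigma$ is injective one gets $\sigma(L)\cap\bigoplus_{j\neq j_0}N_j=0$, and since $\dim\sigma(L)=\dim\iota_1=\dim N_{j_0}$ a dimension count yields $M=\sigma(L)\oplus\bigoplus_{j\neq j_0}N_j$. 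Because $\sigma(L)$ is also one of the $\iota_1$-summands of the \emph{first} decomposition, quotienting $M$ by it gives $\iota_1^{d_1-1}\oplus\bigoplus_{i\geq2}\iota_i^{d_i}\cong M/\sigma(L)\cong\bigoplus_{j\neq j_0}N_j\cong\iota_1^{e_1-1}\oplus\bigoplus_{i\geq2}\iota_i^{e_i}$, a pair of decompositions with strictly smaller total multiplicity. The inductive hypothesis then gives $d_1-1=e_1-1$ and $d_i=e_i$ for $i\geq2$, hence $d_i=e_i$ for all $i$.

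\emph{Expected main obstacle.} The one genuinely delicate point is the replacement step in Step 2: deducing, from the single fact that $(\pi\sigma_{j_0})(\pi_{j_0}\sigma)$ is a unit of $\End(L)$, that the chosen copy $\sigma(L)$ of $\iota_1$ can be substituted for $N_{j_0}$, i.e.\ that $M=\sigma(L)\oplus\bigoplus_{j\neq j_0}N_j$; this is where the interplay of the two families of projections and inclusions must be tracked carefully. Everything else---Fitting's lemma, the local ring property, and the quotient bookkeeping---is routine linear algebra once Step 1 is in place, and no representation-theoretic ingredient (characters, semisimplicity, $p\nmid|Q|$, etc.) enters, which is why the statement holds verbatim over an arbitrary field and in particular in the modular regime central to this paper.
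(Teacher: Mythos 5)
Your proof is correct, and it is essentially the classical argument behind the result the paper simply cites (Benson, Theorem~1.4.6): Fitting's lemma plus locality of $\End_{\F Q}(\iota)$ for finite-dimensional indecomposables, followed by the exchange/induction step, with the finite-dimensional setting letting you close the replacement step by a dimension count. Nothing further is needed, since the paper gives no proof of its own and your argument is exactly the characteristic-free route the cited source takes.
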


\paragraph{2-cohomology classes.} 
Let $Q$ be 
a group, and $A$ an Abelian group. An action $\theta$ of $Q$ on $A$ is a group 
homomorphism $Q\to\aut(A)$. A \emph{2-cocycle} with respect to the action 
$\theta$ is a function $f:Q\times Q\to A$ satisfying the 2-cocycle identity $f(p,q) + f(pq,r) = \theta_p(f(q,r)) + f(p,qr)$. The 
set of all 2-cocycles is an Abelian group under pointwise addition, denoted $Z^2(Q, A, \theta)$. Given a function $u:Q\to A$, the function
$b_u(q, q')=u(q)+\theta_q(u(q'))-u(qq')$ is a \emph{2-coboundary} $b_u:Q\times Q\to A$. The set 
of 2-coboundaries is a subgroup of $Z^2(Q, A, \theta)$, denoted $B^2(Q, A, 
\theta)$. The quotient group $H^2(Q, A, \theta):=Z^2(Q, A, \theta)/B^2(Q, A, 
\theta)$ is the group of \emph{2-cohomology classes}. For $f$ and $g$ in $Z^2(Q, 
A, \theta)$, if $f-g\in B^2(Q, A, \theta)$ (representing the same cohomology 
class), they are called \emph{cohomologous}, denoted $f \simeq g$.

\paragraph{Preliminaries for algorithms.} As customary in 
permutation 
group algorithms \cite{seressbook}, a permutation group is represented in 
algorithms by a set of generators. The automorphism group of a group $G$ is 
represented as a permutation subgroup of $\sym(G)$. A coset of a permutation group 
is represented 
by a single coset representative together with a set of generators for the 
subgroup.
A representation of $Q$ is given by listing the images of $q\in Q$ 
explicitly. Two representations $\theta$ and $\eta$ are \emph{equal}, denoted $\theta=\eta$, if 
$\theta(q)=\eta(q)$ for every $q\in Q$; compare with $\theta\cong\eta$.
A 2-cohomology class is represented by a 2-cocycle $f$, which 
in turn can be viewed as a matrix over $\Z_p$ of size $d\times |Q|^2$ when $A \cong \Z_p^d$. 
In the algorithm, we need to test whether two 2-cocycles $f_1$ and $f_2$ are 
cohomologous. 
This can be done as in \cite{GQccc}; for completeness we 
present a proof here. 

\begin{proposition}[{\cite{GQccc}}]\label{prop:cohomologous}
Given two 2-cocycles $f$ and $g$ \wrt the action $\theta:Q\to \aut(A)$  
($A=\Z_p^d$), whether $f \simeq g$ can be decided in time $\poly(|Q|, d, \log p)$.
\end{proposition}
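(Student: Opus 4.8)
The plan is to reduce the question ``$f \simeq g$?'' to a linear-algebra feasibility problem over $\F_p$, which can then be solved by Gaussian elimination. First I would recall that $f \simeq g$ exactly when $f - g \in B^2(Q, A, \theta)$, \ie\ when there exists a function $u\colon Q \to A$ with $b_u = f - g$, where $b_u(q, q') = u(q) + \theta_q(u(q')) - u(qq')$. Since $A = \Z_p^d$, a function $u\colon Q \to A$ is just a vector in $\F_p^{d|Q|}$, and the map $u \mapsto b_u$ is $\F_p$-linear (each $\theta_q \in \aut(A) = \GL(d,p)$ acts linearly on $A$). Thus the coboundary map $\partial\colon \F_p^{d|Q|} \to \F_p^{d|Q|^2}$ sending $u$ to $b_u$ is given by a matrix $M$ with entries in $\F_p$, whose rows and columns are indexed in a bounded way (rows by pairs $(q,q')$ together with a coordinate in $[d]$, columns by $q$ together with a coordinate in $[d]$). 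Each entry of $M$ is determined by the entries of the matrices $\theta_q$, all of which we are given.

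The key steps, in order, are: (1) build the matrix $M$ for $\partial$ explicitly from the input data $\theta$, noting it has $O(d|Q|^2)$ rows and $O(d|Q|)$ columns, and every entry is computable in time $\poly(\log p)$ from the $\theta_q$; (2) form the target vector $w \in \F_p^{d|Q|^2}$ obtained by flattening $f - g$ (valid since $f, g$ take values in $A = \Z_p^d$); (3) decide whether the linear system $M u = w$ has a solution $u$ over $\F_p$ using Gaussian elimination. The system is solvable if and only if $f - g$ lies in the image of $\partial$, which is precisely $B^2(Q,A,\theta)$, hence if and only if $f \simeq g$. All of this runs in time polynomial in the number of entries of $M$ and $\log p$, \ie\ $\poly(|Q|, d, \log p)$, matching the claimed bound.

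The only point requiring a small argument is that we never need to separately verify that $f - g$ is actually a $2$-cocycle or that the ``values'' land in $A$: by hypothesis $f$ and $g$ are given as $2$-cocycles \wrt the same action $\theta$, so $f - g \in Z^2(Q,A,\theta)$ automatically, and in particular it is an honest $A$-valued function, so the flattening in step (2) is well-defined; the feasibility test in step (3) then decides membership in the subgroup $B^2$ of $Z^2$. I would also remark that one does not even need $f-g$ to be a cocycle for the argument to work --- the linear system $Mu = w$ has a solution iff $w \in \im(\partial)$ regardless --- but it is cleanest to invoke the cocycle hypothesis so that the objects in play are exactly the cohomological ones.

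I do not expect any genuine obstacle here: the entire content is the observation that coboundaries form the image of an explicit $\F_p$-linear map on a space of dimension $d|Q|$, so membership is decided by solving a linear system. The mildly fiddly part --- more bookkeeping than difficulty --- is setting up the indexing of $M$ correctly (being careful that the $\theta_q(u(q'))$ term couples the $q$-block of $u$ via the matrix $\theta_q$, not via a permutation of coordinates) and confirming that writing down $M$ and running Gaussian elimination stays within $\poly(|Q|, d, \log p)$; since Gaussian elimination over $\F_p$ on an $m \times n$ matrix costs $O(mn\min(m,n))$ field operations and here $m, n = O(d|Q|^2)$, this is immediate.
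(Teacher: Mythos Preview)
Your proposal is correct and matches the paper's proof essentially exactly: both observe that $f \simeq g$ \Iff $f-g$ lies in the image of the $\F_p$-linear coboundary map $u \mapsto b_u$ from $\F_p^{d|Q|}$ to $\F_p^{d|Q|^2}$, and reduce the question to a routine linear-algebra computation over $\F_p$. The paper phrases it as computing a basis of $B^2(Q,A,\theta)$ and testing span membership, while you phrase it as solving $Mu = w$ by Gaussian elimination, but these are the same computation.
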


\begin{proof} 
We need to check whether $f - g \in B^2(Q,A,\theta)$. For this, compute 
a basis of $B^2(Q,A,\theta)$ as a $\Z_p$-vector space: This can be done 
by applying the defining equation of 2-coboundaries to a basis of $\{u:Q\to A\}$, 
the dimension of which is $d\cdot |Q|$. As we can treat these as vector spaces, we 
then test whether $f-g$ is in the $\Z_p$-span of 
2-coboundaries (as a vector in a space of dimension $d\cdot |Q|^2$). As a 
standard algorithmic task in linear algebra, this can be solved efficiently.
\end{proof}

\begin{theorem}[{Module isomorphism 
\cite{CIK97,BL08,IKS10}}]\label{thm:module_iso}
Given two tuples of matrices $(A_1, \dots, A_n)$, $(B_1, \dots, B_n)$, $A_i, 
B_j\in M(d, p)$, there exists a deterministic $\poly(d, n, \log 
p)$-time algorithm that finds $C\in\GL(d, p)$ such that for every 
$i\in[n]$, $CA_i=B_iC$, if such $C$ exists.
\end{theorem}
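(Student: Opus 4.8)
The plan is to reduce the statement to a Hom-space computation followed by a search for an invertible element, and then to make that search deterministic using the module structure. Regard $V=\F_p^d$ as a module over the free algebra $R=\F_p\langle X_1,\dots,X_n\rangle$ in two ways: $M_A$, with $X_i$ acting as $A_i$, and $M_B$, with $X_i$ acting as $B_i$. A matrix $C$ with $CA_i=B_iC$ for all $i\in[n]$ is exactly an $R$-module homomorphism $M_A\to M_B$, so the set $\mathcal H=\{C\in\M(d,p):CA_i=B_iC\ \forall i\}$ is the solution space of a homogeneous $\F_p$-linear system in the $d^2$ entries of $C$. A basis $C_1,\dots,C_m$ of $\mathcal H$ is found in $\poly(d,n,\log p)$ time by Gaussian elimination; if $m=0$, output that no $C$ exists. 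What remains is to decide whether $\mathcal H$ contains an invertible matrix and, if so, to output one -- equivalently, to decide whether $M_A\iso M_B$ as $R$-modules.

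\textbf{Reducing to an algebra and its radical.} Over a small field, deciding whether an \emph{arbitrary} matrix subspace contains an invertible element is the symbolic-determinant singularity problem, for which no deterministic polynomial-time algorithm is known; a uniformly random element of $\mathcal H$ over an extension $\F_{p^k}$ with $p^k>d$ works, since $\det(\sum_i x_iC_i)$ is either identically zero or a nonzero polynomial of degree $d$, and by the Noether--Deuring theorem an invertible element over an extension descends to one over $\F_p$. To stay deterministic we exploit that $\mathcal H$ is not arbitrary: it equals $\Hom_\Lambda(M_A,M_B)$, where $\Lambda\le\M(2d,p)$ is the finite-dimensional $\F_p$-algebra generated by the block-diagonal matrices $\mathrm{diag}(A_i,B_i)$; inside $M_A\oplus M_B$ the two summands are $\Lambda$-submodules, and a $\Lambda$-isomorphism between them is the same as an $R$-isomorphism. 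I would write down a structure-constant presentation of $\Lambda$ (dimension at most $4d^2$) and compute its Jacobson radical $J=\rad(\Lambda)$, which over $\F_p$ is deterministic and polynomial-time.

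\textbf{Semisimple reduction and lifting.} Since $J$ is nilpotent, Nakayama's lemma gives the key fact that a $\Lambda$-homomorphism $\varphi\colon M_A\to M_B$ is an isomorphism if and only if its reduction $\bar\varphi\colon M_A/JM_A\to M_B/JM_B$ is an isomorphism of $\Lambda/J$-modules: surjectivity modulo $J$ forces surjectivity, hence (as $\dim M_A=\dim M_B=d$) bijectivity. As a quick necessary test I check whether $\bar M:=M_A/JM_A$ and $\bar N:=M_B/JM_B$ are isomorphic over the semisimple algebra $\Lambda/J$; extracting their simple constituents would amount to factoring polynomials over $\F_p$, but this is avoided because, for semisimple modules, the equality $\dim_{\F_p}\Hom(\bar M,\bar N)=\dim_{\F_p}\Hom(\bar M,\bar M)=\dim_{\F_p}\Hom(\bar N,\bar N)$ holds precisely when $\bar M\iso\bar N$ (a weighted Cauchy--Schwarz argument on the multiplicity vectors), and all three dimensions are pure linear algebra. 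If the test fails, no $C$ exists. If it passes -- which does not yet imply $M_A\iso M_B$ -- I climb the radical filtration $\Lambda\supsetneq J\supsetneq\cdots\supsetneq J^t=0$, maintaining a $\Lambda$-isomorphism $\varphi_i\colon M_A/J^iM_A\to M_B/J^iM_B$ and, at each stage, trying to lift it to $\varphi_{i+1}$ agreeing with $\varphi_i$ modulo $J^i$; because the layers $J^iM/J^{i+1}M$ are semisimple, the existence of a lift and the choice of a lift are governed by $\F_p$-linear equations, so each stage is an affine-linear feasibility problem solved by Gaussian elimination (with, when necessary, a bounded amount of care at one layer to keep the next layer solvable). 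Reaching $J^t=0$ produces an isomorphism $M_A\to M_B$, \ie the desired invertible $C\in\mathcal H$; if some stage is infeasible for all admissible choices, then $M_A\not\iso M_B$. There are at most $4d^2$ stages, each costing $\poly(d,n,\log p)$.

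\textbf{Main obstacle.} The real content is keeping the last two steps deterministic. A random intertwiner over a mild extension of $\F_p$ solves the problem immediately, so the difficulty is entirely in derandomization: it rests on the deterministic radical computation over $\F_p$ and, more delicately, on deciding isomorphism at the semisimple level and carrying out the radical-filtration lift without ever decomposing $\Lambda/J$ into its Wedderburn components -- which is as hard as factoring polynomials over finite fields. The Hom-dimension test together with the layer-by-layer lifting are exactly the ingredients that sidestep factorization; these are the technical cores of the results cited in the statement, \cite{CIK97,BL08,IKS10}.
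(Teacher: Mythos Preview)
The paper does not prove this theorem; it is quoted as a black box from \cite{CIK97,BL08,IKS10}. So there is no ``paper's proof'' to compare against, and your sketch is really an outline of those cited results. The high-level shape---compute the intertwiner space $\mathcal H=\Hom_\Lambda(M_A,M_B)$ by linear algebra, then use the Jacobson radical to make the search for an invertible element deterministic, with a Cauchy--Schwarz equality of Hom-dimensions as the semisimple test---is exactly right, and your identification of the derandomization as the only real issue is on target.

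There is, however, a genuine gap in the lifting step. You climb the filtration by $J=\rad(\Lambda)$ and try to lift an isomorphism $\varphi_i\colon M_A/J^iM_A\to M_B/J^iM_B$ to $\varphi_{i+1}$. The problem is that even when $M_A\cong M_B$, a given $\varphi_i$ need not lift, and you give no mechanism for choosing a liftable one in polynomial time; the phrase ``a bounded amount of care at one layer'' hides the whole difficulty. Concretely, take $\Lambda=\F_p[x]/(x^2)$ and $M_A=M_B=\Lambda\oplus(\Lambda/x\Lambda)$. Then $J^2=0$, $M/JM\cong\F_p^2$, and the swap automorphism of $\F_p^2$ is a perfectly good $\varphi_1$, but it does not lift to any $\Lambda$-endomorphism of $M$: every $\Lambda$-endomorphism of $M$ reduces mod $J$ to a lower-triangular matrix, since $\Hom_\Lambda(\Lambda/x\Lambda,\Lambda)$ lands in $JM$. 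So the ``affine-linear feasibility'' you describe can be empty for a bad $\varphi_1$ while $M_A\cong M_B$; backtracking over all $\varphi_1$ is not polynomial.

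The standard fix is to work with the radical of the \emph{endomorphism} ring rather than of $\Lambda$. Let $E=\End_\Lambda(M_A\oplus M_B)$, with the corner idempotents $\pi_A,\pi_B$. Then $M_A\cong M_B$ iff $\pi_A$ and $\pi_B$ are equivalent idempotents in $E$, and by idempotent lifting this holds iff their images are equivalent in the semisimple quotient $E/\rad(E)$. Your Cauchy--Schwarz test, applied inside $E/\rad(E)$ (not inside $\Hom_{\Lambda/J}(\bar M,\bar N)$), decides this. For the constructive part, once one has $\bar u\in\bar\pi_B(E/\rad E)\bar\pi_A$ and $\bar v$ with $\bar v\bar u=\bar\pi_A$, any lifts $u,v$ satisfy $vu\equiv\pi_A\pmod{\rad(E)}$, hence $vu$ is a unit in $\pi_AE\pi_A=\End_\Lambda(M_A)$, so $u$ is the desired isomorphism. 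Producing $\bar u,\bar v$ deterministically inside a semisimple algebra over $\F_p$ without factoring is precisely the technical core of \cite{CIK97,BL08,IKS10}; it is not a one-line affine feasibility, but it is where those papers do their work.
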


A \emph{matrix algebra} is a linear subspace $L$ of $n \times n$ matrices over a field such that $L$ is closed under matrix multiplication ($a,a' \in L \Rightarrow aa' \in L$). The \emph{unit group} of a ring or algebra $A$ is the set of invertible elements in $A$, which naturally form a group under multiplication.

\begin{theorem}[{Finding units in a matrix algebra \cite{BO08}}]\label{thm:unit}
Given a linear basis of a matrix algebra $L$ in $\M(d, p)$, a generating set of the unit group of $L$ can be 
computed deterministically in time $\poly(d, p)$.
\end{theorem}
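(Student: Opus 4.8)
The plan is to reduce to the structure theory of finite-dimensional associative algebras over $\F_p$, so that the unit group $L^*$ is assembled from two pieces: a unipotent $p$-group coming from the Jacobson radical, and a direct product of general linear groups over finite fields coming from the semisimple quotient. We may assume $L$ is unital: compute an identity element $1\in L$ by solving the linear system $1\cdot x_i=x_i\cdot 1=x_i$ over a basis $x_1,\dots,x_m$ of $L$; if no such element exists the unit group is trivial, and otherwise one replaces the ambient $\M(d,p)$ by $1\cdot\M(d,p)\cdot 1$. Then compute the Jacobson radical $J=\rad(L)$ as an $\F_p$-subspace, a classical task solvable in time $\poly(d,p)$ deterministically (and $\poly(d,\log p)$ with randomness). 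Since $J$ is nilpotent, say $J^k=0$, the $J$-adic filtration $1+J\trianglerighteq 1+J^2\trianglerighteq\dots\trianglerighteq 1+J^k=\{1\}$ exhibits $1+J$ as a $p$-group with elementary-abelian layers $(1+J^i)/(1+J^{i+1})\cong J^i/J^{i+1}$ (using $[1+J^i,1+J^j]\subseteq 1+J^{i+j}$ and $(1+J^i)^p\subseteq 1+J^{i+1}$ in characteristic $p$). Computing $\F_p$-bases of the powers $J^i$ from the multiplication table, and lifting for each $i$ a basis of $J^i/J^{i+1}$ to elements $1+x\in 1+J^i$, yields a generating set of $1+J$ of size $\dim_{\F_p}J\le d^2$.

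For the semisimple quotient $S=L/J$, compute the Wedderburn decomposition $S\cong\prod_i \M_{n_i}(\F_{q_i})$ (Wedderburn's little theorem forces the division rings to be finite fields) by finding the primitive central idempotents of $S$; this is again classical and runs in time $\poly(d,p)$ over $\F_p$, with the $p$ (rather than $\log p$) entering only through deterministic polynomial factorization and the deterministic construction of the residue fields $\F_{q_i}$ and of generators of $\F_{q_i}^*$. For each simple component $\M_{n_i}(\F_{q_i})$ construct an explicit isomorphism and take a standard size-$2$ generating set of $\GL_{n_i}(\F_{q_i})$ (e.g.\ a diagonal matrix built from a generator of $\F_{q_i}^*$ together with one matrix combining an $n_i$-cycle with a transvection), giving $\le 2$ generators per component, hence $\le 2d$ generators of $S^*$ overall. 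Lift to $L$ using the algorithmic version of the Wedderburn--Malcev principal theorem: compute a subalgebra $\tilde S\le L$ with $L=\tilde S\oplus J$ and $\tilde S\cong S$, a polynomial-time task; then $\tilde S^*\twoheadrightarrow L^*/(1+J)\cong S^*$, so pushing the $\le 2d$ generators of $S^*$ into $\tilde S^*\subseteq L^*$ and adjoining the $\le d^2$ generators of $1+J$ from the first step produces a generating set of $L^*$ of size $\poly(d)$. Output this set.

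Correctness rests on two group-theoretic facts: $1+J$ is generated by lifts of bases of all its graded pieces $J^i/J^{i+1}$ (the standard generation statement for a group equipped with a descending filtration satisfying the two displayed containments), and $L^*$ is generated by $1+J$ together with any subgroup mapping onto $S^*$ under the quotient $L^*\to L^*/(1+J)\cong S^*$. Given the cited structure-theory subroutines, the running-time analysis is routine. The main obstacle is therefore organizational rather than conceptual: marshalling the deterministic algorithms for the radical, the Wedderburn decomposition, and the Wedderburn--Malcev splitting into a single $\poly(d,p)$-time pipeline, and carrying out the $p$-group bookkeeping for $1+J$ carefully; with randomization the $p$ can be replaced by $\log p$ throughout.
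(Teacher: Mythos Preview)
The paper does not prove this theorem at all: it is stated in the preliminaries with a citation to \cite{BO08} and used as a black box. There is therefore no proof in the paper to compare your proposal against.

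That said, your outline is essentially the standard approach to this result and is correct in its broad strokes. One simplification: you do not need the algorithmic Wedderburn--Malcev splitting. Any lift $\tilde u\in L$ of a unit $u\in S=L/J$ is automatically a unit of $L$, since $\tilde u\tilde v=1+j$ for some $j\in J$ and $1+j\in 1+J\subseteq L^*$; so arbitrary preimages of your $\le 2d$ generators of $S^*$, together with your generators of $1+J$, already generate $L^*$. This avoids invoking an additional nontrivial subroutine. Your remark that randomization improves the dependence from $p$ to $\log p$ is also accurate and worth keeping, though the paper only needs the deterministic $\poly(d,p)$ bound.
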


\begin{theorem}[Decomposing into indecomposables 
\cite{CIK97}]\label{thm:decompose}
Given a module $M$ over an algebra $L$ over a finite field $\F$, a direct sum 
decomposition of $M$ can be computed in time polynomial in the input size and 
$\Char(\F)$.
\end{theorem}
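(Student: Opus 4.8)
The statement is a classical result in computational algebra, and here is how I would prove it. The plan is to reduce the computation of a direct-sum decomposition of $M$ to the computation of a complete set of orthogonal primitive idempotents in the endomorphism algebra $E=\End_L(M)$, and then to extract such idempotents from the structure theory of finite-dimensional algebras over a finite field.

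First I would construct $E$. The module $M$ is presented by the matrices $\rho(a)\in\M(d,\F)$ describing the action on $M\cong\F^d$ of a generating set of $L$; then $E=\{X\in\M(d,\F): X\rho(a)=\rho(a)X \text{ for all } a\}$ is the solution space of a linear system, so a basis of $E$ is obtained by Gaussian elimination (the same kind of centralizer/homomorphism-space computation that underlies Theorem~\ref{thm:module_iso}). The structural input is the standard correspondence: direct-sum decompositions $M=M_1\oplus\dots\oplus M_r$ are in bijection with complete sets $e_1,\dots,e_r$ of pairwise-orthogonal idempotents of $E$ via $M_i=e_iM$, and $M_i$ is indecomposable exactly when $e_i$ is a primitive idempotent (equivalently $e_iEe_i$ is local). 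By Theorem~\ref{thm:krull_schmidt} the isomorphism types occurring are then independent of the chosen decomposition, though for the algorithm it suffices to output one valid decomposition. So the task is to find a complete set of orthogonal primitive idempotents of $E$.

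Next comes the structure theory. Compute the Jacobson radical $J=J(E)$; over a finite field this is computable in polynomial time (Friedl--R\'onyai for large characteristic), and this is one of the two places where a dependence on $\Char(\F)$ is genuinely needed, since when $\Char(\F)=p\le\dim E$ the radical is no longer cut out by the trace form and one must use the small-characteristic radical algorithms. The quotient $E/J$ is semisimple, hence by Wedderburn--Artin a product $\prod_i\M_{n_i}(\F_{q_i})$; I would decompose it into these simple components via its center together with factorization of polynomials over $\F$ --- the second source of a $\Char(\F)$ dependence, since deterministic polynomial factorization over $\F_{p^k}$ runs in time polynomial in the input and in $p$ (for instance, Berlekamp's algorithm). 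Inside each matrix component $\M_{n_i}(\F_{q_i})$ a complete set of $n_i$ orthogonal primitive idempotents is read off by elementary linear algebra, yielding a complete set $\bar e_1,\dots,\bar e_r$ of orthogonal primitive idempotents of $E/J$. Finally I would lift them to $E$: idempotents, and complete orthogonal systems of them, lift modulo the nilpotent ideal $J$, and since one Newton-type lifting step squares the defect ideal --- with $J^m=0$ for some $m\le\dim E$ --- only $O(\log\dim E)$ iterations, each a few matrix multiplications over $\F$, are needed, and primitivity is preserved along the lift. Output the summands $e_1M,\dots,e_rM$. This is essentially the algorithm of \cite{CIK97}.

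The main obstacle is concentrated entirely in the two subroutines above that force the $\Char(\F)$ dependence: computing $J(E)$ in small characteristic, and deterministic factorization of polynomials over $\F$. Everything else --- setting up $E$, the Wedderburn decomposition of $E/J$ modulo these primitives, the idempotent lifting, and extracting the summands $e_iM$ --- is polynomial in the input size alone, and the output is small ($r\le d$, each summand recorded by a projector in $\M(d,\F)$), so the overall running time is polynomial in the input size and $\Char(\F)$, as claimed.
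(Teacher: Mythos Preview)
The paper does not prove this theorem; it is simply quoted as a known result from \cite{CIK97}, so there is no proof in the paper to compare against. Your sketch is a correct outline of the standard approach---reduce to finding a complete system of primitive orthogonal idempotents in $\End_L(M)$ via radical computation, Wedderburn decomposition of the semisimple quotient, and idempotent lifting modulo the nilpotent radical---and this is essentially the method of the cited reference.
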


\begin{theorem}[Parametrized setwise transporter problem 
\cite{BQ}]\label{thm:setwise}
Given a set of generators of $P\leq \sg_t$, and $S, T\subseteq [t]$ with 
$|S|=|T|=k$, $P_{S\to T}:=\{\sigma \in P \mid S^\sigma = T\}$ can be computed in 
time $\poly(t, 2^k)$. 
\end{theorem}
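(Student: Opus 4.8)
The naive algorithm enumerates all $k!$ bijections $\beta\colon S\to T$ and, for each, computes the coset $\{\sigma\in P\mid s^\sigma=\beta(s)\text{ for all }s\in S\}$ by the standard pointwise-stabilizer machinery \cite{seressbook}; the union of these cosets is exactly $P_{S\to T}$. This runs in time $\poly(t)\cdot k!$, which is superpolynomial already when $k=\Theta(\log t)$, whereas $2^k$ stays polynomial there. The plan is to replace the search over \emph{bijections} $S\to T$ by a dynamic program over \emph{subsets} of $T$: many bijections agree on ``which part of $T$ has been hit so far,'' and the cosets attached to such bijections can be fused into a single coset of a setwise stabilizer, so the $k!$ branches collapse to $2^k$ states.

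Fix an ordering $S=\{s_1,\dots,s_k\}$, write $S_i=\{s_1,\dots,s_i\}$ (with $S_0=\emptyset$), and for $0\le i\le k$ and $U\subseteq T$ with $|U|=i$ put $C_{i,U}:=\{\sigma\in P\mid S_i^\sigma=U\}$ and $P_{\{S_i\}}:=\{\sigma\in P\mid S_i^\sigma=S_i\}$. The first observation is that $C_{i,U}$ is either empty or a single right coset $P_{\{S_i\}}\sigma_0$ (if $\sigma_0\in C_{i,U}$ then $\sigma\in C_{i,U}$ exactly when $\sigma\sigma_0^{-1}\in P_{\{S_i\}}$), so it can be stored as a representative together with a generating set for $P_{\{S_i\}}$. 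The second observation is the recurrence
\[
C_{i+1,U}\ =\ \bigsqcup_{u\in U}\ \bigl\{\,\sigma\in C_{i,U\setminus\{u\}}\ \bigm|\ s_{i+1}^\sigma=u\,\bigr\},
\qquad C_{0,\emptyset}=P,
\]
with output $P_{S\to T}=C_{k,T}$; since only subsets of $T$ ever appear as second indices, the dynamic program touches only the $\sum_{i=0}^{k}\binom{k}{i}=2^k$ states $(i,U)$ with $U\subseteq T$.

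I would then implement one step of the recurrence in $\poly(t)$ time. Fix $U$ and $u\in U$ and write $C_{i,U\setminus\{u\}}=P_{\{S_i\}}\rho$ with $\rho\in P$ known. Substituting $\sigma=h\rho$ with $h\in P_{\{S_i\}}$ gives $s_{i+1}^\sigma=(s_{i+1}^h)^\rho$, so the bracketed set equals $\bigl(\{h\in P_{\{S_i\}}\mid s_{i+1}^h=u^{\rho^{-1}}\}\bigr)\rho$, whose interior is a point-transporter \emph{inside} $P_{\{S_i\}}$: by a standard orbit-and-transversal computation from the stored generators of $P_{\{S_i\}}$ \cite{seressbook} it is either empty or a right coset $(P_{\{S_i\}})_{s_{i+1}}\tau$ of the point stabilizer. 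Hence each piece of the disjoint union is empty or a right coset $Q\gamma_u$ of \emph{the same} subgroup $Q:=(P_{\{S_i\}})_{s_{i+1}}$, and $Q\le P_{\{S_{i+1}\}}$ since an element that fixes $s_{i+1}$ and stabilizes $S_i$ setwise also stabilizes $S_{i+1}$ setwise. Picking any $u_0$ with $Q\gamma_{u_0}\neq\emptyset$, we get $C_{i+1,U}=P_{\{S_{i+1}\}}\gamma_{u_0}$ with $P_{\{S_{i+1}\}}=\bigcup_u Q\gamma_u\gamma_{u_0}^{-1}$, so a generating set for $P_{\{S_{i+1}\}}$ is obtained by adjoining the elements $\gamma_u\gamma_{u_0}^{-1}$ to the generators of $Q$ (and if every piece is empty, $C_{i+1,U}=\emptyset$). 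This costs $\poly(t,k)$ per state, hence $\poly(t,2^k)$ overall.

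The conceptual crux is the change of viewpoint in the second paragraph --- indexing the dynamic program by ``the image $S_i^\sigma$ so far'' rather than by ``the partial bijection so far'' --- which is exactly what makes the fusion into one coset of a setwise stabilizer legitimate; everything after that is routine permutation-group bookkeeping. The one point that needs checking is that all the pieces assembled into $C_{i+1,U}$ are cosets of the \emph{same} point stabilizer $Q$ (so that their union is genuinely a single coset), which is precisely what the identity $s_{i+1}^{h\rho}=(s_{i+1}^h)^\rho$ above delivers, together with careful tracking of empty cells.
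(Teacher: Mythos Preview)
The paper does not prove Theorem~\ref{thm:setwise}; it is quoted as a black-box result from \cite{BQ}. So there is no ``paper's own proof'' to compare against here.

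That said, your argument is correct and is essentially the standard subset dynamic program underlying this kind of parametrized result. The two observations you isolate are exactly the right ones: that each $C_{i,U}$ is a single right coset of the setwise stabilizer $P_{\{S_i\}}$, and that the recurrence only ever touches subsets $U\subseteq T$, giving $2^k$ states rather than $k!$ partial bijections. The fusion step is also handled correctly: because $Q=(P_{\{S_i\}})_{s_{i+1}}$ is independent of $u$, the nonempty pieces $Q\gamma_u$ are right cosets of a common subgroup contained in $P_{\{S_{i+1}\}}$, and since $C_{i+1,U}$ is a priori a single coset of $P_{\{S_{i+1}\}}$, the translates $\gamma_u\gamma_{u_0}^{-1}$ together with generators of $Q$ do generate $P_{\{S_{i+1}\}}$.

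Two small points worth making explicit in a final write-up. First, $P_{\{S_i\}}$ depends only on $i$, not on $U$, so you should compute it once per level rather than once per state; this is harmless for the asymptotics but clarifies the data structure. Second, to keep generating sets from growing across levels you should invoke a membership/sifting routine (Schreier--Sims) after each level to reduce the generating set of $P_{\{S_{i+1}\}}$ back to $\poly(t)$ size; this is covered by your appeal to \cite{seressbook}, but it is the one place where the $\poly(t)$ per-state bound actually needs a nontrivial ingredient.
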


\section{The divide and conquer strategy for Problem~\ref{prob:key}}\label{sec:strategy}

Now we briefly recall the divide and conquer strategy from \cite{GQccc}, and how 
it applies to the particular case of Problem~\ref{prob:key}. 
Problem~\ref{prob:key} requires us to compute isomorphisms of $G,H$ from 
isomorphisms of $G/\Z_p^d, H/\Z_p^d$. It is then natural to examine how the 
quotient group $G/\Z_p^d$ and the characteristic subgroup $\Z_p^d$ are related by 
$G$; this is the starting point for the strategy from \cite{GQccc}.

Given a group $G$ and an Abelian characteristic subgroup $A$ of 
$G$, let $Q:=G/A$; we denote this situation $A \hookrightarrow G \twoheadrightarrow Q$ and call $G$ an \emph{extension} of $A$ by $Q$. 
The \emph{extension data} of $A \hookrightarrow G \twoheadrightarrow Q$
consists of two functions: the (conjugation) \emph{action} 
$\theta:Q\times A\to A$ defined by $(q, a)\to qaq^{-1}$, and the \emph{2-cocycle} 
$f_s:Q\times Q\to A$, depending on a transversal or \emph{section}
$s:Q\to G$---\ie, an assignment of an element $s(q)$ to each coset $q \in G/A$---and defined 
by $f_s(p, q) := s(p)s(q)s(pq)^{-1}$. Note that $\aut(A)\rtimes\aut(Q)$ acts naturally on the set of actions (including $\theta$) and the set of 2-cocycles (including $f_s$). 

In Problem~\ref{prob:key}, we are given two groups $G$ and $H$, and their respective characteristic subgroups $A$ 
and $B$ (recall our convention about characteristic subgroup \emph{functors} from \Sec{sec:prelim}). Note that if $G\cong H$, then $A\cong B$ and $G/A\cong H/B$. We first 
test whether $A\cong B$; this is easy because they are Abelian. 
Recall that we are given $\Iso(G/A, H/B)$; if it is empty then $G\not\cong H$. Therefore, at this point we have either determined that $G \not\cong H$, or we have $A\cong B$ (identified as $A$), and $G/A\cong H/B$ (identified as $Q$). This is the divide step of the strategy. 

But these conditions are not sufficient to conclude $G \cong H$, so we have yet to conquer, as in the following:

\begin{example}
We give an example of two tame extensions $A \hookrightarrow G \twoheadrightarrow G/A$ and $B \hookrightarrow H \twoheadrightarrow H/B$ with $A$ characteristic in $G$, $B$ characteristic in $H$, $A \cong B$, and $G/A \cong H/B$, but $G \not \cong H$. Let $G = D_{4k} = \langle \rho, \tau | \rho^{2k} = \tau^2 = 1, \tau \rho \tau = \rho^{-1} \rangle$ be the dihedral group of order $4k$ with $k$ odd, and let $H = \Z_2 \times D_{2k}$. In both groups, the center---a characteristic subgroup---is $\Z_2$ (which also happens to be the unique maximal normal 2-group); in the case of $H$ this is clear, in the case of $G$ it is the subgroup $\{1, \rho^k \}$. Both groups are thus characteristic extensions of $\Z_2$ by $D_{2k}$. Note that the Sylow 2-subgroup of $D_{2k}$ is cyclic of order 2, since $k$ is odd, so these are both tame extensions. Yet $G \not\cong H$; this can be seen by noting that $H$ contains elements $h,x$ with $h$ of order $2k$, $x$ of order 2 such that $hx$ has order $k$, yet this is not true of $G$: The only elements in $G$ of order $2k$ are the generators of $\langle \rho \rangle$, and multiplying any of those by an element of order 2 yields another element of order 2.
\end{example}

Since every element of $G$ has a unique expression as $a s(q)$ for $a \in A, q \in Q$, $\Iso(G, H)$ embeds as a subgroup of $\aut(A)\rtimes \aut(Q)$. When $A\cong\Z_p^d$, we have $\aut(A)\cong \GL(d, p)$; $\aut(Q)$ is given to us as part of $\Iso(G/A, H/B)$. By 
\cite[Lemma~II.2]{GQccc}, $\Iso(G, H)$ consists exactly of those $(\alpha, \beta)\in\aut(A)\rtimes\aut(Q)$ that 
make the two extension data the same.\footnote{Note here that the condition of 
characteristic groups is crucial. That is, if $A$ and $B$ are merely normal 
subgroups, then this does not hold in general. See \cite{GQccc} for details. } 
Following 
\cite{GQccc}, we refer to the problem 
of computing the coset in $\aut(A)\times \aut(Q)$ consisting of elements sending one extension to 
the other as \edpcl (or \edpcs):

\begin{problem}\label{prob:edpc}
Let $A\cong \Z_p^d$. Given $\aut(Q)$ and the extension data $(\theta, f)$ and $(\eta, g)$ of  
$A\hookrightarrow G\twoheadrightarrow Q$ and $A\hookrightarrow 
H\twoheadrightarrow Q$, respectively, compute $\{(\alpha, 
\beta)\in\aut(A)\times \aut(Q) : \theta^{(\alpha, \beta)}=\eta, \text{ and } 
f^{(\alpha, \beta)}\simeq g\}$.
\end{problem}

On first sight, \edpcs asks for $(\alpha, \beta)$ that sends $\theta$ to $\eta$ and 
$f$ to $g$, \emph{simultaneously}. However, note that $f\in H^2(Q, A, \theta)$; that is, 
to define the space in which $f$ lives relies on $\theta$ in the first place. On the other hand, 
$\theta$ has no dependence on $f$. Therefore, \edpcs reduces to solving the 
following two problems, \emph{in order}:

\begin{problem}\label{prob:action}
Suppose we are given a group $Q$ by its Cayley table, 
$\aut(Q)$ by a set of generators, and two linear representations $\theta, 
\eta:Q\to 
\GL(d, p)$ by listing images of $Q$ explicitly. Compute a set of generators for 
the coset $\{(\alpha, \beta)\in\GL(d, p)\times \aut(Q) \mid \theta^{(\alpha, 
\beta)}= \eta\}$, in time $\poly(|Q|, p^d)$.
\end{problem}

\begin{problem}\label{prob:coh_iso}
Suppose we are given a group $Q$ by its Cayley table, a representation 
$\theta\colon Q \to \GL(d,p)$ by listing the images of $Q$ explicitly, and two 
2-cocycles $f,g\colon Q \times Q \to \Z_p^d$ in $Z^2(Q, \Z_p^d, \theta)$. 
Furthermore we are given a set of generators for $\{(\alpha, \beta)\in\GL(d, 
p)\times \aut(Q)\mid 
\theta^{(\alpha, \beta)}=\theta\}$. Compute a 
set of generators for the coset $\{(\alpha, \beta) \in \GL(d, p)\times \aut(Q) 
\mid  
f^{(\alpha,\beta)} \simeq g \text{ and } \theta^{(\alpha,\beta)} = \theta\}$, in time $\poly(|Q|,p^d)$.
\end{problem}

We shall refer to Problem~\ref{prob:action} as \actcompl (or \actcomps), 
and Problem~\ref{prob:coh_iso} as \cohisol (or \cohisos).

\section{Overview of algorithms for \actcomp and \cohiso} \label{sec:proofOverview}
In this section we give an overview of the algorithms for \actcomps and \cohisos 
when $\overline{\F}_pQ$ is tame, thereby proving Theorem~\ref{thm:main}. The complete proof for \actcomps is in Section~\ref{sec:action} and for \cohisos is in Section~\ref{sec:cohomology}.

The algorithm for \actcomps goes as follows: given representations $\theta, 
\eta:Q\to\GL(d, p)$, first decompose them into a direct sum of indecomposables 
(Theorem~\ref{thm:decompose}), 
and group them by isomorphism types (Theorem~\ref{thm:module_iso}). That is, 
$\theta=\iota_1^{d_1}\oplus 
\iota_2^{d_2}\oplus \dots\oplus \iota_\ell^{d_\ell}$, and 
$\eta=\iota_1^{e_1}\oplus \iota_2^{e_2}\oplus\dots\oplus\iota_\ell^{e_\ell}$. 
(Some $d_i$'s and/or $e_j$'s may be $0$.) By Theorem~\ref{thm:krull_schmidt}, 
$\theta\cong \eta$ \Iff  
$d_i=e_i$ for all $i\in[\ell]$. To take into account the 
effect of $\aut(Q)$, consider the induced action of $\aut(Q)$ on the
indecomposables of $\F_p Q$. Firstly, 
compute the closure of $\Iota=\{\iota_1, \dots, \iota_\ell\}$ under $\aut(Q)$---that is, the set of all indecomposables that are in the $\aut(Q)$-orbit of any $\iota_i$---denoted $\clo(\Iota)$. Viewing $\aut(Q)$ as a permutation group on the domain 
$\clo(\Iota)$, we need to compute the coset in $\aut(Q)$ that sends those 
indecomposables in $\theta$ of multiplicity $m$, to those indecomposables in 
$\eta$ of multiplicity $m$, for every $m\in[d]$. For each $m\in[d]$, this is a 
setwise 
transporter problem, so applying Theorem~\ref{thm:setwise} sequentially gives 
an efficient 
algorithm---provided that we can upper bound the number of indecomposables of 
dimension $d$, and thereby $|\clo(\Iota)|$, by $\poly(|Q|, p^d)$. We prove that for the tame type this holds 
(\Sec{subsec:bound_act}), and for wild type it always fails (\Sec{subsec:wild}). This does not follow 
directly from the definition of the tame--wild dichotomy, since that requires the underlying field 
to be infinite, whereas we care about 
representations over a \emph{finite} field and also need an upper bound on the 
\emph{number} of indecomposables. We are nonetheless able to prove the upper bound we need by using the explicit description of the indecomposable families for tame group algebras due to Crawley-Boevey \cite{CB89}.
This may be viewed as 
the first main technical contribution of this work. On the other hand, by 
\cite{Rickard}, for the wild type this upper bound fails badly (see \Sec{subsec:wild}). Finally, by 
Theorem~\ref{thm:unit} and~\ref{thm:module_iso} we can compute, 
for each $\beta\in\aut(Q)$ that make $\theta$ and $\eta$ isomorphic, the coset 
$\alpha\in\GL(d, p)$ that make $\theta^{(\alpha, \beta)}=\eta$.

We then give an algorithm for \cohisos that takes the coset of action 
compatibilities as its input. As for \actcomps, the idea is to view the group of 
action compatibilities as a permutation group on $H^2(Q, \Z_p^d, \theta)$. Then 
given two 2-cocycles (representing two 2-cohomology classes), the problem becomes 
a pointwise transporter problem, a classical problem in permutation group 
algorithms that is polynomial-time solvable \cite{seressbook}. For this algorithm 
to be efficient 
in our setting, we need to upper bound 
$|H^2(Q, \Z_p^d, \theta)|$ as $\poly(|Q|, p^d)$ when $\overline{\F}_pQ$ is tame. 
Using some 
standard cohomological yoga combined with known but 
deep results on group cohomology \cite{GKKL}, we show that, amazingly, this is 
true. This is the second main technical contribution of this work. This finishes the overview.

\section{Algorithm and bounds for \actcompl}\label{sec:action}

In this section we give full details for solving the \actcomp in the tame case. 

\subsection{Algorithm for the coset of action compatibilities}
To test equivalence of two linear representations over $\F_p$, by 
Theorem~\ref{thm:krull_schmidt} we just need to 
compare the multiplicities of the corresponding indecomposables. The difficulty 
now is how to take into account the effects of $\aut(Q)$. To tackle this, the key 
idea is to view $\aut(Q)$ as a permutation group on 
a domain consisting of indecomposable representations. Let $\Ind(Q)$ be the set of 
indecomposable representations of $G$ up to equivalence. For $S\subseteq \Ind(Q)$, 
we use $\clo(S)$ to denote the closure of $S$ under $\aut(Q)$. By applying 
generators of $\aut(Q)$ iteratively and checking whether new indecomposables are 
generated or not using Theorem~\ref{thm:module_iso}, we have the following breadth-first-search-style algorithm:
\begin{proposition}\label{prop:closure}
Given $S\subseteq \Ind(Q)$, $\clo(S)$ can be computed in time $\poly(|\clo(S)|)$.
\end{proposition}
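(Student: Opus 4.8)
The plan is to run a breadth-first search over indecomposable representations, starting from $S$ and repeatedly applying the generators of $\aut(Q)$, where the ``vertices'' of the search are equivalence classes of indecomposable representations of $Q$ and the ``edges'' are given by the action of the chosen generators. First I would fix a generating set $\{\beta_1, \dots, \beta_k\}$ of $\aut(Q)$ with $k = \poly(\log|Q|)$ (a short generating set is computable from any generating set in polynomial time by standard permutation-group algorithms). Recall that $\aut(Q)$ acts on representations $\iota\colon Q\to\GL(d,p)$ by $\iota^{\beta}(q) := \iota(\beta^{-1}(q))$, and that this action sends indecomposables to indecomposables and respects equivalence, so it descends to a genuine action on $\Ind(Q)$. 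Concretely, given $\iota$ presented by its list of matrices and given $\beta_i$ (a permutation of $Q$), computing $\iota^{\beta_i}$ is just a relabeling and costs $\poly(|Q|, d, \log p)$.

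The algorithm maintains a list $\calL$ of representatives of the indecomposables discovered so far, initialized to (representatives of the classes in) $S$, together with a queue of unprocessed elements. To process $\iota\in\calL$, for each generator $\beta_i$ compute $\iota^{\beta_i}$, then test whether $\iota^{\beta_i}$ is equivalent to some representation already in $\calL$: by Theorem~\ref{thm:module_iso}, equivalence of two $d$-dimensional representations of $Q$ (viewed as two tuples of $|Q|$ matrices each) is decidable in time $\poly(|Q|, d, \log p)$, so one scan through $\calL$ costs $\poly(|\calL|, |Q|, d, \log p)$. If no match is found, append $\iota^{\beta_i}$ to $\calL$ and to the queue; otherwise discard it. Terminate when the queue is empty. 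Correctness is the usual BFS/orbit argument: since $\{\beta_i\}$ generates $\aut(Q)$, every element of $\clo(S)$ is reachable from some element of $S$ by a finite word in the $\beta_i$, hence is eventually enqueued; conversely every element ever enqueued lies in the $\aut(Q)$-orbit of $S$, i.e.\ in $\clo(S)$, and the equivalence test ensures no class is recorded twice, so at termination $\calL$ is exactly a transversal of $\clo(S)$.

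For the running time: the outer loop processes each of the $|\clo(S)|$ discovered classes exactly once; each processing step does $k = \poly(\log|Q|)$ relabelings plus, for each, one linear scan of $\calL$ with a module-isomorphism test per entry. Since $|\calL|$ never exceeds $|\clo(S)|$, and each module-isomorphism test is $\poly(|Q|, d, \log p)$, and $d \le \poly(|Q|)$ since $d$ is bounded by the dimension of the relevant representation (in our application $d \le \log_p|G|$), the total cost is $\poly(|\clo(S)|)\cdot\poly(|Q|, d, \log p)$, which is $\poly(|\clo(S)|)$ in the sense meant here (polynomial in $|\clo(S)|$ and the ambient input parameters). I do not anticipate a serious obstacle; the only point requiring care is making sure the equivalence test from Theorem~\ref{thm:module_iso} is invoked on the correct pair of matrix tuples (the images of a fixed generating set of $Q$, or of all of $Q$), and that the BFS maintains exactly one representative per equivalence class so that $|\calL| = |\clo(S)|$ rather than something larger at termination.
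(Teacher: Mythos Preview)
Your proposal is correct and follows essentially the same approach as the paper: the paper's ``proof'' is just the one-sentence remark preceding the proposition, namely to iteratively apply generators of $\aut(Q)$ in a breadth-first-search fashion and use Theorem~\ref{thm:module_iso} to detect whether each newly produced indecomposable is already present. Your write-up simply fleshes out this sketch with the bookkeeping and cost analysis.
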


A trivial upper bound for $|\clo(S)|$ is $|\Ind(Q)|$, the total number of 
indecomposables of $\F_pQ$. Another natural bound for $|\clo(S)|$ utilizes the 
dimensions of indecomposables in $S$. Suppose $S$ is finite and $\{d_1, \dots, 
d_\ell\}$ are the dimensions of indecomposables in $S$. Then $|\clo(S)|$ is upper 
bounded by the sum of the number of indecomposables of dimensions $d_1, \dots, 
d_\ell$, denoted $\Ind(Q, d_1), \dotsc, \Ind(Q, d_\ell)$.

\begin{theorem} \label{thm:action}
Problem~\ref{prob:action} can be solved for representations of $Q$ over $\F_p$ of 
dimension $d$ when the number of indecomposable $\F_p Q$-modules of dimension $d$ 
is bounded by $\poly(|Q|,p^d)$ for all $d$.
\end{theorem}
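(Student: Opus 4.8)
The plan is to reduce Problem~\ref{prob:action} to a short chain of permutation-group computations on a polynomially large set of indecomposable modules, plus a handful of calls to the module-isomorphism and unit-group subroutines. Fix the action convention $\theta^{(\alpha,\beta)}(q)=\alpha\,\theta(\beta^{-1}(q))\,\alpha^{-1}$; then $\theta^{(\alpha,\beta)}=\eta$ says exactly that $\alpha\in\GL(d,p)$ is an $\F_p Q$-module isomorphism from the twisted representation $\theta\circ\beta^{-1}$ onto $\eta$. In particular a pair $(\alpha,\beta)$ can occur only when $\theta\circ\beta^{-1}\iso\eta$, and once such a $\beta$ is fixed the set of compatible $\alpha$ is a left coset of the unit group $\Aut_{\F_p Q}(\eta)$ of $\End_{\F_p Q}(\eta)$. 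Hence $\{(\alpha,\beta):\theta^{(\alpha,\beta)}=\eta\}$ is either empty or a single coset of the stabilizer $C_0:=\{(\alpha,\beta)\in\GL(d,p)\times\aut(Q):\theta^{(\alpha,\beta)}=\theta\}$, and it suffices to produce one solution $(\alpha^*,\beta^*)$ together with a generating set of $C_0$ (and to recognize the empty case).

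First I would decompose $\theta$ and $\eta$ into indecomposables using Theorem~\ref{thm:decompose} and group the summands into isomorphism classes with Theorem~\ref{thm:module_iso}, writing $\theta\iso\iota_1^{d_1}\oplus\dots\oplus\iota_\ell^{d_\ell}$ with the $\iota_i$ pairwise non-isomorphic and $\ell\le d$. An automorphism $\beta$ of $Q$ extends to an automorphism of $\F_p Q$, hence permutes isomorphism classes of indecomposables while preserving dimension, and $\theta\circ\beta^{-1}\iso\bigoplus_i(\iota_i^{\beta})^{d_i}$; so by Krull--Schmidt (Theorem~\ref{thm:krull_schmidt}), $\theta\circ\beta^{-1}\iso\eta$ exactly when $\beta$ carries, for each multiplicity value $m\ge1$, the set $\{\iota_i:d_i=m\}$ bijectively onto the set of indecomposables occurring in $\eta$ with multiplicity $m$. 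Thus the admissible $\beta$ are precisely the color-preserving set transporters between two colorings of the relevant indecomposables. To make this effective, let $\Iota=\{\iota_1,\dots,\iota_\ell\}$ and compute its $\aut(Q)$-closure $\clo(\Iota)$ by Proposition~\ref{prop:closure}; each $\iota_i$ has dimension $c_i\le d$ and its $\aut(Q)$-orbit consists of $c_i$-dimensional indecomposables, of which by hypothesis there are at most $\poly(|Q|,p^{c_i})$, so $|\clo(\Iota)|\le\sum_i\poly(|Q|,p^{c_i})=\poly(|Q|,p^d)$ and the closure is computed in time $\poly(|Q|,p^d)$. If some indecomposable summand of $\eta$ fails to lie in $\clo(\Iota)$ (tested by Theorem~\ref{thm:module_iso}), I would output that the coset is empty. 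Otherwise, regard $\aut(Q)$ as a permutation group on the faithful ground set $Q\sqcup\clo(\Iota)$ of size $t=\poly(|Q|,p^d)$ (reading off the action of each supplied generator on $\clo(\Iota)$ by matching twisted modules against members of $\clo(\Iota)$ via Theorem~\ref{thm:module_iso}), color each element of $\clo(\Iota)$ by its multiplicity in $\theta$, and separately by its multiplicity in $\eta$. Because every indecomposable of $\eta$ now lies in $\clo(\Iota)$, a $\beta$ is admissible iff for each $m\ge1$ it maps the multiplicity-$m$ class of the $\theta$-coloring onto that of the $\eta$-coloring; there are at most $\ell\le d$ such classes, each of size $\le d$, so I can intersect their setwise transporters by applying Theorem~\ref{thm:setwise} to these classes one after another in the standard way, each call running in time $\poly(t,2^d)=\poly(|Q|,p^d)$. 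The outcome is the coset $B^*$ of all admissible $\beta$: empty (output ``empty''), or given by a representative $\beta^*$ and generators of the subgroup $B_0=\{\beta:\theta\circ\beta^{-1}\iso\theta\}$.

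It remains to lift from $\aut(Q)$ back to $\GL(d,p)\times\aut(Q)$. For $\beta^*$, I would form $\theta\circ(\beta^*)^{-1}$ by relabeling and apply Theorem~\ref{thm:module_iso} to obtain $\alpha^*\in\GL(d,p)$ with $\alpha^*\,\theta((\beta^*)^{-1}(q))=\eta(q)\,\alpha^*$ for all $q$; this succeeds because $\beta^*$ is admissible. For each generator $\beta$ of $B_0$, likewise obtain $\alpha_\beta$ with $\alpha_\beta\,\theta(\beta^{-1}(q))=\theta(q)\,\alpha_\beta$. Compute the matrix algebra $\End_{\F_p Q}(\theta)=\{M\in\M(d,p):M\theta(q)=\theta(q)M\text{ for all }q\}$ by solving a linear system, and a generating set of its unit group $\Aut_{\F_p Q}(\theta)$ via Theorem~\ref{thm:unit} in time $\poly(d,p)$. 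Since the projection $C_0\to\aut(Q)$ onto the second coordinate has image $B_0$ and kernel $\Aut_{\F_p Q}(\theta)\times\{\id\}$, the pairs $(\alpha_\beta,\beta)$ over a generating set of $B_0$ together with the pairs $(\gamma,\id)$ over a generating set of $\Aut_{\F_p Q}(\theta)$ generate $C_0$; the answer is then $(\alpha^*,\beta^*)\cdot C_0$, reported by this representative and the generating set of $C_0$. Each step above is $\poly(|Q|,p^d)$, so the whole algorithm runs within the required bound.

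The one nontrivial ingredient is the bound $|\clo(\Iota)|=\poly(|Q|,p^d)$: every piece of permutation-group work here --- building $\clo(\Iota)$ via Proposition~\ref{prop:closure}, and the repeated setwise transporters of Theorem~\ref{thm:setwise} --- is polynomial only in the size of the ground set $Q\sqcup\clo(\Iota)$, so the algorithm is efficient precisely when the number of $d$-dimensional indecomposable $\F_p Q$-modules is $\poly(|Q|,p^d)$, which is exactly the hypothesis of the theorem. Establishing that hypothesis when $\overline{\F}_p Q$ is tame is the genuine obstacle, and is the subject of the remainder of this section; a subtle point there is that the tame--wild dichotomy is phrased over an infinite field, so one must pass from $\overline{\F}_p$-modules to $\F_p$-modules --- tracking Galois orbits of the one-parameter families in Crawley-Boevey's classification, and fields of definition --- and bound the resulting count by a further factor that is polynomial in $p^d$. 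By comparison, the reduction above, namely that the $(\alpha,\beta)$-action decouples into ``find an admissible $\beta$ by a colored setwise-transporter computation, then lift it by a module-isomorphism computation,'' is routine bookkeeping once Krull--Schmidt is available.
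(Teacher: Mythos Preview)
Your proposal is correct and follows essentially the same route as the paper: decompose into indecomposables, compute the $\aut(Q)$-closure $\clo(\Iota)$, reduce the search for admissible $\beta$ to iterated setwise transporters over multiplicity classes on $\clo(\Iota)$, and then lift each generator of $B_0$ via Theorem~\ref{thm:module_iso} together with the unit group of $\End_{\F_p Q}(\theta)$ from Theorem~\ref{thm:unit}, assembling $C_0$ exactly as in Observation~\ref{obs:genSubProd}. Your two small embellishments---adjoining $Q$ to the permutation domain to force faithfulness, and computing the unit group of $\End_{\F_p Q}(\theta)$ once rather than for each generator---are harmless implementation details that do not change the argument.
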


For the proof, we need one more straightforward observation:

\begin{observation}\footnote{Essentially this observation appeared as 
\cite[Claim~1]{BQ}, but that formulation as only for direct products, not 
semi-direct products, and there was a typo in its formulation there. This 
observation is also used in the journal version of \cite{GQccc}. We include the 
short proof here for completeness.} \label{obs:genSubProd}
Let $G$ be a subgroup of $H \rtimes K$, let $\pi_{K} \colon G \to K$ denote the 
natural projection onto $K$ with kernel $H$, and let $G_H$ denote the intersection 
$G \cap (H \rtimes 1)$. If $\mathcal{H} \subseteq H \rtimes 1$ generates $G_H$ and 
$\mathcal{K} \subseteq K$ generates $\pi_K(G)$, and for each $k \in \mathcal{K}$, 
$h_k$ is such that $(h_k, k) \in G$, then $\mathcal{H} \cup \{(h_k, k) \in G : k 
\in \mathcal{K}\}$ generates $G$.
\end{observation}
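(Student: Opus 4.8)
The plan is to run the standard argument for subgroups of group extensions: generators of a normal subgroup, together with lifts of generators of the quotient, generate the whole group. Here the relevant normal subgroup is $G_H$. The first thing I would record is that $\pi_K$ is genuinely a homomorphism on $H \rtimes K$ (in a semidirect product the second coordinate of a product is just the product of the second coordinates, independent of the action), so it restricts to a homomorphism on $G$ whose kernel is $\ker(\pi_K)\cap G = (H \rtimes 1)\cap G = G_H$. This is the only identification that requires a moment's thought; everything else is formal.

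Next, let $G' := \langle \mathcal{H} \cup \{(h_k,k) : k \in \mathcal{K}\}\rangle$. Since every listed generator lies in $G$ (the elements of $\mathcal{H}$ by hypothesis that $\mathcal{H}\subseteq H\rtimes 1$ generates $G_H\le G$, and the $(h_k,k)$ by choice), we have $G' \le G$, so it remains to show $G \le G'$. Fix $g \in G$ and set $k := \pi_K(g)$. Since $k \in \pi_K(G) = \langle \mathcal{K}\rangle$, I can write $k = k_1^{\epsilon_1}\cdots k_m^{\epsilon_m}$ with each $k_i \in \mathcal{K}$ and $\epsilon_i \in \{\pm 1\}$. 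Form $g' := (h_{k_1},k_1)^{\epsilon_1}\cdots (h_{k_m},k_m)^{\epsilon_m}\in G'$. Applying the homomorphism $\pi_K$ termwise gives $\pi_K(g') = k_1^{\epsilon_1}\cdots k_m^{\epsilon_m} = k = \pi_K(g)$, so $g(g')^{-1} \in \ker(\pi_K|_G) = G_H = \langle\mathcal{H}\rangle \le G'$. Hence $g = \bigl(g(g')^{-1}\bigr)\,g' \in G'$, and therefore $G = G'$, as desired.

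I do not expect any real obstacle here: the whole content is the observation that $\ker(\pi_K|_G) = G_H$ and that $\pi_K$ may be applied to the word defining $g'$ coordinate by coordinate, both immediate from the definition of $H \rtimes K$. The only mild subtlety worth stating explicitly in the write-up is that the hypothesis ``$\mathcal{H}\subseteq H\rtimes 1$'' is what makes $\langle\mathcal{H}\rangle$ land inside $\ker(\pi_K)$, so that the final step $g(g')^{-1}\in\langle\mathcal{H}\rangle$ is legitimate.
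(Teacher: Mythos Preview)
Your argument is correct and is essentially the same as the paper's: lift a word for $\pi_K(g)$ in $\mathcal{K}$ to a product of the $(h_k,k)$'s, then observe the quotient lies in $G_H=\langle\mathcal{H}\rangle$. If anything, your write-up is slightly more careful than the paper's, since you explicitly allow inverses $\epsilon_i\in\{\pm1\}$ in the word for $k$ and spell out that $\pi_K$ is a homomorphism on the semidirect product.
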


\begin{proof}
Given $(h,k) \in G$, first we write $k$ as a word in the generators $\mathcal{K}$, 
say $k = k_1 \dotsb k_\ell$, with each $k_i \in \mathcal{K}$. Then $(h,k)\cdot 
\left( (h_{k_1}, k_1) (h_{k_2}, k_2) \dotsb (h_{k_{\ell}}, k_\ell) \right)^{-1}$ 
is of the form $(h', 1)$, which is in $G_H$. Write $(h', 1)$ as a word in 
$\mathcal{H}$. 
\end{proof}

\begin{proof}[Proof of Theorem~\ref{thm:action}]
Given two $d$-dimensional representations $\theta, \eta$ of $Q$ over $\F_p$, use 
Theorem~\ref{thm:decompose} and Theorem~\ref{thm:module_iso} to decompose and group by 
isomorphism types as $\phi=\iota_1^{d_1}\oplus \dots \oplus \iota_\ell^{d_\ell}$, 
and 
$\psi=\kappa_1^{e_1}\oplus\dots\oplus\kappa_{\ell'}^{e_{\ell'}}$. Let  
$\Ind(\phi)=\{\iota_1, \dots, \iota_\ell\}$, and similarly we have $\Ind(\psi)$. 

For any $\beta\in\aut(Q)$, $\theta^{\beta}=(\iota_1^\beta)^{d_1}\oplus \dots 
\oplus (\iota_\ell^\beta)^{d_\ell}$. If $\theta^{\beta}\cong \eta$, then by  
Theorem~\ref{thm:krull_schmidt}, $\ell=\ell'$ and there 
exists $\sigma\in\sg_{\ell}$ \st $\iota_{\sigma(i)}^\alpha=\kappa_i$ and 
$d_{\sigma(i)}=e_i$. Furthermore, this also implies that 
$\clo(\Ind(\phi))=\clo(\Ind(\psi))$. 

Given $\theta=\iota_1^{d_1}\oplus \dots \oplus \iota_\ell^{d_\ell}$, and 
$\eta=\kappa_1^{e_1}\oplus\dots\oplus\kappa_{\ell'}^{e_{\ell'}}$, first check 
whether $\ell=\ell'$ and $\clo(\Ind(\theta))=\clo(\Ind(\eta))$. If either of these 
two conditions is not satisfied, then $\theta$ and $\eta$ cannot be equivalent 
under any 
$\beta\in\aut(Q)$. If these two conditions are satisfied, let 
$\Omega=\clo(\Ind(\theta))$. The action of $\aut(Q)$ on $\Omega$ allows us to 
consider $\aut(Q)$ (or, more precisely, its homomorphic image)
as a permutation group $A_Q \leq \sym(\Omega)$. View $\phi$ and $\psi$ as 
functions from $\Omega$ to 
$\N$, that is, $\phi(\iota)$ is the multiplicity of $\iota$ in $\phi$. Our task 
now is just to decide whether there exists $\sigma\in A_Q$ \st for each 
multiplicity $m$, $\sigma$ sends those indecomposables in $\phi$ of multiplicity 
$m$ to those indecomposables in $\psi$ of multiplicity $m$. This is 
clearly a set-wise transporter problem. Solve this iteratively for each 
multiplicity. This gives a generating set for the coset $\{\beta\in\aut(Q)\mid 
\theta^\beta\cong \eta\}$. For each $\beta$ in the generating set, use 
Theorem~\ref{thm:module_iso} to compute $C\in\GL(d, p)$ \st $C\theta 
C^{-1}=\eta$, and use Theorem~\ref{thm:unit} to compute the unit group of $\eta$. 
Collect all the generators, which gives a generating set for $\{\GL(d, p)\times 
\aut(Q)\mid \theta^{(\alpha, \beta)}=\eta\}$. This finishes the description of the 
algorithm. 

Decomposing the representations takes time polynomial in $d$ and $p$, by 
Theorem~\ref{thm:decompose}, which is much better than what we need for our purposes. The 
application of the setwise transporter algorithm (Theorem~\ref{thm:setwise}) takes 
time $\poly(|\Omega|, 2^d) 
= \poly(|\clo(\Ind(\phi))|, p^d)$. If the number of indecomposable modules of 
dimension $d'$ is bounded by $\poly(|Q|,p^{d'})$ for all $d'$, and $\dim \iota_i = 
d_i$, then $|\clo(\Ind(\phi))|$ is bounded by $\poly(|Q|,\sum_i p^{d_i}) \leq 
\poly(|Q|,p^d)$. Thus this application of the setwise transporter algorithm takes 
time polynomial in the input size.

This gives us one element of $\aut(Q)$ that sends $\theta$ to $\eta$ (up to 
equivalence), as well as generators of the subgroup of $\aut(Q)$ that sends 
$\theta$ to itself (up to equivalence). To get the actual coset of action 
compatibilities, we need a subgroup of $\aut(A) \rtimes \aut(Q)$, that is, 
including data about the linear equivalences. By Observation~\ref{obs:genSubProd}, 
it is enough to find, for each generator $\alpha$ of the subgroup of $\aut(Q)$, 
generators of the subgroup $\{ \beta \in \aut(A) : \theta^{(\alpha, \beta)} = 
\theta\}$ (note, \emph{equality} here, not merely equivalence). 

To do this, we first find a linear spanning set of the linear subspace of $M(d,p)$ 
consisting of those matrices $\beta$ such that $\beta \theta(q) = 
\theta(q^{\alpha}) \beta$, using linear algebra over $\Z_p$. This linear subspace 
is in fact closed under matrix multiplication, as one can easily check, and the 
subgroup of $\aut(A)$ we seek is just the group of units of this matrix algebra. 
From the matrix algebra itself, we can find its group of units in polynomial time 
(Theorem~\ref{thm:unit}).
\end{proof}

For future reference we highlight the key criterion needed for the preceding 
algorithm to run efficiently:
\begin{criterion}\label{criterion:key}
For a group $Q$, there are at most $\poly(|Q|,p^{d})$ $d$-dimensional 
indecomposable $\F_p Q$-modules.
\end{criterion}

\subsection{The number of indecomposable modules of group 
algebras}\label{subsec:bound_act}
We will show that Criterion~\ref{criterion:key} holds in the tame case, and fails quite badly for \emph{all} wild group extensions.

It should be 
noted that our results do not follow 
directly from the tame--wild dichotomy, because we need explicit upper bounds over 
finite 
fields, whereas the dichotomy is typically stated over algebraically closed 
fields and does not provide quantitative bounds. Therefore, we are forced to use the 
explicit descriptions of tame group algebras to get such quantitative bounds over finite fields.

Bounds for the case of finite representation type are furnished by the 
following theorem:

\begin{theorem}[{Higman \cite{Hig54}; see \cite[Theorem~64.1]{CR66}}]
For a group $Q$, the group algebra $\F_p Q$ is of finite representation type if 
and only if the 
Sylow $p$-subgroup of $Q$ is cyclic. If this holds, then the number of 
indecomposable $\F_p Q$-modules is $\leq |Q|$.
\end{theorem}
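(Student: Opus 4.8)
The plan is the classical three-step argument behind Higman's theorem: reduce the representation type of $\F_p Q$ to that of a Sylow $p$-subgroup $P$, classify the indecomposables when $P$ is cyclic, and exhibit an infinite family when $P$ is not cyclic.

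For the reduction, I would use that $[Q:P]$ is invertible in $\F_p$: the relative trace $m \mapsto \tfrac{1}{[Q:P]}\sum_{\bar g\in Q/P} g\otimes g^{-1}m$ splits the canonical surjection $\F_p Q\otimes_{\F_p P}\mathrm{Res}^Q_P M\twoheadrightarrow M$, so every $\F_p Q$-module $M$ is a direct summand of a module induced from $\F_p P$. By Krull--Schmidt (Theorem~\ref{thm:krull_schmidt}) every indecomposable $\F_p Q$-module is then a summand of some $\mathrm{Ind}_P^Q N$ with $N$ an indecomposable $\F_p P$-module, and since each $\mathrm{Ind}_P^Q N$ is finite-dimensional it has finitely many indecomposable summands; hence $\F_p Q$ is of finite type whenever $\F_p P$ is. For the converse I would note that every indecomposable $\F_p P$-module $N$ is a summand of $\mathrm{Res}^Q_P\mathrm{Ind}_P^Q N$ (the identity double coset in the Mackey formula contributes $N$ itself), so if $\F_p Q$ has finite type, restricting its finitely many indecomposables back to $P$ and decomposing yields a finite list containing every $N$. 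The sharp bound $|Q|$ requires tracking how the $\mathrm{Ind}_P^Q N$ actually split; for the algorithmic application finiteness alone suffices, so I would be content to sketch this refinement.

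When $P=\la g\ra$ is cyclic of order $p^n$, I would identify $\F_p P\iso\F_p[x]/(x^{p^n})$ via $x\mapsto g-1$, using $(g-1)^{p^n}=g^{p^n}-1=0$ (Frobenius in characteristic $p$) together with a dimension count. This is a uniserial local algebra, so by the structure theory of $\F_p[x]$-modules the indecomposables are exactly $\F_p[x]/(x^j)$ for $1\le j\le p^n$, i.e.\ $|P|$ of them; combined with the reduction this gives finite type for $\F_p Q$. When $P$ is not cyclic, $P/\Phi(P)\iso\Z_p^d$ with $d\ge 2$, so $P$ surjects onto $\Z_p\times\Z_p$ and $\F_p[\Z_p\times\Z_p]$ is a quotient algebra of $\F_p P$; since an $\F_p P$-module whose action factors through a quotient algebra has the same submodule lattice over both, it is enough to exhibit infinitely many non-isomorphic indecomposable $\F_p[\Z_p\times\Z_p]=\F_p[x,y]/(x^p,y^p)$-modules, and then the reduction step forces $\F_p P$, hence $\F_p Q$, to be of infinite type. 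Concretely, for each monic irreducible $\pi\in\F_p[t]$ of degree $m$ with companion matrix $C_\pi$, I would take $\F_p^m\oplus\F_p^m$ with $x$ acting as $\bigl(\begin{smallmatrix}0&I_m\\ 0&0\end{smallmatrix}\bigr)$ and $y$ as $\bigl(\begin{smallmatrix}0&C_\pi\\ 0&0\end{smallmatrix}\bigr)$ (both square to zero and commute), and argue these are indecomposable and pairwise non-isomorphic.

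The main obstacle is the last step over the finite field $\F_p$: over $\overline{\F}_p$ one parametrizes these indecomposables by an eigenvalue and both indecomposability and non-isomorphism are immediate, but over $\F_p$ one must replace the eigenvalue by a companion matrix of an irreducible of growing degree, and then verify that the endomorphism ring of each module is local (endomorphisms reduce, modulo a nilpotent ideal, to elements of $\F_p[C_\pi]\iso\F_{p^m}$, a field) and that the conjugacy class of $C_\pi$, hence $\pi$, is an isomorphism invariant of the module; infinitude of the family then follows from the existence of irreducibles of every degree over $\F_p$. The only other delicate point—extracting the exact constant $|Q|$ rather than mere finiteness—is Higman's careful bookkeeping of how $\mathrm{Ind}_P^Q(\F_p[x]/(x^j))$ decomposes, and is not needed for our purposes.
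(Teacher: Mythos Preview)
The paper does not prove this theorem; it is quoted as a classical result of Higman (with a pointer to Curtis--Reiner) and used as a black box. So there is no ``paper's own proof'' to compare against.

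Your argument for the equivalence is the standard one and is correct. The reduction to the Sylow $p$-subgroup via relative projectivity (the relative trace splitting of $\Ind_P^Q\res_P^Q M\twoheadrightarrow M$), together with Krull--Schmidt, gives that $\F_pQ$ has finite type whenever $\F_pP$ does, and the Mackey argument for the converse is fine. The identification $\F_p[\Z_{p^n}]\iso\F_p[x]/(x^{p^n})$ and its uniserial module list are standard. For the non-cyclic direction, your family over $\F_p[\Z_p\times\Z_p]$ is correctly set up: the two block-nilpotent matrices commute and square to zero, the endomorphism ring computation (forcing $C=0$, $A=D$, and $A$ to centralize $C_\pi$, with $B$ free and nilpotent) shows the ring is local with residue field $\F_p[C_\pi]\iso\F_{p^m}$, and similarity of $C_\pi,C_{\pi'}$ is the isomorphism invariant. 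The finite-field subtlety you flag is real and you handle it correctly.

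On the sharp bound $\le|Q|$: you are right that mere finiteness does not give it, and right that the paper only needs a $\poly(|Q|,p^d)$ bound, for which finiteness already suffices. If you want the constant, one route is to show that each $\Ind_P^Q U_j$ (with $U_j=\F_p[x]/(x^j)$) has at most $[Q:P]$ pairwise non-isomorphic indecomposable summands and that every indecomposable $\F_pQ$-module arises from a unique $j$; the paper's Proposition~\ref{prop:sylow} is in this spirit but stated dimension-by-dimension. Since you explicitly defer this refinement and it is orthogonal to the paper's use of the theorem, there is no gap for present purposes.
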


For tame representation type, we require a more detailed analysis. To start with, 
it is 
well-known that the representation type of $\F_p Q$ depends on Sylow $p$-subgroups 
of $Q$, even with quantitative bounds:

\begin{proposition}[{See \cite[Proposition 3(1)]{BD82}}]\label{prop:sylow}
%[{See \cite[Corollary~63.8]{CR66}}]
Let $P$ be a Sylow $p$-subgroup of $Q$. Then 
$|\Ind(\F_p Q, d)|\leq [Q:P]\cdot(\sum_{d'=\lceil d/[Q:P]\rceil}^d|\Ind(\F_p P, 
d')|)$.
\end{proposition}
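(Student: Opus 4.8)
The plan is to relate modules of $\F_p Q$ to modules of $\F_p P$ via the restriction and induction functors, using the fact that $[Q:P]$ is coprime to $p$ so that these functors are especially well-behaved. First I would recall the key fact: if $P$ is a Sylow $p$-subgroup of $Q$, then $[Q:P]$ is invertible in $\F_p$, so by the standard argument (a relative version of Maschke's theorem), every $\F_p Q$-module $M$ is a direct summand of $\Ind_P^Q \res_P^Q M$, the induction of its restriction. Concretely, the obvious surjection $\Ind_P^Q \res_P^Q M \twoheadrightarrow M$ splits because one can average a section over the $[Q:P]$ coset representatives, dividing by $[Q:P]$.

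Next I would take an indecomposable $\F_p Q$-module $M$ of dimension $d$ and decompose its restriction $\res_P^Q M$ into indecomposable $\F_p P$-modules, say $\res_P^Q M \cong N_1 \oplus \dots \oplus N_r$. Since $\dim M = d$ and $[Q:P] \cdot \dim M = \dim \Ind_P^Q \res_P^Q M$, each $N_j$ has dimension at most $d$; more usefully, applying induction, $\Ind_P^Q \res_P^Q M \cong \bigoplus_j \Ind_P^Q N_j$ has dimension $[Q:P] \cdot d$, and $M$ is (by the splitting above) isomorphic to a direct summand of this. By Krull--Schmidt (Theorem~\ref{thm:krull_schmidt}), $M$ is therefore isomorphic to a direct summand of $\Ind_P^Q N_j$ for some single $j$. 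Thus every $d$-dimensional indecomposable $\F_p Q$-module is an indecomposable summand of $\Ind_P^Q N$ for some indecomposable $\F_p P$-module $N$ with $\dim N \leq d$. I also need a lower bound on $\dim N$: since $M$ restricts to contain $N$ as a summand up to the splitting, and $\res_P^Q \Ind_P^Q N$ has a summand isomorphic to $N$ again (by the coprimality splitting applied to $N$ in place of $M$), one gets that $\dim N \geq d / [Q:P]$, i.e. $\dim N \geq \lceil d/[Q:P]\rceil$ since dimensions are integers. This explains the summation range in the statement.

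Finally I would count: the number of indecomposable summands of $\Ind_P^Q N$ is at most $\dim \Ind_P^Q N / 1 = [Q:P] \dim N \leq [Q:P] d$, but more sharply, to match the stated bound, I would observe that the map sending $M \mapsto N$ (a choice of $\F_p P$-indecomposable of dimension $d'$ in the range $\lceil d/[Q:P]\rceil \leq d' \leq d$ whose induction contains $M$) together with a choice of which summand of $\Ind_P^Q N$ it is, gives an injection from the set of $d$-dimensional indecomposable $\F_p Q$-modules into the disjoint union, over $N \in \bigcup_{d'} \Ind(\F_p P, d')$, of the (at most $[Q:P]$, since each summand has dimension $\geq 1$ and the total is $[Q:P] d'$... actually at most $[Q:P] d'$) indecomposable summands of $\Ind_P^Q N$. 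To land exactly on $[Q:P] \cdot \sum_{d'} |\Ind(\F_p P, d')|$ one argues that the \emph{distinct isomorphism types} of $d$-dimensional indecomposable $Q$-modules appearing in a fixed $\Ind_P^Q N$ number at most $[Q:P]$: indeed $\res_P^Q$ of each such summand contains $N$ (again by the coprime splitting), so if $M_1, \dots, M_s$ were pairwise non-isomorphic $d$-dimensional summands of $\Ind_P^Q N$, then $\res_P^Q(M_1 \oplus \dots \oplus M_s)$ would contain $N^s$, forcing $s \cdot \dim N \leq \dim \res_P^Q M_1 \oplus \dots = s d$... I would instead bound $s$ by noting $\bigoplus_i M_i$ is a summand of $\Ind_P^Q N$ whose dimension $sd$ is at most $[Q:P]\dim N \leq [Q:P] d$, giving $s \leq [Q:P]$. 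Summing over all $N$ yields the claimed inequality.

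The main obstacle will be the bookkeeping in the last step: making the ``at most $[Q:P]$ distinct types per $\Ind_P^Q N$'' argument airtight, and ensuring the splitting lemmas (which require dividing by $[Q:P]$, hence $p \nmid [Q:P]$) are applied in the right direction both for $M$ over $Q$ and for $N$ over $P$. Everything else is a routine combination of the relative Maschke splitting, Frobenius reciprocity / the fact that restriction and induction are biadjoint here, and Krull--Schmidt.
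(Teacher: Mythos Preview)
Your proposal is correct and follows exactly the paper's approach: every $d$-dimensional indecomposable $\F_p Q$-module $M$ is a direct summand of $\Ind_P^Q N$ for some indecomposable $\F_p P$-module $N$ with $\lceil d/[Q:P]\rceil \le \dim N \le d$, and each such $\Ind_P^Q N$ contributes at most $[Q:P]$ non-isomorphic $d$-dimensional indecomposable summands by the dimension count $sd \le [Q:P]\dim N \le [Q:P]d$. One small cleanup: your derivation of the lower bound $\dim N \ge d/[Q:P]$ is unnecessarily roundabout---once you have established that $M \mid \Ind_P^Q N$, the inequality $d = \dim M \le \dim \Ind_P^Q N = [Q:P]\dim N$ gives it immediately, with no need to invoke restriction again.
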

\begin{proof}
Any indecomposable $d$-dimensional $\F_p Q$-module $M$ is an $\F_p Q$-direct 
summand 
of some $N^Q$, where $N$ is an indecomposable $\F_p P$-module of dimension $d'$, 
with
$\frac{d}{[Q:P]}\leq d'\leq d$. (Recall that $N^Q$ denotes the induced module of 
$N$ to $Q$.) If $N$ is of dimension $d'\leq d$, then $N^Q$ contributes at most 
$[Q:P]$ non-isomorphic indecomposable $\F_p Q$-modules of dimension $d$. The claim 
then follows.
\end{proof}

In other words, to show that $Q$ satisfies Criterion~\ref{criterion:key}, it 
suffices to show 
that its Sylow $p$-subgroup $P$ satisfies Criterion~\ref{criterion:key}. 

Now we need to provide an explicit upper bound for the tame group algebras. We 
do this for the semi-dihedral groups $\sdihedral{m}$. The dihedral groups 
$\dihedral{m}$ can be deduced similarly because the structure of its 
indecomposables 
are very similar to those of $\sdihedral{m}$. (In fact, the forms of 
indecomposables for $\dihedral{m}$ are a subset of the forms for $\sdihedral{m}$. 
See \cite[Chap. 4.11]{Ben98} and compare with Appendix~\ref{app:SD}.) The 
generalized quaternion groups 
$\gquaternion{m}$ are handled by the following proposition, as 
$\gquaternion{m}$ is a subgroup of index $2$ of $\sdihedral{m+1}$: it is the 
subgroup generated by $x^2$ and $xy$. Note that this constant 
$2$ is important here.

\begin{proposition}[{See \cite[Proposition 3(2)]{BD82}}]\label{prop:subgroup}
%[See {\cite[Theorem 63.6]{CR66}}]
Let $H$ be a subgroup of $Q$. Then 
$$|\Ind(\F_p H, d)|\leq [Q:H]\cdot (\sum_{d'=d}^{d\cdot[Q:H]}|\Ind(\F_p Q, d')|).$$
\end{proposition}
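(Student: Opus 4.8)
The plan is to mirror the proof of Proposition~\ref{prop:sylow}, exchanging the roles of induction and restriction. The one structural ingredient needed is this: for \emph{any} $\F_p H$-module $M$, the module $M$ is a direct summand, as an $\F_p H$-module, of $\res_H(M^Q)$ (where $M^Q$ is the module induced to $Q$, as in Proposition~\ref{prop:sylow}). This holds with no coprimality hypothesis on $[Q:H]$ --- in contrast with Proposition~\ref{prop:sylow}, where one instead uses that an $\F_p Q$-module is a summand of the induction of its restriction, which does require $[Q:P]$ coprime to $p$. Concretely, writing $M^Q = \F_p Q \otimes_{\F_p H} M$ and choosing coset representatives $1 = q_1, \dots, q_k$ for $Q/H$ with $k = [Q:H]$, one has $\res_H(M^Q) = \bigoplus_{i=1}^{k} q_i \otimes M$ as $\F_p$-spaces; the term $q_1 \otimes M = 1 \otimes M$ is an $\F_p H$-submodule isomorphic to $M$ (since $h\cdot(1\otimes m) = h\otimes m = 1\otimes hm$), and the span of the remaining $q_i \otimes M$ is $H$-stable and complements it, so $1\otimes M$ is an $\F_p H$-module direct summand.

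Next, let $M$ be an indecomposable $\F_p H$-module of dimension $d$, so $M^Q$ has dimension $d[Q:H]$. Decompose $M^Q$ as an $\F_p Q$-module into indecomposables, $M^Q = N_1 \oplus \dots \oplus N_r$, so each $\dim N_j \le d[Q:H]$ and $\res_H(M^Q) = \bigoplus_j \res_H(N_j)$. Since $M$ is an indecomposable summand of $\res_H(M^Q)$, Krull--Schmidt (Theorem~\ref{thm:krull_schmidt}, applied to representations of $H$) forces $M$ to be an indecomposable summand of $\res_H(N_j)$ for some $j$; in particular $\dim N_j \ge \dim M = d$. Hence every $d$-dimensional indecomposable $\F_p H$-module occurs as an indecomposable summand of $\res_H(N)$ for some indecomposable $\F_p Q$-module $N$ with $d \le \dim N \le d[Q:H]$.

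To conclude, count with multiplicity. For a fixed indecomposable $\F_p Q$-module $N$ of dimension $d'$, the restriction $\res_H(N)$ has $\F_p$-dimension $d'$, hence has at most $\lfloor d'/d \rfloor \le [Q:H]$ indecomposable summands of dimension exactly $d$ (using $d' \le d[Q:H]$ for the last inequality). Summing this (over)count over all indecomposable $\F_p Q$-modules $N$ whose dimension $d'$ ranges over $\{d, d+1, \dots, d[Q:H]\}$ bounds the number of $d$-dimensional indecomposable $\F_p H$-modules by $[Q:H] \cdot \sum_{d'=d}^{d[Q:H]} |\Ind(\F_p Q, d')|$, which is exactly the asserted inequality.

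Nothing here is genuinely hard --- the statement is classical (as the citation to \cite{BD82} indicates) --- and the only point deserving care is getting the direction of the split inclusion right, namely $M \mid \res_H(M^Q)$ rather than the coprimality-dependent $M^Q\!$-to-$M$ version used in Proposition~\ref{prop:sylow}. For that I would include the explicit $1\otimes M$ computation above rather than appeal abstractly to the self-adjointness of induction for finite-index subgroups. The remainder is dimension bookkeeping together with Krull--Schmidt.
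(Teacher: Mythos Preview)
Your proof is correct and follows exactly the same approach as the paper's: show that each indecomposable $\F_p H$-module $M$ of dimension $d$ is a summand of $\res_H(N)$ for some indecomposable $\F_p Q$-module $N$ with $d \le \dim N \le d[Q:H]$, and then bound how many such $M$ a fixed $N$ can contribute. You simply fill in the details the paper omits --- the explicit verification that $M \mid \res_H(M^Q)$ via the $1\otimes M$ summand, and the appeal to Krull--Schmidt to pass from $M^Q$ to one of its $\F_p Q$-indecomposable summands --- which is entirely appropriate.
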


\begin{proof}
Any indecomposable $M$ of $\F_p H$ of dimension $d$ is a direct summand of the 
restriction of some 
$\F_p Q$ indecomposable $N$ of dimension $\leq [Q:H]\cdot d$. 
Each such $N$ 
contributes at most $[Q:H]$ non-isomorphic $\F_p H$ indecomposables of dimension 
$d$. The result then follows. 
\end{proof}

\begin{proposition}\label{prop:satisfy_key}
$\F_2 \sdihedral{m}$ satisfies the key criterion. 
\end{proposition}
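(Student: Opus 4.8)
The plan is to invoke Crawley-Boevey's explicit classification of the indecomposable $\F_2\sdihedral{m}$-modules (reproduced in Appendix~\ref{app:SD}) and to count, for each dimension $d$, how many indecomposables of that dimension there can be. Recall the general shape of the classification: over a field, the indecomposables of a semi-dihedral algebra fall into (i) a finite list of ``string'' (or ``bar'') modules, indexed by certain combinatorial data (reduced words in a pair of arrows, up to rotation/inversion), whose dimension is determined by the length of the word; (ii) ``band'' (or ``tube'') modules, which form one-parameter families parametrized by a scalar $\lambda$ together with a ``band'' word and a multiplicity/Jordan-block-size $n$; and (iii) finitely many exceptional non-uniserial modules specific to the semi-dihedral case. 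What I would extract from this description is two facts, both purely combinatorial given the appendix: first, that the number of string words and band words of any bounded length is at most exponential in that length (in fact polynomial in $|\sdihedral{m}|$ once the length is comparable to $m$, since the words live in a fixed small alphabet whose admissible patterns are controlled by the two relations $x^2=1$, $y^{2^m}=1$), and second, that the dimension of a module built from a word of length $\ell$ grows at least linearly in $\ell$ (for string modules, dimension $=\ell+1$ or so; for band modules, dimension $= n\cdot(\text{word length})$).

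Concretely, I would argue as follows. Fix $d$. A $d$-dimensional string module corresponds to a word of length $O(d)$; the number of words of length at most $\ell$ over the relevant alphabet is at most $c^{\ell}$ for an absolute constant $c$, hence the number of $d$-dimensional string modules is at most $c^{O(d)} = p^{O(d)}$ with $p=2$, which is $\poly(|Q|,p^d)$. A $d$-dimensional band module is determined by a band word $w$, a scalar $\lambda\in\overline{\F}_2$, and an integer $n$ with $n\cdot|w| = d$; but we want indecomposables over the \emph{finite} field $\F_2$, not over $\overline{\F}_2$, so the scalar $\lambda$ ranges only over $\F_2$-rational points, equivalently over monic irreducible polynomials over $\F_2$ of degree dividing something at most $d$ (this is exactly where ``finitely many one-parameter families in each dimension'' becomes ``$\poly(p^d)$ indecomposables over $\F_p$''): the number of such polynomials is at most $2^{d}$, the number of band words $w$ with $|w|\le d$ is again $c^{O(d)}$, and $n\le d$, so altogether at most $p^{O(d)}$ band modules of dimension $d$. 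The exceptional modules contribute only a constant. Summing, $|\Ind(\F_2\sdihedral{m},d)|\le \poly(|\sdihedral{m}|,2^d)$, which is Criterion~\ref{criterion:key}.

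There is one subtlety to handle carefully: I should make sure the bound is uniform in $m$, i.e.\ that the constant $c$ and the exponents do not depend on $m$. This is fine because the alphabet for the string/band combinatorics has bounded size independent of $m$ (it is governed by the two generators $x,y$ and the fixed number of arrows in the presentation of the special biserial algebra $\F_2\sdihedral{m}$), and the effect of $m$ is only to bound the length of the relations $y^{2^m}=1$; a word of length $\ell$ that yields a module of dimension $d$ has $\ell = O(d)$ regardless of $m$, and when $d$ is large relative to $2^m$ the count is even more restricted (bounded by $|\sdihedral{m}|$ via Higman-type reasoning on the radical layers). The main obstacle, then, is not conceptual but bookkeeping: correctly reading off from Crawley-Boevey's appendix the precise dimension formula for each family and the precise indexing set, so that ``length of word $\Rightarrow$ dimension'' and ``dimension $\Rightarrow$ bounded number of $\F_2$-rational parameters'' are both pinned down with explicit (if crude) constants. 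Once those two combinatorial lemmas are in hand, the proof is a one-line sum. I would also remark, as the surrounding text already notes, that Propositions~\ref{prop:sylow} and~\ref{prop:subgroup} then immediately transfer this bound from the Sylow $2$-subgroup $\sdihedral{m}$ to any $Q$ having it as a Sylow subgroup, and from $\sdihedral{m+1}$ down to the index-$2$ subgroup $\gquaternion{m}$, and (since the dihedral string/band forms are a sub-collection of the semi-dihedral ones) to $\dihedral{m}$ as well.
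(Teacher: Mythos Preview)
Your approach is essentially the paper's: invoke Crawley-Boevey's explicit classification, observe that string and band modules are indexed by words in a bounded alphabet whose length is $O(d)$, and count. The paper carries this out with slightly more explicit bookkeeping (e.g.\ at most four choices per adjacent pair of vertices in the expanded quiver, yielding bounds like $4^{d+1}$ or $3d\cdot 4^d$ per class), but the skeleton is the same.

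One technical point you should patch: the classification as reproduced in Appendix~\ref{app:SD} is stated over $\F_4$, not $\F_2$, precisely because the $a_0$-gadget uses two \emph{distinct} nonzero scalars $\lambda,\mu$ (and $1/(\lambda-\mu)$), which is impossible in $\F_2$. The paper handles this by passing to $\F_4$ at the outset---every $\F_2\sdihedral{m}$-module becomes an $\F_4\sdihedral{m}$-module of the same dimension by extension of scalars, and non-isomorphic $\F_2$-modules remain non-isomorphic over $\F_4$ (Noether--Deuring), so an $\F_4$-count suffices. Your alternative route, treating the band parameter as ranging over monic irreducible $\F_2$-polynomials, is the right instinct for descending a classification from $\overline{\F}_2$, but it is not what Appendix~\ref{app:SD} actually gives you; either justify that reformulation separately or, more simply, adopt the paper's $\F_4$ trick. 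Relatedly, your mention of ``finitely many exceptional non-uniserial modules'' does not match the appendix's four-class taxonomy (asymmetric/symmetric strings and bands); once you work from the appendix as written, that clause can be dropped.
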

\begin{proof}
We shall follow the description of Crawley-Boevey \cite{CB89}. For the reader's 
convenience his result is reproduced in Appendix~\ref{app:SD}. Though we try to be 
self-contained here, a cautious 
reader is suggested to at least go over Appendix~\ref{app:SD} briefly and return to 
this proof, since the proof ultimately builds on counting explicitly the 
specific forms from Crawley-Boevey's construction. 

To start with, since Crawley-Boevey's description 
works over fields of size $>2$, we shall consider $\F_4\sdihedral{m}$ instead of  
$\F_2\sdihedral{m}$. Indeed, as any representation over $\F_2$ is one over $\F_4$ 
via the field extension, any upper bound on the number of representations over 
$\F_4$ will be an upper bound for the number of representations over $\F_2$.

The indecomposables of the group algebras $\F_4\sdihedral{m}$ are most easily 
described in terms of the indecomposables of the so-called semi-dihedral 
algebra $\Lambda_\ell=\F_4\langle a, b\mid a^3=b^2=0, a^2=(ba)^\ell b\rangle$, 
where $\ell=2^{m-1}-1$. This is because all indecomposables except the regular one 
of 
$\F_4\sdihedral{m}$ are in one-to-one correspondence with those of $\Lambda_\ell$ 
\cite{BD82}. 

Briefly speaking, there are four classes of indecomposable modules of 
$\Lambda_\ell$, called asymmetric strings, symmetric strings, asymmetric 
bands, and symmetric bands. Each class is associated with a family of 
configurations, and an auxiliary algebra of finite type. There is a procedure 
that takes one configuration and one indecomposable module of the auxiliary 
algebra and produces an indecomposable $\Lambda_\ell$-module. Crawley-Boevey 
proved that each indecomposable $\Lambda_\ell$-module 
can be generated by this procedure, and two indecomposables with different 
configurations or different auxiliary indecomposables are non-isomorphic. 
Therefore it is enough to deduce an upper bound on the number of indecomposable 
$\Lambda_\ell$-modules from Crawley-Boevey's description. 

Let us detail the case of symmetric strings. 
To describe the configurations of the symmetric strings, consider words in the 
alphabet $\{a_i, b_j\mid i\in\{-(\ell+1), \dots, \ell+1\}, j\in\{-1, +1\}\}$, 
satisfying 
the following conditions: (1) the letters alternate between $a_i$'s and $b_j$'s; 
(2) there are no 
subwords of the form $b_1a_mb_1$, $a_{m+1}b_1$, $b_1a_{m+1}$, or $a_ib_1a_j$ where 
$i, 
j>0$. For a letter $c_i$ ($c=a$ or $b$), $c_i^{-1}=c_{-i}$. For a word $w=w_1\dots 
w_n$, define $w^{-1}=w_n^{-1}\dots w_1^{-1}$. Now impose an equivalence relation 
by identifying $w$ with $w^{-1}$. If $w=w^{-1}$ then call $w$ symmetric; otherwise 
$w$ is asymmetric. 

The configurations of symmetric strings are derived from the symmetric words. The 
auxiliary algebra associated with of symmetric strings is
$\F_4[e]/(e^2=e)$, which has only two indecomposables $E$, both are of dimension 
$1$ with 
$e$ acting as identity and $0$, respectively. Therefore, the auxiliary algebra 
associated with symmetric 
strings does not play a major role. In contrast, for bands the associated 
algebra will contribute a notable factor. 

Let us consider the case of $e$ being the identity. Given a symmetric word 
$w=za_0z^{-1}$, the rule to construct a 
$\Lambda_\ell$-module $M$ is explained in Appendix~\ref{app:SD}. Let $d=\dim(M)$. 
From there it is seen 
that 
$M$ is determined by a quiver (a directed graph) with $d$ vertices. The arrows 
(edges) are determined by $z$. To get an upper bound on the number of 
indecomposables from symmetric strings of 
dimension $d$, it is enough to note that the vertices can be arranged in a line, 
with some special gadgets. To start with, note that the arrows among the rightmost 
$2\ell+3$ arrows are fixed due to the $e$-gadget. Then, depending on whether the 
leftmost arrow is labeled by $a$ or $b$, whether the remaining arrows are 
labeled by $a$ or $b$ is also determined. After this, between two adjacent 
vertices, there can be at 
most 4 possibilities: (1) an edge pointing left; (2) an edge pointing right; 
(3) the starting configuration of $a_0$-gadget; (4) the ending configuration of 
$a_0$-gadget. Summarizing the above, there are at most $2\cdot 4^d$ 
indecomposables of dimension $d$ coming from symmetric strings with $e$ acting as 
identity. When $e$ acts as $0$ the counting task is similar, except that in the 
$e$ gadget since the image of $e$ is trivial, those vertices in $e$ do not 
contribute to a dimension. Therefore, summarizing the two cases we have $4^{d+1}$ 
is an upper bound. Of course, due to the 
aforementioned restrictions on the words, and the fact that we need to respect the 
$a_0$-gadget, an arbitrary configuration may not yield a valid word, so $2\cdot 
4^d$ is a very loose bound, but is nonetheless good enough for our purposes. 

Similar considerations yield upper bounds for other types. 

For asymmetric strings, the auxiliary algebra only contributes two 
indecomposable, 
namely the vector space of dimension $1$ with identity map, or with the $0$ map. 
Therefore, taking into account the configurations, a representation of 
dimension $d$ is then determined by a quiver with $d$ vertices arranged on a 
line.\footnote{Here we mean the quiver \emph{after} expansion. Therefore, the 
drawings 
for the 
$a_0$ gadget and the $e$ gadget need to be rotated 90 degrees for the vertices to 
be on a line.} We first have the freedom to set 
the left-most edge to be labeled by $a$ or $b$. After this is fixed, for two 
adjacent vertices, there are $4$ possibilities: (1) an edge pointing left; (2) an 
edge pointing right; 
(3) the starting configuration of $a_0$-gadget; (4) the ending configuration of 
$a_0$-gadget. Therefore $4^{d+1}$ is an upper bound. 

For asymmetric bands, the continuous part is given by Jordan blocks. Since we work 
over $\F_4$, for a fixed dimension $d'$ there are $3$ Jordan blocks with nonzero 
eigenvalues. To count the 
number of indecomposables arising from asymmetric bands of dimension $d$, we shall 
count for each divisor of $d$ separately. This adds a factor of at most $d$. Now 
for a fixed decomposition $d=d'n$, we assume the Jordan blocks are of dimension 
$d'$, and the rest is to count the number of asymmetric bands with $n$ vertices. 
Note that we assume the edge between the first two vertices is labeled with $b$, 
so for each edge whether it is labeled by $a$ or $b$ will be fixed. As before 
there are $4$ possibilities between two adjacent vertices. Thus $3\cdot 4^n$ is an 
upper bound for the decomposition $d=d'n$. Taking into all such decompositions 
$3d\cdot 4^{d}$ is then an upper bound. 

For symmetric bands, it can be done similarly as for symmetric strings. The main  
difference is that the indecomposables are from the four-subspace 
quiver, therefore could possibly contribute a one-parameter family. This can be 
accommodated as in the case of asymmetric bands, therefore giving a $d\cdot 
4^{d+1}$ upper bound. 
\end{proof}

\subsection{A lower bound for wild types} \label{subsec:wild}

We now explain why \emph{every} wild group algebra does not satisfy 
Criterion~\ref{criterion:key}:

\begin{observation}[{J. Rickard \cite{Rickard}}]
Let $\F_p G$ be a group algebra of wild type. Then there are $p^{\Omega(d^2)}$ 
indecomposable $\F_p G$-modules of dimension $d$.
\end{observation}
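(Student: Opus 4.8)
The plan is to exhibit, for a wild group algebra $\F_p G$, many pairwise non-isomorphic $d$-dimensional modules by exploiting the defining property of wildness: every finite-dimensional algebra's representation theory ``embeds into'' that of $\F_p G$. The key observation is that the polynomial ring in two non-commuting variables, or equivalently the free algebra $\F_p\langle x, y\rangle$ (or a suitable finite-dimensional truncation such as $\F_p\langle x,y\rangle/(x,y)^2$, the algebra sometimes called the ``$2$-Kronecker'' algebra or a path algebra of the quiver with two parallel arrows), is wild, and its representation theory contains, in dimension $n$, at least $p^{\Omega(n^2)}$ pairwise non-isomorphic indecomposables. Concretely, I would count pairs of $n \times n$ matrices $(X, Y)$ up to simultaneous conjugation: generic such pairs are indecomposable, and the moduli space has dimension $2n^2 - \dim\GL_n = 2n^2 - n^2 = n^2$, so over $\F_p$ one gets on the order of $p^{n^2}$ isomorphism classes. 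A clean way to make this rigorous is to fix $X$ to be a single Jordan-type block (regular nilpotent or regular semisimple) and let $Y$ range over a family of $p^{\Omega(n^2)}$ matrices that are pairwise non-conjugate even after accounting for the (now small) centralizer of $X$; standard arguments show generic members of such a family are indecomposable.

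Given this input, the argument proceeds in two steps. First, invoke the definition of wild type: there is an exact functor (a representation embedding) from $\F_p\langle x,y\rangle$-modules — or from the chosen finite-dimensional wild test algebra — to $\F_p G$-modules that preserves indecomposability and reflects isomorphism, and which multiplies dimension by at most a constant factor $c$ depending only on $G$. Applying this functor to the $p^{\Omega(n^2)}$ indecomposable $n$-dimensional modules of the test algebra yields at least $p^{\Omega(n^2)}$ pairwise non-isomorphic indecomposable $\F_p G$-modules, each of dimension at most $cn$. Second, set $d = cn$ (or pad to reach exactly $d$, absorbing the adjustment into the $\Omega$): since $n = \Theta(d)$, we get $p^{\Omega(d^2)}$ indecomposable $\F_p G$-modules of dimension $d$, as claimed. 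One subtlety worth noting: the tame--wild dichotomy and the notion of representation embedding are usually stated over an algebraically closed field $\K \supseteq \F_p$, so strictly the $p^{\Omega(d^2)}$ count should first be established over $\K$ (or over a sufficiently large finite extension $\F_{p^k}$, where the exponent becomes $\Omega(d^2 \log_p p^k)$, still $p^{\Omega(d^2)}$ in the relevant sense), and then transported down; since distinct $\F_p$-modules remain distinct after base change, and a family of $p^{\Omega(d^2)}$ modules defined over $\F_{p^k}$ descends in part to $\F_p$ by a counting/pigeonhole argument on Galois orbits (orbits have size at most $k = O(1)$), the $\F_p$ bound follows.

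The main obstacle is making precise the ``representation embedding multiplies dimension by a bounded factor'' statement and extracting the quantitative $p^{\Omega(d^2)}$ from it, rather than merely ``infinitely many'' — the classical wildness literature typically records only that the functor exists and preserves indecomposability/isomorphism, not an explicit dimension bound. For the purposes of this paper, however, the cleanest route is probably to bypass the general machinery: use instead the concrete fact (which is how wildness is usually proved in the modular group-algebra case) that when the Sylow $p$-subgroup is \emph{not} cyclic, dihedral, semi-dihedral, or generalized quaternion, the group algebra $\F_p G$ — or already the group algebra of that Sylow subgroup, which suffices by restriction together with a Proposition~\ref{prop:subgroup}-style argument — surjects onto, or has as a subquotient, an algebra of the form $\F_p[x,y]/(x^2, y^2, xy - \lambda yx \text{ relations})$ or the Kronecker-type algebra, for which the $p^{\Omega(d^2)}$ lower bound is the elementary matrix-pencil count above. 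This keeps the quantitative bound entirely elementary and avoids invoking the abstract embedding functor with explicit constants.
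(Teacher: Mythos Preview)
Your proposal is correct and takes essentially the same approach as the paper: fix one generator (say $x$) to act as a single Jordan block of size $d$, let the other generator $y$ act as an arbitrary $d\times d$ matrix, observe that the centralizer of a single Jordan block has size $\leq p^d$ so that the $p^{d^2}$ choices for $y$ give at least $p^{d^2-d}$ non-isomorphic modules, and then transport this count through the representation embedding guaranteed by wildness (which multiplies dimension by a constant depending only on $G$).

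Two minor remarks. First, once $x$ acts as a single Jordan block, \emph{every} resulting $\F_p\langle x,y\rangle$-module is indecomposable, not just a generic one: any direct-sum decomposition would in particular be a decomposition of the underlying $\F_p[x]$-module, and a single Jordan block is indecomposable over $\F_p[x]$. This slightly streamlines your ``generic members are indecomposable'' step. Second, your concern about the tame--wild dichotomy being stated over algebraically closed fields is legitimate, but the paper handles it by simply invoking the precise definition of wildness (citing Benson) and taking the existence of the dimension-bounded embedding functor as given; your Galois-descent and ``bypass via the explicit Sylow structure'' suggestions are more careful than what the paper actually does.
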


\begin{proof}
To start with, consider the indecomposable modules of $\F_p\langle x, y\rangle$---the \emph{non-commutative} polynomial ring in two non-commuting variables $x,y$ with coefficients in $\F_p$---of 
dimension $d$ of the following form: fix $A$ to be the single Jordan block of 
size $d$. For any matrix $B$ of size $d$, $(x, y)\to 
(A, B)$ gives an 
indecomposable module of $\F_p\langle x, y\rangle$ of dimension $d$. There are thus
$p^{d^2}$ of such modules. $(A, B)$ and $(A, B')$ are isomorphic, \Iff 
there exists $C\in\GL(d, p)$ \st $CA=AC$, and $CB=B'C$. The number of $C\in 
\GL(d, p)$ \st $CA=AC$ is upper bounded by $p^{d}$ (one can easily compute the 
set of matrices that commute with a single Jordan block), so the number of 
non-isomorphic modules of this form is lower bounded by $p^{d^2 - d}$. 

Let $\F_p G$ be a group algebra of wild type. By definition,\footnote{This follows 
from the precise definition wildness, see {\cite[Sec.~4.4]{Ben98}}.} there exists 
a map 
from $\F_p\langle x, y\rangle$-modules to $\F_p G$-modules, which preserves 
indecomposability and non-isomorphisms, and multiplies the dimension by some 
constant depending only on $G$. Therefore, asymptotically, the number of 
indecomposables 
of $\F_p G$ of dimension $d$ is lower bounded by $p^{c\cdot d^2}$ for some 
constant $c$.
\end{proof}

\section{Algorithm and bounds for cohomology class 
isomorphism}\label{sec:cohomology}

In this section we give the full details of the polynomial-time algorithm for 
\cohiso provided that the coset for \actcomp is 
given (\ie, we solve Problem~\ref{prob:coh_iso} in the tame case). Before we 
begin, we note that the proofs here use cohomology in a black-box fashion that can 
be understood by simple 
pattern-matching, even if the reader is not so familiar with cohomology.

\begin{theorem} \label{thm:coho}
Let $\mathcal{S}$ be an Abelian characteristic subgroup functor. Given two groups 
$G_1, G_2$, and the coset of action compatibilities for the actions $\theta_i$ of 
$G_i/\mathcal{S}(G_i)$ on $\mathcal{S}(G_i)$, one can determine the coset of 
isomorphisms $\Iso(G_1, G_2)$ in time polynomial in $|H^2(G_i/\mathcal{S}(G_i), 
\mathcal{S}(G_i), \theta_i)|$. 
\end{theorem}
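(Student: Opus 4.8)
The plan is to reduce \cohisol (Problem~\ref{prob:coh_iso}, in the tame case) to a \emph{pointwise transporter} problem in a permutation group, exactly as sketched in \Sec{sec:proofOverview}, and then to argue that all the ingredients run in time polynomial in $|Q|$, $p^d$, and $|H^2(Q,\Z_p^d,\theta_i)|$. First I would fix notation: by the divide step and \cite[Lemma~II.2]{GQccc}, $\Iso(G_1,G_2)$ is identified with $\edpcs$, which (as explained just before Problem~\ref{prob:action}) decomposes as ``first solve \actcomps, then solve \cohisos.'' So I assume we are handed a generating set for the group $\Gamma := \{(\alpha,\beta)\in\aut(A)\times\aut(Q) : \theta^{(\alpha,\beta)}=\theta\}$ of action \emph{stabilizers} (not merely compatibilities up to equivalence --- equality of actions is what makes the cocycle live in a fixed space $Z^2(Q,A,\theta)$), together with one coset representative realizing the action compatibility between $\theta_1$ and $\theta_2$; conjugating the second cocycle by this representative, we may assume both cocycles $f,g$ lie in the \emph{same} space $Z^2(Q,A,\theta)$, and it remains to compute $\{(\alpha,\beta)\in\Gamma : f^{(\alpha,\beta)}\simeq g\}$.

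The key step is to view $\Gamma$ as acting on the finite set $X := H^2(Q,\Z_p^d,\theta)$. This requires two checks. (1) \textbf{The action is well-defined on cohomology classes:} $\aut(A)\rtimes\aut(Q)$ acts on $Z^2(Q,A,\theta)$ and sends $B^2(Q,A,\theta)$ into itself --- this is routine from the cocycle/coboundary identities, using that elements of $\Gamma$ fix $\theta$ --- so $\Gamma$ acts on the quotient $X$. (2) \textbf{Each generator's induced permutation of $X$ is computable:} given a basis of $B^2(Q,A,\theta)$ (which Proposition~\ref{prop:cohomologous} shows is computable in $\poly(|Q|,d,\log p)$ time), we can compute a transversal of $B^2$ in $Z^2$, i.e.\ an explicit list of cohomology-class representatives, and for each generator $(\alpha,\beta)$ of $\Gamma$ and each representative $f_i$, compute $f_i^{(\alpha,\beta)}$ and, using Proposition~\ref{prop:cohomologous} again, identify which class it lands in. This produces, for each generator, a permutation of $[|X|]$; these generate a permutation group $\wh\Gamma \leq \sym(X)$, the image of $\Gamma$. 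Now \cohisos becomes: find $\sigma\in\wh\Gamma$ sending the class $[f]$ to the class $[g]$ --- a coset computation for a \emph{pointwise} transporter (equivalently, orbit + stabilizer), which is solvable in time polynomial in $|X|$ and the number of generators by the standard Schreier--Sims machinery \cite{seressbook}. Finally, pull the resulting coset of $\wh\Gamma$ back to a coset of $\Gamma$: for each generator of the point stabilizer in $\wh\Gamma$ and for the transporter element, lift along the (already-recorded) correspondence to elements of $\Gamma\leq\aut(A)\times\aut(Q)$; Observation~\ref{obs:genSubProd} and routine coset bookkeeping assemble these into generators for the desired coset of \cohisos, hence (re-incorporating the \actcomps coset representative) for $\Iso(G_1,G_2)$.

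For the running-time bound, every subroutine above is polynomial in $|Q|$, $d$, $\log p$, and $|X| = |H^2(Q,\Z_p^d,\theta)|$: computing the basis of $B^2$ and a transversal of $Z^2/B^2$ costs $\poly(|Q|,d,\log p)\cdot|X|$ (the transversal has $|X|$ elements, each a $d\times|Q|^2$ matrix over $\Z_p$); building $\wh\Gamma$ costs (number of $\Gamma$-generators)$\times|X|\times\poly(|Q|,d,\log p)$, and $\Gamma$ has $\poly(|Q|,p^d)$ generators coming out of \actcomps; the pointwise transporter in $\sym(X)$ costs $\poly(|X|)$; and the lift-back is linear in the output size. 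Since the theorem's conclusion explicitly allows time polynomial in $|H^2(G_i/\mathcal S(G_i),\mathcal S(G_i),\theta_i)|$, this is exactly the promised bound; the tameness hypothesis is used only later (\Sec{sec:cohomology}, via \cite{GKKL}) to show $|H^2|=\poly(|Q|,p^d)$, which then upgrades this into a genuine polynomial-time algorithm. I expect the main obstacle to be the bookkeeping in the pull-back step --- keeping straight the correspondence between the abstract permutation group $\wh\Gamma\leq\sym(X)$ and its preimage $\Gamma\leq\aut(A)\times\aut(Q)$, and verifying that the transporter coset in $\wh\Gamma$ lifts to the \emph{full} transporter coset in $\Gamma$ rather than a proper subcoset (one must check that the kernel of $\Gamma\to\wh\Gamma$, namely the action-stabilizers that fix \emph{every} cohomology class, is already generated by things we track, so nothing is lost). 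This is the same pattern used in \cite{BQ,GQccc} for analogous reductions, so it should go through, but it is where care is required.
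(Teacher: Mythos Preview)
Your proposal is correct and follows essentially the same approach as the paper's proof: view the action-compatibility group as a permutation group on $H^2$ and solve a pointwise transporter problem there. The paper's proof is terser and omits the bookkeeping you spell out; your concern about the kernel in the pull-back dissolves if you run the orbit--stabilizer/Schreier computation with $\Gamma$-elements throughout (Schreier's lemma does not require the action to be faithful) rather than first passing to the image $\wh\Gamma\leq\sym(X)$ and then lifting.
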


\begin{proof}
Let $\alpha$ be an action compatibility, and let $\alpha_1, \dotsc, \alpha_k$ 
generate the group of self-compatibilities for the action associated to $G_2$. By 
applying $\alpha$ to $G_1$, we may assume that $\alpha = 1$. Now we treat each 
$\alpha_i$ as a permutation on $H^2(G/\mathcal{S}(G), \mathcal{S}(G), \theta)$. 
Compute the 2-cohomology classes of the extensions $\mathcal{S}(G_i) 
\hookrightarrow G_i \twoheadrightarrow Q_i$, and now check if they are in the same 
orbit of the permutation group generated by the $\alpha_i$ acting on $H^2$. The 
latter is an instance of the pointwise transporter problem, which can be solved in 
time polynomial in the domain size \cite{seressbook}. One element taking a 
2-cohomology class to the 
other provides an isomorphism, and the stabilizer of the 2-cohomology class gives 
generators of the automorphism group.
\end{proof}

To get Theorem~\ref{thm:main} from the preceding one, we will show that the 
following criterion holds in both finite and tame types. Then 
Theorem~\ref{thm:action} is used to find the coset of action compatibilities, and 
Theorem~\ref{thm:coho} is used to find the coset of group isomorphisms.

In the rest of this section, instead of writing $H^2(Q, \Z_p^k, \theta)$, we 
understand $\theta$ as defining a module $M$ over $\F Q$, and write $H^2(Q, M)$.

\begin{criterion}\label{criterion:key_coh}
The size of $H^2(Q, M)$ is bounded by $\poly(|Q|, |M|)$. Equivalently, the 
dimension of $H^2(Q, M)$ over $\F_p$ is bounded by $O(\dim_{\F_p} M + \log_p 
|Q|)$. 
\end{criterion}

The rest of this section is devoted to showing that in the finite and tame cases 
we in fact get the stronger statement that $\dim H^2(Q, M) \leq O(\dim M)$. We 
start with the case of finite type:

\begin{lemma}[{See \cite[Lemma~3.5]{GKKL}}] \label{lem:GKKL}
Let $\F$ be a field of characteristic $p$, and let $Q$ be a group with a cyclic 
Sylow $p$-subgroup. If $M$ is an indecomposable $\F Q$-module, then for any $j 
\geq 0$, we have $\dim_\F H^j(Q, M) \leq 1$.
\end{lemma}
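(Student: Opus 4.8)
The plan is to prove Lemma~\ref{lem:GKKL} by reducing to the cohomology of the cyclic Sylow $p$-subgroup and exploiting its periodicity. The first step is the standard transfer/restriction argument: for any subgroup $P \leq Q$ of index coprime to $p$, the restriction map $H^j(Q, M) \to H^j(P, M)$ is injective, because the composition $\mathrm{cor} \circ \mathrm{res}$ is multiplication by $[Q:P]$, which is invertible on the $p$-torsion group $H^j(Q,M)$ (over characteristic $p$, all higher cohomology is $p$-torsion when $j \geq 1$; the case $j=0$ must be handled separately or is vacuous since $\dim H^0 = \dim M^Q$ could exceed $1$ in general---so I should be careful and assume $j \geq 1$, or note that the cited statement in \cite{GKKL} concerns $j \geq 1$, or that $M^Q$ is restricted for indecomposables, which I would double-check against the precise statement). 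Taking $P$ to be a cyclic Sylow $p$-subgroup, it therefore suffices to bound $\dim_\F H^j(P, M)$ for $P$ cyclic.

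For $P$ cyclic of order $p^n$, the cohomology is $2$-periodic: $H^j(P, M) \cong \wh{H}^j(P,M)$ for $j \geq 1$, and the Tate cohomology satisfies $\wh{H}^{j}(P,M) \cong \wh{H}^{j+2}(P,M)$, with $\wh{H}^0(P,M) = M^P/N_P M$ and $\wh{H}^{-1}(P,M) = \ker(N_P)/(\sigma-1)M$ where $\sigma$ generates $P$ and $N_P = 1 + \sigma + \dots + \sigma^{p^n-1}$ is the norm element. So it is enough to bound the dimensions of these two Tate groups by $1$. The key input here is the structure of indecomposable $\F P$-modules for $P$ cyclic: since $\F P \cong \F[x]/(x^{p^n})$ is a local uniserial (in fact, a principal ideal) ring, every finitely generated indecomposable $\F P$-module is isomorphic to $\F[x]/(x^r)$ for some $1 \leq r \leq p^n$, with $\sigma$ acting as $1+x$. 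For such a module one computes directly: $(\sigma - 1)M = xM$ has codimension $1$ in $M$ (so $M^P = \ker x$ is $1$-dimensional, forcing $\dim \wh{H}^{-1} \leq 1$ and likewise $\dim H^0 = \dim M^P = 1$), and the norm $N_P$ acts as $(1+x)^{p^n-1}+\dots+1 = \frac{(1+x)^{p^n}-1}{x} = \frac{x^{p^n}}{x} = x^{p^n - 1}$ in $\F[x]/(x^r)$, which is zero unless $r = p^n$; in the exceptional case $r = p^n$ (i.e. $M = \F P$ is free) all Tate cohomology vanishes. Either way $\dim \wh{H}^0(P,M) \leq 1$ and $\dim \wh{H}^{-1}(P,M) \leq 1$, and periodicity propagates this to all $j$.

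Assembling: $\dim_\F H^j(Q,M) \leq \dim_\F H^j(P, M) \leq 1$ for all $j \geq 0$ (with the $j=0$ case following from $\dim M^P = 1$ for indecomposable $M$, hence $\dim M^Q \leq \dim M^P = 1$). I expect the main obstacle to be bookkeeping the low-degree cases cleanly---specifically making sure the $j=0$ and $j=1$ cases are covered by whichever of periodic/Tate cohomology one invokes, and confirming that the restriction map is injective in the degrees claimed (it is the transfer argument, which is clean for $j\geq 1$; for $j = 0$ one uses instead that an indecomposable module over the cyclic $\F P$ has a $1$-dimensional fixed space and $M^Q \subseteq M^P$). Since the lemma is quoted from \cite{GKKL}, a short self-contained proof along these lines, citing the classification of $\F P$-modules and the $2$-periodicity of cyclic-group cohomology, should suffice; I would not belabor the explicit norm computation beyond what is written above.
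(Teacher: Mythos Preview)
Your reduction to the Sylow $p$-subgroup $P$ via corestriction--restriction is correct, but the next step has a genuine gap: you then compute $H^j(P,M)$ by treating the restriction $M|_P$ as an indecomposable $\F P$-module, and this is not justified. An indecomposable $\F Q$-module can split upon restriction to $P$. For a concrete failure of your bound, take $p=2$, $Q=\Z_2\times\Z_7$, $P=\Z_2$, and let $M$ be a $3$-dimensional simple $\F_2\Z_7$-module with $\Z_2$ acting trivially; then $M$ is an indecomposable $\F_2 Q$-module (its endomorphism ring is the field $\F_8$), yet $M|_P\cong\F_2^{\oplus 3}$ with trivial action, so $\dim_\F H^j(P,M|_P)=3$ for every $j\geq 0$. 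The lemma still holds in this example ($H^j(Q,M)=0$ since $M^{\Z_7}=0$), but your inequality $\dim H^j(Q,M)\leq\dim H^j(P,M|_P)$ is too weak to detect it. The same issue invalidates your $j=0$ argument, which relies on $\dim M^P=1$.

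The paper does not prove this lemma directly, citing \cite[Lemma~3.5]{GKKL}; however, the argument of that reference is reproduced in the paper as the proof of Lemma~\ref{lem:sylow}, and it shows exactly how to repair your approach. Rather than restricting, one uses that $M$ is projective relative to $P$ (Higman's criterion, since $P$ is a Sylow $p$-subgroup): then $M$ is a direct summand of an induced module $W^Q$ for some $\F P$-module $W$, and because $M$ is indecomposable and induction respects direct sums, Krull--Schmidt lets one take $W$ itself indecomposable. Shapiro's Lemma then gives $H^j(Q,W^Q)\cong H^j(P,W)$, whence $\dim H^j(Q,M)\leq\dim H^j(Q,W^Q)=\dim H^j(P,W)$ with $W$ an \emph{indecomposable} $\F P$-module. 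At that point your uniserial classification and periodicity computation for cyclic $P$ apply verbatim to $W$ and yield the bound $\leq 1$.
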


\begin{proposition} \label{prop:coh_finite}
Let $\F$ be a field of characteristic $p$, and let $Q$ be a group with a cyclic 
Sylow $p$-subgroup. If $M$ is \emph{any} $\F Q$-module, then for any $j \geq 0$, 
we have $\dim_\F H^j(Q, M) \leq \dim_\F M$.
\end{proposition}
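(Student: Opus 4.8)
The plan is to reduce the general module case to the indecomposable case handled by Lemma~\ref{lem:GKKL}, using the Krull--Schmidt theorem (Theorem~\ref{thm:krull_schmidt}) together with the additivity of cohomology in the coefficient module. First I would write $M$ as a direct sum of indecomposable $\F Q$-modules, $M \cong M_1 \oplus \dots \oplus M_r$ (with repetitions allowed), so that $r \leq \dim_\F M$, since each indecomposable summand has dimension at least $1$. Then I would invoke the standard fact that group cohomology commutes with finite direct sums in the coefficient module: $H^j(Q, M) \cong \bigoplus_{i=1}^r H^j(Q, M_i)$. This is immediate from the definition via the bar resolution (a cocycle $Q^{j} \to M$ decomposes coordinatewise), or from the fact that $H^j(Q, -)$ is an additive functor.

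Given this decomposition, I would bound dimensions: $\dim_\F H^j(Q, M) = \sum_{i=1}^r \dim_\F H^j(Q, M_i)$. By Lemma~\ref{lem:GKKL}, each term $\dim_\F H^j(Q, M_i) \leq 1$, since $Q$ has cyclic Sylow $p$-subgroup and each $M_i$ is indecomposable. Hence $\dim_\F H^j(Q, M) \leq r \leq \dim_\F M$, which is exactly the claimed bound. Finally I would note that this immediately yields Criterion~\ref{criterion:key_coh} in the finite-type case, taking $j = 2$ and observing that $|H^2(Q,M)| = p^{\dim_{\F_p} H^2(Q,M)} \leq p^{\dim_{\F_p} M} = |M|$, which is certainly $\poly(|Q|,|M|)$; in fact one gets the stronger $\dim H^2(Q,M) \leq \dim M$ promised in the text.

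I do not expect a serious obstacle here: the only two ingredients are Lemma~\ref{lem:GKKL} (already quoted) and additivity of $H^j(Q,-)$, which is a routine homological fact. The one point deserving a sentence of care is making sure the direct-sum decomposition is finite and that $r \le \dim_\F M$ --- this holds because $M$ is finite-dimensional and every nonzero indecomposable has positive dimension --- and that the additivity statement is applied with $\F$-coefficients so that the dimension count is over the right field. If one wanted to be fully self-contained one could prove additivity in one line from the bar resolution, but since the paper already works freely with cocycles and coboundaries (e.g.\ in Proposition~\ref{prop:cohomologous}), citing it as standard should suffice.
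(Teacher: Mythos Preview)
Your proposal is correct and follows essentially the same argument as the paper: decompose $M$ into indecomposables, use additivity of $H^j(Q,-)$ in the coefficients, apply Lemma~\ref{lem:GKKL} to each summand, and bound the number of summands by $\dim_\F M$. The paper's proof is exactly this, citing \cite[p.~34]{Ben98} for additivity rather than the bar resolution, and your added remark about Criterion~\ref{criterion:key_coh} matches the surrounding discussion.
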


\begin{proof}
Write $M = \bigoplus_{i=1}^k M_i$ where the $M_i$ are indecomposable. Since 
$H^j(Q, \bigoplus_i M_i) \cong \bigoplus_i H^j(Q, M_i)$ \cite[p.~34]{Ben98}, we 
get that $\dim_\F H^j(Q, M)$ is at most the number of indecomposable summands of 
$M$, which is at most the dimension of $M$.
\end{proof}

The rest of this section is devoted to showing:

\begin{proposition} \label{prop:coh_tame}
Let $\F$ be a field of characteristic two. If the Sylow $2$-subgroup of a group 
$Q$ 
is dihedral, semi-dihedral, or generalized quaternion, then for any $\F Q$-module 
$M$ we have $\dim_\F H^2(Q, M) \leq 3 \dim_\F M$.
\end{proposition}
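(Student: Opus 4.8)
The plan is to reduce to the Sylow $2$-subgroup and then decompose $M$ into indecomposables, exactly as in the proof of Proposition~\ref{prop:coh_finite}. First, since $H^j(Q,M)$ is a direct summand of $H^j(P,\res_P M)$ when $P$ is a Sylow $2$-subgroup of $Q$ (the restriction-corestriction argument, which is an isomorphism onto the $p$-part of $H^j(Q,M)$ for $p=2$; see \cite[Ch.~3]{Ben98}), it suffices to bound $\dim_\F H^2(P,M)$ when $P$ is dihedral, semi-dihedral, or generalized quaternion and $M$ is an arbitrary $\F P$-module. By the additivity $H^2(P,\bigoplus_i M_i) \cong \bigoplus_i H^2(P,M_i)$, it then suffices to show $\dim_\F H^2(P,N) \leq 3\dim_\F N$ for each \emph{indecomposable} $\F P$-module $N$; in fact I would aim for the stronger uniform bound $\dim_\F H^2(P,N) \leq C$ for an absolute constant $C$ (which one can then take to be $3$, or verify is at most $3$), since summing over the at-most-$\dim_\F M$ indecomposable summands gives the result. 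So the crux is: bound $\dim H^2(P,N)$ for $N$ indecomposable over a tame $2$-group $P$.

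For this I would use the explicit structure of the indecomposables. The indecomposable $\F P$-modules for $P$ dihedral, semi-dihedral, or generalized quaternion are classified (string and band modules; this is precisely the Crawley-Boevey/Bondarenko--Drozd description recalled in Appendix~\ref{app:SD} and used in Proposition~\ref{prop:satisfy_key}), and cohomology of string and band modules over these algebras has been computed in the literature. The cleanest route is to invoke known results: for the group algebra of a dihedral, semidihedral, or quaternion $2$-group, the cohomology ring $H^*(P,\F)$ is well understood, and more relevantly there are results (e.g.\ Benson--Carlson, or the computations underlying the theory of the stable module category for tame blocks) giving that $\dim_\F H^n(P,N)$ is bounded \emph{independently of $n$ and of the indecomposable $N$} by a small constant depending only on the isomorphism type of $P$ among the three families. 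Concretely: the complexity of every $\F P$-module is at most $2$ when $P$ is tame, and for modules of complexity $\leq 2$ over such $P$ the dimensions of the cohomology groups $H^n(P,N)$ grow at most linearly in $n$ but are in fact uniformly bounded in low degrees; pinning down the constant $3$ in degree $2$ is then a finite check. An alternative, more self-contained route: use the minimal projective resolution of $\F$ over $\F P$ (which is periodic for cyclic and quaternion, and eventually of bounded Betti numbers for dihedral/semidihedral), apply $\Hom_{\F P}(-,N)$, and read off that the degree-$2$ term has $\F$-dimension at most (number of generators of $\Omega^2\F$) times (dimension of the relevant quotient of $N$), which is $O(1)\cdot$(something $\leq \dim N$).

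The main obstacle I anticipate is getting the \emph{uniform} constant right: it is easy to see $\dim H^2(P,N) = O(\dim N)$ with some unspecified constant by the above decomposition argument together with crude bounds on Betti numbers, but to nail the bound $3\dim_\F M$ one needs either a clean citation pinning down $\dim H^2(P,\F P/\text{(small submodule)})$-type quantities, or a short direct argument that for an \emph{indecomposable} $N$ one has $\dim H^2(P,N)\le 3$ (not just $\le 3\dim N$). I would expect the paper to handle this by citing the known cohomology computations for tame group algebras and doing the small degree-$2$ bookkeeping by hand, case by case over the three families — leveraging that in each family the second syzygy $\Omega^2\F$ has a bounded number of generators, so $\dim H^2(P,N) = \dim\Hom_{\F P}(\Omega^2\F, N)$ is bounded by (that number) $\times$ (the number of indecomposable summands of $N$ hit nontrivially), and for indecomposable $N$ this is a small constant, at most $3$. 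Summing over indecomposable summands of a general $M$ then yields $\dim_\F H^2(Q,M) \leq 3\dim_\F M$, completing the proof.
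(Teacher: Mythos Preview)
Your reduction to the Sylow $2$-subgroup $P$ is fine and matches the paper's Lemma~\ref{lem:sylow} in spirit. But from that point on the paper takes a completely different, and much shorter, route---one that does \emph{not} touch the classification of indecomposable $\F P$-modules at all.

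The paper's observation is that each of the three tame $2$-groups sits in an extension $\Z_{2^m} \hookrightarrow P \twoheadrightarrow \Z_2$ with \emph{cyclic} kernel and cyclic quotient. Applying the Lyndon--Hochschild--Serre inequality (Lemma~\ref{lem:serre}) with $N = \Z_{2^m}$ gives
\[
\dim H^2(P,M) \le \dim H^2(\Z_2, M^{\Z_{2^m}}) + \dim H^1\bigl(\Z_2, H^1(\Z_{2^m},M)\bigr) + \dim H^2(\Z_{2^m}, M),
\]
and each of the three terms on the right is cohomology of a \emph{cyclic} group with coefficients in a module of dimension at most $\dim M$, so Proposition~\ref{prop:coh_finite} bounds each by $\dim M$. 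That is exactly where the constant $3$ comes from: three terms on the $E_2$-page contributing to $H^2$, each controlled by the already-proved cyclic case. No case split among the three families, no string or band modules, no syzygy computations.

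Your strategy of decomposing $M$ into indecomposables and bounding $\dim H^2(P,N)$ for each indecomposable $N$ is not wrong in principle, but the step you flag as the main obstacle---pinning down the uniform constant---is genuinely nontrivial by that route. Your hoped-for bound $\dim H^2(P,N)\le 3$ for indecomposable $N$ is already tight at the trivial module over the Klein four group, so there is no slack; verifying it across all string and band modules would require exactly the case-by-case cohomology bookkeeping the paper's argument sidesteps. The spectral-sequence approach buys the constant for free by exploiting the cyclic-by-cyclic structure and recycling Proposition~\ref{prop:coh_finite}, whereas your approach would re-import the heavy representation-theoretic classification that was needed for \actcomp but is unnecessary here.
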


The form of this next lemma is from \cite[Lemma~3.8]{GKKL}, but the result is a 
direct consequence of the Lyndon--Hochschild--Serre spectral sequence 
(\cite[p.~337]{maclaneBook} and \cite{holt}) and standard facts about 
low-dimensional cohomology groups (\cite[pp.~354--355]{maclaneBook} and 
\cite[Lemma~2.1]{holt2}). 

\begin{lemma} \label{lem:serre}
Let $N$ be a normal subgroup of $Q$, $\F$ a field, and $M$ an $\F Q$-module. Then 
\[
\dim H^2 (Q, M) \leq \dim H^2 (Q/N, M^N) + \dim H^1 (Q/N, H^1(N, M)) + \dim H^2(N, 
M)^Q,
\]
where $M^N$ denotes the $Q/N$-submodule of $M$ consisting of the $N$-fixed points, 
and $H^2(N, M)^Q$ denotes the $Q$-fixed points in $H^2(N,M)$ (note that here $Q$ 
acts on both $N$ and $M$).
\end{lemma}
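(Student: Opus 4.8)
The plan is to read the inequality off the low-degree terms of the Lyndon--Hochschild--Serre spectral sequence of the extension $N \hookrightarrow Q \twoheadrightarrow Q/N$ with coefficients in $M$ (\cite[p.~337]{maclaneBook}, \cite{holt}), whose second page is $E_2^{p,q} = H^p(Q/N,\, H^q(N,M))$ and which converges to $H^{p+q}(Q,M)$. Here $Q/N$ acts on $H^q(N,M)$ through the conjugation action of $Q$ on the pair $(N,M)$; since a group acts trivially on its own cohomology, $N$ acts trivially on $H^q(N,M)$, so this action does factor through $Q/N$, and moreover $E_2^{0,q} = H^q(N,M)^{Q/N} = H^q(N,M)^Q$, the $Q$-fixed points as in the statement. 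Consequently the three entries of total degree $2$ are $E_2^{0,2} = H^2(N,M)^Q$, $E_2^{1,1} = H^1(Q/N, H^1(N,M))$, and $E_2^{2,0} = H^2(Q/N, H^0(N,M)) = H^2(Q/N, M^N)$, which are exactly the three terms on the right-hand side of the claimed inequality.

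Next I would invoke convergence, which equips $H^2(Q,M)$ with a finite decreasing filtration $H^2(Q,M) = F^0 \supseteq F^1 \supseteq F^2 \supseteq F^3 = 0$ whose graded pieces are $F^p/F^{p+1} \cong E_\infty^{p,\,2-p}$, so that $\dim H^2(Q,M) = \dim E_\infty^{2,0} + \dim E_\infty^{1,1} + \dim E_\infty^{0,2}$. Since each $E_\infty^{p,q}$ is obtained from $E_2^{p,q}$ by successively passing to the homology of a differential, it is a subquotient of $E_2^{p,q}$, hence $\dim E_\infty^{p,q} \le \dim E_2^{p,q}$. Combining the two displays with the identification of the $E_2$-entries above yields
\[
\dim H^2(Q,M) \;\le\; \dim H^2(Q/N, M^N) \;+\; \dim H^1\bigl(Q/N, H^1(N,M)\bigr) \;+\; \dim H^2(N,M)^Q ,
\]
as asserted.

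There is no genuine obstacle here --- this is precisely the low-degree content of Lyndon--Hochschild--Serre --- so the only point requiring care is the bookkeeping: that $E_2^{0,q}$ collapses to $H^q(N,M)^Q$ because $N$ acts trivially on $H^q(N,M)$, and that the abutment filtration on $H^2(Q,M)$ has exactly the three (possibly zero) graded pieces listed, with nothing in higher filtration degree. If one preferred to sidestep the full spectral sequence, the same bound can be assembled from the inflation--restriction five-term exact sequence together with the standard descriptions of low-dimensional cohomology (\cite[pp.~354--355]{maclaneBook}, \cite[Lemma~2.1]{holt2}) by analyzing the transgression $d_2 \colon E_2^{0,1} \to E_2^{2,0}$ and the behaviour at $E^{0,2}$; but the filtration argument is the cleanest route, and it also shows, with $2$ replaced by any $n \ge 0$, that $\dim H^n(Q,M) \le \sum_{p+q=n} \dim H^p(Q/N, H^q(N,M))$ (the $q=0$ and $q=n$ entries simplifying as above), should a higher-degree version ever be wanted.
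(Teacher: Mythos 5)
Your proposal is correct and is essentially the paper's own justification: the paper proves this lemma simply by citing the Lyndon--Hochschild--Serre spectral sequence and standard low-degree facts, and your argument (identifying the total-degree-$2$ entries of the $E_2$ page, noting $N$ acts trivially on $H^q(N,M)$ so $E_2^{0,q}=H^q(N,M)^Q$, and bounding $\dim H^2(Q,M)$ by the $E_2$ dimensions via the abutment filtration and the subquotient property of $E_\infty$) is exactly the standard argument being invoked.
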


As in the case of \actcomp, for cohomology with coefficients in an $\F_p 
Q$-module, we can essentially reduce from $Q$ to its Sylow $p$-subgroup: 

\begin{lemma} \label{lem:sylow}
Let $Q$ be a finite group, $\F$ a field of characteristic $p$, and $M$ an $\F 
Q$-module. If $H$ is a subgroup of $Q$ that contains a Sylow $p$-subgroup of $Q$, 
then there is an $\F H$-module $W$ such that, for all $j \geq 0$, $H^j(H, W) \geq 
H^j(Q, M)$ and $\dim W \leq \dim M$.
\end{lemma}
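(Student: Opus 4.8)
The plan is to take $W = \res^Q_H M$, the restriction of $M$ to $H$. This immediately gives $\dim_\F W = \dim_\F M$, so the entire content of the lemma reduces to proving $\dim_\F H^j(H, W) \geq \dim_\F H^j(Q, M)$ for every $j \geq 0$, and for this it suffices to exhibit an injective $\F$-linear map $H^j(Q, M) \hookrightarrow H^j(H, W)$. The natural candidate is the restriction map $\res^Q_H \colon H^j(Q, M) \to H^j(H, \res^Q_H M)$.

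To see that $\res^Q_H$ is injective, I would invoke the standard transfer (corestriction) argument (see, e.g., \cite{Ben98}): for any finite-index subgroup $H \leq Q$ there is a corestriction map $\mathrm{cor}^Q_H \colon H^j(H, \res^Q_H M) \to H^j(Q, M)$ with $\mathrm{cor}^Q_H \circ \res^Q_H = [Q:H]\cdot \mathrm{id}$ on $H^j(Q, M)$. The key numerical point is that, since $H$ contains a Sylow $p$-subgroup of $Q$, the full $p$-part of $|Q|$ already divides $|H|$, hence $p \nmid [Q:H]$. Because $M$ is an $\F Q$-module and $\Char(\F) = p$, each $H^j(Q, M)$ is an $\F$-vector space on which the integer $[Q:H]$ acts as the scalar $[Q:H]\cdot 1_\F$, which is nonzero and therefore invertible. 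Consequently $\res^Q_H$ is (split) injective, and composing with the equality $H^j(H, \res^Q_H M) = H^j(H, W)$ gives the desired embedding and the dimension inequality.

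I do not expect a real obstacle here: the argument is the cohomological analogue of the elementary $p$-local principle already used in the proof of Corollary~\ref{cor:main}, and the only things to be careful about are correctly recording the identity $\mathrm{cor}\circ\res = [Q:H]$ and checking that $[Q:H]$ is prime to $p$. One may equivalently phrase the conclusion as: restriction to a subgroup containing a Sylow $p$-subgroup is injective on the $p$-primary component of cohomology, and over a field of characteristic $p$ all of $H^j(Q, M)$ is $p$-primary.
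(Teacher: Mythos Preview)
Your proof is correct and takes a genuinely different, more direct route than the paper. The paper argues via relative projectivity: since $H$ contains a Sylow $p$-subgroup, $M$ is projective relative to $H$, hence (by Higman's criterion) $M$ is a direct summand of an induced module $W_H^Q$; Frobenius reciprocity places $W$ inside $\res^Q_H M$ (giving $\dim W \leq \dim M$), and Shapiro's Lemma yields $H^j(H,W)\cong H^j(Q,W_H^Q)\geq H^j(Q,M)$. You instead take $W=\res^Q_H M$ outright and use the transfer identity $\mathrm{cor}\circ\res=[Q:H]\cdot\mathrm{id}$, together with $p\nmid[Q:H]$, to conclude that $\res$ is injective on cohomology. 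Your argument is shorter and invokes less machinery; the paper's approach, on the other hand, produces a potentially smaller $W$ (a summand of the restriction rather than the whole restriction), though that extra sharpness is not actually used anywhere in the paper.
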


This follows from standard cohomological results; the proof we give here is from 
\cite[Lemma~3.5]{GKKL}.

\begin{proof}
By \cite[Corollary~3.6.10]{Ben98}, $M$ is projective relative to $H$, thus a theorem of 
Higman  \cite[Proposition~3.6.4]{Ben98} applies to $M$. The latter says that $M$ is a 
direct summand of some module induced from an $\F H$-module $W$. By Frobenius 
reciprocity, $W$ appears in the restriction of $M$ to $H$, so $\dim W \leq \dim 
M$. Finally, Shapiro's Lemma (see, \eg, \cite[Lemma~3.4]{GKKL}) says that 
$H^j(H, W) \cong H^j(Q, W_H^Q)$, and as $M$ is a direct summand of $W_H^Q$, 
$H^j(Q, W_H^Q) \geq H^j(Q, M)$. 
\end{proof}

Finally, we prove Proposition~\ref{prop:coh_tame}, thereby proving 
Theorem~\ref{thm:coho} and Theorem~\ref{thm:main}.

\begin{proof}[Proof of Proposition~\ref{prop:coh_tame}]
By Lemma~\ref{lem:sylow} it suffices to prove the result in the case when $Q$ 
itself is dihedral, semi-dihedral, or generalized quaternion; write $Q = G_m \in 
\{\dihedral{m}, \sdihedral{m}, \gquaternion{m}\}$. All three of the possibilities 
for $G_m$ are extensions of the form $\Z_{2^m} \hookrightarrow G_m 
\twoheadrightarrow \Z_2$. By Lemma~\ref{lem:serre}, we thus have
\begin{eqnarray*}
\dim H^2(G_m, M) & \leq & \dim H^2 (\Z_2, M^{\Z_{2^m}}) + \dim H^1 (\Z_2, 
H^1(\Z_{2^m}, M)) + \dim H^2(\Z_{2^m}, M)^{G_m} \\
 & \leq & \dim H^2 (\Z_2, M^{\Z_{2^m}}) + \dim H^1 (\Z_2, H^1(\Z_{2^m}, M)) + \dim 
 H^2(\Z_{2^m}, M) \\
 & \leq & \dim M^{\Z_{2^m}} + \dim H^1(\Z_{2^m}, M) + \dim M \\
 & \leq & \dim M + \dim M + \dim M
\end{eqnarray*}
Each of the last two lines follows from the preceding line by 
Proposition~\ref{prop:coh_finite}. \end{proof}

Although this already gives us the algorithmic consequence we need, we show how to 
extend Proposition~\ref{prop:coh_tame} to a wider class of groups, in a way that 
may be useful in future work:

\begin{corollary}
Let $G$ be a group with a subnormal series $1 = G_0 \unlhd G_1 \unlhd \dotsb 
\unlhd G_c = G$ where each quotient $G_i/G_{i-1}$ is either simple or cyclic. Let 
$F$ be a field and $M$ an $FG$-module. Then
$\dim_F H^2(G, M) \leq (35/8) c(c+1) \dim_F M$. 

In particular, for the class of groups for which there is such a chain with $c$ 
bounded by $O(1)$, we get that $\dim H^2(G, M) \leq O(\dim M)$. 
\end{corollary}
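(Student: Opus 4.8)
The plan is to induct on the length $c$ of the subnormal series, using Lemma~\ref{lem:serre} to peel off the top quotient $G/G_{c-1}$ at each stage. Set $N = G_{c-1}$, which is normal in $G$, with quotient $G/N$ either simple or cyclic. Lemma~\ref{lem:serre} gives
\[
\dim H^2(G, M) \leq \dim H^2(G/N, M^N) + \dim H^1(G/N, H^1(N, M)) + \dim H^2(N, M)^G.
\]
The third term is at most $\dim H^2(N, M)$, which by the inductive hypothesis (applied to the series $1 = G_0 \unlhd \dotsb \unlhd G_{c-1} = N$ of length $c-1$) is at most $(35/8)(c-1)c \dim_F M$. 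The first two terms only involve the single group $G/N$ acting on $\F$-modules of dimension at most $\dim_F M$ (since $M^N \subseteq M$ and, by the argument in Proposition~\ref{prop:coh_finite}-style reasoning, $\dim H^1(N,M) \leq \dim M$ via Lemma~\ref{lem:sylow}/\ref{prop:coh_finite} if $N$ has cyclic Sylow $p$-subgroup—but $N$ need not, so I will instead bound $\dim H^1(N,M)$ inductively too, or more carefully track both $H^1$ and $H^2$ simultaneously). So the real content is a uniform bound $\dim H^j(S, W) \leq K \dim_F W$ for $j \in \{1,2\}$ when $S$ is simple or cyclic, where $K = 35/8$; then the recursion $T(c) \leq T(c-1) + 2K$ for the two $H^1$-and-$H^2$-over-$G/N$ contributions plus the $H^2(N,\cdot)$ term gives $T(c) \leq (35/8)c(c+1)\dim_F M$ after bookkeeping. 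I would actually strengthen the inductive statement to bound $\dim H^1(G,M) + \dim H^2(G,M)$ or each of $\dim H^j(G,M)$ for $j \le 2$, so that the $H^1(G/N, H^1(N,M))$ term closes properly.

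The base case is the key lemma: for $S$ finite simple or cyclic and $W$ an $\F S$-module, $\dim_\F H^j(S, W) \leq (35/8)\dim_\F W$ for $j = 1, 2$. For cyclic $S$ this follows from Proposition~\ref{prop:coh_finite} (a cyclic group certainly has cyclic Sylow subgroups), giving the bound $\dim W$, well within $35/8$. For $S$ simple, if $p \nmid |S|$ then $\F S$ is semisimple and all higher cohomology vanishes; if $p \mid |S|$, one reduces via Lemma~\ref{lem:sylow} to a Sylow $p$-subgroup $P$ of $S$, but $P$ need not be cyclic, so this alone does not suffice. Instead the constant $35/8$ strongly suggests invoking a result of \cite{GKKL} bounding second cohomology of finite simple groups: the relevant theorem there states $\dim H^2(S, W) \leq (35/8)\dim_\F W$ (or $\leq \dim W/2$ in many cases, with $35/8$ the worst case) for all finite simple groups $S$ and all $\F S$-modules $W$, and similarly $\dim H^1(S,W) \leq \dim W/2$. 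So I would cite the appropriate lemmas from \cite{GKKL} for the simple-group base case and combine with Proposition~\ref{prop:coh_finite} for the cyclic case.

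The main obstacle is getting the recursion constant to come out to exactly $(35/8)c(c+1)$ rather than something slightly larger. The naive recursion $T(c) = T(c-1) + (\text{constant for two terms over } G/N)$ with $H^2(N,M)$ contributing $T(c-1)$ would give $T(c) \leq T(c-1) + 2\cdot(35/8)\dim M$, yielding $T(c) \leq (35/8)(2c)\dim M$, which is linear in $c$, not quadratic—so in fact the quadratic bound in the statement is quite loose, and the proof should be easy once the pieces are assembled. The genuine subtlety is only that $H^1(N,M)$ appears as a coefficient module for $H^1(G/N, -)$, so one needs $\dim H^1(N,M) \leq (\text{something})\cdot\dim M$ with a controlled constant; I would handle this by strengthening the induction to simultaneously bound $\dim H^1(G,M)$ and $\dim H^2(G,M)$ by $\leq (35/8)c^2\dim M$ and $\leq (35/8)c(c+1)\dim M$ respectively (or both by the larger bound), so that the term $\dim H^1(G/N, H^1(N,M)) \leq (35/8)\dim_\F H^1(N,M) \leq (35/8)\cdot(35/8)(c-1)^2\dim M$ is absorbed—here one does need the product of two $35/8$ factors to stay within the quadratic budget, which it does precisely because $(35/8)(c-1)^2 + \dotsb \leq c(c+1)$ once $c$ is not too small, with the small cases checked by hand. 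I would double-check the arithmetic of this absorption as the one genuinely error-prone step.
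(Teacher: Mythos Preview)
Your overall architecture is exactly the paper's: induct on $c$, peel off the top factor with Lemma~\ref{lem:serre}, and simultaneously carry an inductive bound on $H^1$ so that the mixed term $\dim H^1(G/N, H^1(N,M))$ can be controlled. The base-case ingredients you reach for are also the right ones: Proposition~\ref{prop:coh_finite} for cyclic quotients, and results from \cite{GKKL} for simple quotients.

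There is, however, a real gap in your closing arithmetic, and it stems from two inaccuracies in the constants. First, the GKKL bound for $H^2$ of a finite simple group is $\dim H^2(S,W)\le 17.5\,\dim W$, not $35/8$; the $35/8$ in the statement is what falls out \emph{after} the recursion, not the input. Second, and more importantly, the $H^1$ bound for a single simple or cyclic factor is not $35/8$ but at most $1$: for cyclic groups Proposition~\ref{prop:coh_finite} gives $\le 1$, and for simple groups the Guralnick--Hoffman theorem \cite{GH} gives $\dim H^1(S,W)\le \tfrac12\dim W$. With your proposed constants the recursion does not close: if $T_1(c-1)\le (35/8)(c-1)^2$ and the $H^1$-constant for $G/N$ is $35/8$, the middle term is $(35/8)^2(c-1)^2\dim M$, and the inequality $(35/8)(c-1)^2\le 2c-1$ needed to fit this inside $(35/8)c(c+1)$ already fails at $c=2$.

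The paper fixes this exactly where you said the ``genuinely error-prone step'' would be. It uses, in addition to Lemma~\ref{lem:serre}, the $H^1$ analogue $\dim H^1(G,M)\le \dim H^1(G/N,M^N)+\dim H^1(N,M)^G$ to set up a separate recursion $T_1(c)\le \tfrac12 + T_1(c-1)$, whence $T_1(c)\le c/2$ \emph{linearly}. Plugging this into $T_2(c)\le 17.5 + 17.5\,T_1(c-1) + T_2(c-1)$ makes the middle term $O(c)$, and summing gives $T_2(c)\le (35/8)c(c+1)$. So the missing pieces are: cite Guralnick--Hoffman for the $H^1$ base case, write down the two-term $H^1$ inequality explicitly, and keep $T_1$ linear rather than quadratic; then the arithmetic goes through cleanly.
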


Note that, in general, we only have the bound $c \leq \log_2 |G|$, which would 
only yield the bound $\dim H^2(G, M) \leq O(\log^2|G| \cdot \dim M)$, whereas 
Criterion~\ref{criterion:key_coh} requires a bound of the form $O(\dim M + \log 
|G|)$. 

\begin{proof}
We apply the same proof inductively, using a few additional facts about the 
cohomology of simple groups. First, Guralnick and Hoffman \cite{GH} showed that 
for simple $G$, any field $F$, and any $FG$-module $M$, $\dim H^1(G, M) \leq (1/2) 
\dim M$. Second, under the same conditions, Guralnick, Kantor, Kassabov, and 
Lubotzky \cite{GKKL} showed that $\dim H^2(G, M) \leq 17.5 \dim M$. Third, 
analogous to Lemma~\ref{lem:serre}, if $G$ is a finite group, $N$ is a normal 
subgroup, $F$ is a field and $M$ is an $FG$-module, then $\dim H^1(G, M) \leq \dim 
H^1(G/N, M^N) + \dim H^1(N, M)^G$. Let $T_i(c)$ be an upper bound on $\dim H^i(G, 
M) / \dim M$ for all $G$ with a chain of length $c$. 

By mimicking the proof of Proposition~\ref{prop:coh_tame}, we get the following 
recurrence for $T_2(c)$:
\begin{eqnarray*}
T_2(c) & \leq & 17.5 + 17.5 T_1(c-1) + T_2(c-1) \\
T_1(c) & \leq & 0.5 + T_1(c-1).
\end{eqnarray*}
Together with the fact that $T_1(1) \leq 0.5$ and $T_2(1) \leq 17.5$, we get that 
$T_1(c) \leq c/2$. Plugging into the bound for $T_2$, we get that $T_2(c) \leq 
\frac{35}{2} ((c-1)/2 + 1) + T_2(c-1)$, and therefore $T_2(c) \leq \frac{35}{4} 
\sum_{i=0}^{c+1} i$, and the result follows.
\end{proof}

\begin{remark}
Guralnick, Kantor, Kassabov, and Lubotzky \cite{GKKL} also showed that for any 
finite group $G$, any field $F$, and any \emph{faithful} $FG$-module $M$---that 
is, the only element of $G$ that acts trivially on $M$ is the identity---then 
$\dim H^2(G, M) \leq 18.5 \dim M$. Together with our results, this suggests that, in these cases, \actcomps may be the only real obstacle to \GpI.
\end{remark}

\section{Conclusion}\label{sec:conclusion}
\subsection{Discussion}\label{subsec:tame_wild}
Generally speaking (if somewhat glibly), there are two overarching reasons an 
instance of an isomorphism problem can be easy (not just group isomorphism): 1) 
there are very few possible isomorphisms to check, or 2) there aren't very many 
isomorphism classes and/or they have an explicit classification. Although this is 
a coarse caricature of reality,\footnote{For example, we recognize that this may not apply to certain algorithms for \GrI.} we believe it provides a useful viewpoint. The 
results of \cite{Sav80,Vik96,Kav07} use the classification of Abelian groups (2); 
the results of \cite{BCQ,BCGQ} roughly fall under (1): The number of isomorphisms 
is only $n^{O(\log \log n)}$, and then they use dynamic programming, an algorithm 
for code equivalence, and results on finite simple groups to reduce this to 
polynomial time; the results of \cite{Gal09,QST11,BQ} fall under (2) in the strong 
sense that they rely on the fact that the number of irreducible representations of 
a group $G$ in characteristic $p$ that doesn't divide $G$ is \emph{finite}, and 
all other representations are direct sums of these; and the results of \cite{LW12} 
use 
an essentially \emph{finite} 
classification of type (2) to reduce to (1) (see \cite{GQ_progress}).
We show that when (2) holds---of which tameness is a general interpretation---isomorphism can be tested in $\P$. 

Because of the universal property of wildness---it is as hard as classifying the 
representations of \emph{any} finite-dimensional algebra---it is widely believed 
that an explicit classification is impossible for wild problems. However, this 
does not rule out structural information, nor does it necessarily rule out 
efficient algorithms to decide when two points are equivalent under a wild 
equivalence relation (for example, as in \cite{CIK97,BL08,IKS10}). However, the 
wild problems that arise in \GpI are frequently ``wilder than wild'' \cite{belitskiiSergeichuk}(analogous to a problem being $\NP$-hard but not in $\NP$), and these problems seem to pose a 
core difficulty for \GpI.

The reasons (1) and (2)---or rather, their absence---also partially explain the 
widely held belief that nilpotent groups of class 2---those $G$ for which $G$ 
modulo its center is Abelian---are the hardest cases of group isomorphism, despite 
the lack of a formal reduction. Option (1) is ruled out, because even for 
$p$-groups of class 2 (nilpotent groups of class 2 and order a power of the prime 
$p$) in which every element is of order $p$, there are roughly $n^{O(\log n)}$ 
possible isomorphisms to check.\footnote{This is essentially because $\Aut(\Z_p^k) 
\cong \GL(k,\F_p)$, which is of size $\sim p^{k^2} = n^{\Theta(\log n)}$.} 
Option (2) is also ruled out, because the $p$-groups of class 2 form a wild 
classification problem \cite{sergeichukPGroups}, and in fact, one that is 
``strictly wilder'' than classifying the representations of finite-dimensional 
algebras \cite{belitskiiSergeichuk}.

These facts, the upper bounds in this paper, and the lower bound on the number of 
indecomposables in wild type, suggest that the border between tame and wild may 
also be the current border between the easy and hard cases of \GpI.

\subsection{Open questions}

\begin{question}
Upgrade Corollary~\ref{cor:main} to groups whose radicals have Abelian Sylow towers, 
that is, drop the requirement that the Sylow subgroups are \emph{elementary} 
Abelian.
\end{question}

Although we believe this is possible, we note that if one tries to use the methods 
of this paper, they must be used ``in a single shot:'' an Abelian Sylow tower can 
always be refined, as in Cannon and Holt \cite{CH03}, to a subnormal series whose 
quotients are elementary Abelian. However, if the Sylow subgroups were not 
themselves elementary Abelian, such as $\Z_{p^2}^d$, then the resulting subnormal 
series will contain more than one factor of the same characteristic, in which case 
proceeding inductively is likely to \emph{appear} to run into wildness. However, 
we believe that it may be possible to extend the structure of tameness, and the 
results that we leveraged here, from $\F_p Q$-modules to $(\Z/p^k \Z) Q$-modules, 
which would be enough to handle Abelian subgroups of characteristic $p$ and 
exponent $p^k$ all at once. (We note that we wouldn't really think of this 
approach as truly handling a wild situation, so much as realizing that a 
particular situation that might seem wild is in fact tame.)

\paragraph{\it Acknowledgment.} We thank G\'abor Ivanyos for pointing out to us 
reference \cite{BO08}. J. A. Grochow is supported by an SFI Omidyar Fellowship 
during this work. Y. Qiao is supported by Australian Research Council DECRA 
DE150100720 during this work.

\bibliographystyle{plain}
\bibliography{ref_new}

\appendix

\section{Indecomposable modules of semi-dihedral groups}\label{app:SD}

In this appendix, to help make the paper more self-contained, we present the description of indecomposables of the 
semi-dihedral algebra over $\F_4$ as given by Crawley-Boevey \cite{CB89}, with the 
aim of determining an explicit upper bound on the number of indecomposables of a 
fixed dimension. 

Recall that the semi-dihedral 
algebra is $\Lambda_\ell=\F_4\langle a, b\mid a^3=b^2=0, a^2=(ba)^\ell b\rangle$, 
where $\ell=2^{m-1}-1$. Let $\lambda, \mu$ be two nonzero field elements in 
$\F_4$. The indecomposable $\Lambda_\ell$-modules are classified into four types: 
asymmetric strings, symmetric strings, asymmetric bands, and symmetric bands. Each 
type will be associated with a set of configurations, and an auxiliary algebra.

\paragraph{Auxiliary algebras.} We first introduce some algebras and their 
indecomposables. 

\begin{enumerate}
\item $A=\F$: modules over $\F$ are just vector spaces 
over $\F$, and the only indecomposable is thus the one-dimensional vector space $\F$.
\item $A=\F[x]/(q(x))$, $q(x)$ a quadratic polynomial with 
distinct roots: two 
indecomposable modules corresponding to two possible eigenspaces of $x$.
\item For asymmetric bands, $A=\F[x,x^{-1}]$: indecomposables are given by Jordan 
blocks of arbitrary dimension with nonzero eigenvalue.
\item For symmetric bands, $A=\F\langle x, y\rangle/(p(x), q(x))$, where $\F \langle x, y \rangle$ denotes the non-commutative polynomial ring, and $p(x)$ and 
$q(x)$ are quadratic (univariate) polynomials with distinct roots: the indecomposables come from the four-subspace 
quiver with an extra conditions, namely the pair of subspaces as eigenspaces of 
$p(x)$ (resp. $q(x)$) are complementary. For the four-subspace quiver, it is 
well-known that for each dimension vector there is at most one one-parameter 
family \cite{nazarova,GP} (see \cite[Section~3.2]{brustle} for more recent coverage). 
\end{enumerate}

For asymmetric strings, the auxiliary algebra is $\F_4$. For symmetric strings, 
it 
is $\F_4[e]/(e^2=e)$. For asymmetric bands, it is $\F_4[x,x^{-1}]$. For symmetric 
bands, it is $\F_4\langle e, f\rangle/(e^2=e, f^2=f)$. 

\paragraph{Configurations.} To start with, we consider the words in the 
alphabet $\{a_i, b_j\mid i\in\{-(\ell+1), \dots, \ell+1\}, j\in\{-1, +1\}\}$ that 
alternate between $a_i$'s and $b_j$'s. For a letter $c_i$ ($c=a$ or $b$), 
$c_i^{-1}=c_{-i}$. For a word $w=w_1\dots 
w_n$, define $w^{-1}=w_n^{-1}\dots w_1^{-1}$. For two words $w$ and $v$, their 
product is $w\cdot v=wv$, the concatenation of $w$ and $v$. The $k$th power of $w$ 
can then be defined. Furthermore, a partial ordering of words is introduced as 
follows: $w<w'$, if (1) $w=w'c_ix$ with $i>0$ for some word $x$; (2) $w'=wc_ix$ 
with $i<0$ for some word $x$; (3) $w=xc_iy$, $w'=xc_jz$, where $i>j$, and $x, y, 
z$ are words. 

To define strings, we further impose conditions and equivalence relations to the 
above words. The condition is that there should be no subwords of the form 
$b_1a_mb_1$, $a_{m+1}b_1$, $b_1a_{m+1}$, 
or $a_ib_1a_j$ where $i, j>0$. The equivalence relation identifies $w$ with 
$w^{-1}$. If $w=w^{-1}$ then call $w$ a symmetric string; otherwise $w$ is an 
asymmetric string. 

To define bands, we also impose conditions and equivalence relations on the words 
that are of even length, and not powers. The conditions is that, no powers include 
subwords of any of the four types as in the condition for strings. The equivalence 
relation is identifying $w$ with all cyclic rotations of $w$ and $w^{-1}$. 

For each string or band, we first associate a preliminary quiver (a directed 
graph) with edges labeled as follows. Recall that we use $c$ to denote either $a$ 
or $b$. 

\begin{description}
\item[Asymmetric strings] Let $w=w_1\dots w_n$ 
be an asymmetric string. The quiver 
is the graph with $n+1$ vertices $\{v_1, \dots, v_{n+1}\}$, with $n$ edges between 
$v_i$ and  $v_{i+1}$ for $i\in[n]$. The edge $E_i$ between $v_i$ and $v_{i+1}$ is 
directed towards $v_i$ \Iff $w_i=c_j$ with $j > 0$, or $j=0$ and 
$w_{i-1}^{-1}\dots w_1^{-1}>w_{i+1}\dots w_n$. For $w_i=c_j$, $E_i$ is labeled 
with $c_{|j|}$. 
\item[Symmetric strings] Let 
$w=za_0z^{-1}$ be a symmetric string, where 
$z=z_1\dots z_n$. The quiver 
is the graph with $n+1$ vertices $\{v_1, \dots, v_{n+1}\}$, with $n+1$ edges, 
including $n$ edges between 
$v_i$ and  $v_{i+1}$ for $i\in[n]$, and a self-loop at $v_{n+1}$. The edge $E_i$ 
between $v_i$ and $v_{i+1}$ is 
directed towards $v_i$ \Iff $w_i=c_j$ with $j > 0$, or $j=0$ and 
$z_{i-1}^{-1}\dots z_1^{-1}>z_{i+1}\dots z_na_0z$. For $z_i=c_j$, $E_i$ is 
labeled with $c_{|j|}$. The self-loop is labeled with $e$. 
\item[Asymmetric bands] Let 
$w=w_1\dots w_n$ be an asymmetric band. By rotating 
and possibly inverting we assume $w_1=b_1$. The quiver 
is the graph with $n$ vertices $\{v_1, \dots, v_n\}$, with $n$ edges between 
$v_i$ and  $v_{i+1}$ for $i\in[n-1]$, and between $v_n$ and $v_1$. The edge $E_i$
between $v_i$ and $v_{i+1}$ is 
directed towards $v_i$ \Iff $w_i=c_j$ with $j > 0$, or $j=0$ and 
$w_{i-1}^{-1}\dots w_1^{-1}w_n^{-1}\dots w_{i+1}^{-1}> w_{i+1}\dots w_nw_1\dots 
w_{i-1}$. For $i=1$, $E_1$ is labeled with $b=x$. For $i>1$, $w_i=c_j$, $E_i$ is 
labeled with $c_{|j|}$. 
\item[Symmetric bands] Let $w=za_0z^{-1}a_0$ (after a possible rotation) be a 
symmetric band  
for some word $z=z_1\dots z_n$. The quiver 
is the graph with $n+1$ vertices $\{v_1, \dots, v_{n+1}\}$, with $n+2$ edges, 
including $n$ edges between 
$v_i$ and  $v_{i+1}$ for $i\in[n]$, and $2$ self-loops at $v_1$ and $v_{n+1}$, 
respectively. The edge $E_i$ between $v_i$ and $v_{i+1}$ is 
directed towards $v_i$ \Iff $z_i=c_j$ with $j > 0$, or $j=0$ and 
$w_{i-1}^{-1}\dots w_1^{-1}w_n^{-1}\dots w_{i+1}^{-1}> w_{i+1}\dots w_nw_1\dots 
w_{i-1}$. For $z_i=c_j$, $E_i$ is labeled with $c_{|j|}$. The self-loop at $v_1$ 
(resp. $v_{n+1}$) is labeled with $f$ (resp. $e$). 
\end{description}

The preliminary quivers need to augmented with the following three types of gadgets 
associated with (1) $a_i$, $|i|>1$; (2) $a_0$; (3) $e$ (and $f$). That is, when an 
edge between is labeled with $a_i$ (resp., $a_0$, $e$, $f$), it needs to be replaced 
by the following gadgets. For future use, let $a=a_1$ and $b=b_1$. In the 
following diagram, we use $V$, $V_i$, and $I_j$ as labels of the vertices, because 
as seen later, these vertices will be eventually labeled by vector spaces which 
are modules of the auxiliary algebras.

\begin{figure}[htbp]
\begin{center}
\input{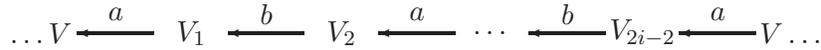}
\caption{The $a_i$ gadget for $|i|>1$. }
\label{fig:ai}
\end{center}
\end{figure}

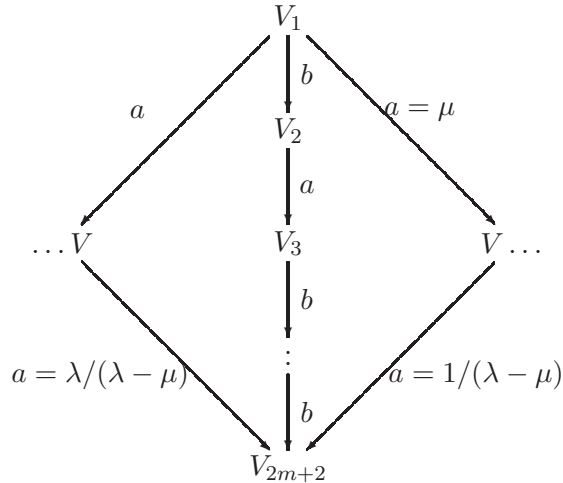
\begin{figure}[htbp]
\begin{center}
\ifx\JPicScale\undefined\def\JPicScale{1}\fi
\unitlength \JPicScale mm
\begin{picture}(60,70)(0,0)
\put(30,70){\makebox(0,0)[cc]{$V_1$}}

\put(30,55){\makebox(0,0)[cc]{$V_2$}}

\put(30,60){\makebox(0,0)[cc]{}}

\put(30,50){\makebox(0,0)[cc]{}}

\put(30,40){\makebox(0,0)[cc]{$V_3$}}

\put(30,25){\makebox(0,0)[cc]{$\vdots$}}

\put(30,10){\makebox(0,0)[cc]{$V_{2m+2}$}}

\put(35,0){\makebox(0,0)[cc]{}}

\put(0,40){\makebox(0,0)[cc]{$\dots V$}}

\put(60,40){\makebox(0,0)[cc]{$V\dots$}}

\linethickness{0.3mm}
\put(30,57.5){\line(0,1){10}}
\put(30,57.5){\vector(0,-1){0.12}}
\linethickness{0.3mm}
\put(30,42.5){\line(0,1){10}}
\put(30,42.5){\vector(0,-1){0.12}}
\linethickness{0.3mm}
\put(30,27.5){\line(0,1){10}}
\put(30,27.5){\vector(0,-1){0.12}}
\linethickness{0.3mm}
\put(30,12.5){\line(0,1){10}}
\put(30,12.5){\vector(0,-1){0.12}}
\linethickness{0.3mm}
\multiput(2.5,42.5)(0.12,0.12){208}{\line(1,0){0.12}}
\put(2.5,42.5){\vector(-1,-1){0.12}}
\linethickness{0.3mm}
\multiput(32.5,67.5)(0.12,-0.12){208}{\line(1,0){0.12}}
\put(57.5,42.5){\vector(1,-1){0.12}}
\linethickness{0.3mm}
\multiput(2.5,37.5)(0.12,-0.12){208}{\line(1,0){0.12}}
\put(27.5,12.5){\vector(1,-1){0.12}}
\linethickness{0.3mm}
\multiput(32.5,12.5)(0.12,0.12){208}{\line(1,0){0.12}}
\put(32.5,12.5){\vector(-1,-1){0.12}}
\put(32.5,62.5){\makebox(0,0)[cc]{$b$}}

\put(32.5,47.5){\makebox(0,0)[cc]{$a$}}

\put(35,32.5){\makebox(0,0)[cc]{}}

\put(32.5,32.5){\makebox(0,0)[cc]{$b$}}

\put(32.5,32.5){\makebox(0,0)[cc]{}}

\put(32.5,17.5){\makebox(0,0)[cc]{$b$}}

\put(55,10){\makebox(0,0)[cc]{}}

\put(10,57.5){\makebox(0,0)[cc]{$a$}}

\put(5,22.5){\makebox(0,0)[cc]{$a=\lambda/(\lambda-\mu)$}}

\put(47.5,57.5){\makebox(0,0)[cc]{$a=\mu$}}

\put(55,22.5){\makebox(0,0)[cc]{$a=1/(\lambda-\mu)$}}

\put(52.5,22.5){\makebox(0,0)[cc]{}}

\end{picture}
\caption{The $a_0$ gadget.  }
\label{fig:a0}
\end{center}
\end{figure}

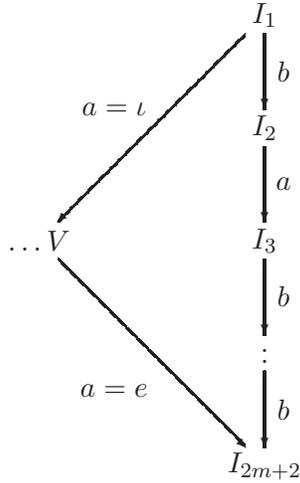
\begin{figure}[htbp]
\begin{center}
\ifx\JPicScale\undefined\def\JPicScale{1}\fi
\unitlength \JPicScale mm
\begin{picture}(32.5,70)(0,0)
\put(30,70){\makebox(0,0)[cc]{$I_1$}}

\put(30,55){\makebox(0,0)[cc]{$I_2$}}

\put(30,40){\makebox(0,0)[cc]{$I_3$}}

\put(30,25){\makebox(0,0)[cc]{$\vdots$}}

\put(30,10){\makebox(0,0)[cc]{$I_{2m+2}$}}

\put(0,40){\makebox(0,0)[cc]{$\dots V$}}

\linethickness{0.3mm}
\put(30,57.5){\line(0,1){10}}
\put(30,57.5){\vector(0,-1){0.12}}
\linethickness{0.3mm}
\put(30,42.5){\line(0,1){10}}
\put(30,42.5){\vector(0,-1){0.12}}
\linethickness{0.3mm}
\put(30,27.5){\line(0,1){10}}
\put(30,27.5){\vector(0,-1){0.12}}
\linethickness{0.3mm}
\put(30,12.5){\line(0,1){10}}
\put(30,12.5){\vector(0,-1){0.12}}
\linethickness{0.3mm}
\multiput(2.5,42.5)(0.12,0.12){208}{\line(1,0){0.12}}
\put(2.5,42.5){\vector(-1,-1){0.12}}
\linethickness{0.3mm}
\multiput(2.5,37.5)(0.12,-0.12){208}{\line(1,0){0.12}}
\put(27.5,12.5){\vector(1,-1){0.12}}
\put(10,57.5){\makebox(0,0)[cc]{$a=\iota$}}

\put(10,20){\makebox(0,0)[cc]{$a=e$}}

\put(32.5,62.5){\makebox(0,0)[cc]{$b$}}

\put(32.5,47.5){\makebox(0,0)[cc]{$a$}}

\put(32.5,32.5){\makebox(0,0)[cc]{$b$}}

\put(32.5,17.5){\makebox(0,0)[cc]{$b$}}

\end{picture}
\caption{The $e$ (resp. $f$) gadget. $I_j=\im e$ (resp. $\im f$), and $\iota$ 
denotes the inclusion of $I_i$ 
in $V$. }
\label{fig:e}
\end{center}
\end{figure}

Now we have all the ingredients to describe the representations of semi-dihedral 
algebras. For each configuration, we represent it using the preliminary 
quiver, and expand the preliminary quiver $Q'$ using the gadgets above to get the 
final quiver $Q$. Note that after 
expansion, the possible edge labels in the final quiver $Q$ are: $a$, $b$, $b=x$, 
or $a=y$, where 
$y\in\{\iota, e, f, \mu, \lambda/(\lambda-\mu), 1/(\lambda-\mu)\}$. Now take an 
indecomposable $V$ from the corresponding auxiliary algebra. Then 
a representation of $\Lambda_\ell$ can be formed as follows. Let $s$ be the number 
of vertices in $Q$. Then the underlying space $U$ is a direct sum of $s$ copies of 
$V$. The linear map corresponding to $b$ is specified by interpreting the label 
$b$ as the identity map between the two copies, and the label $b=x$ as the linear 
map associated with $x$, and otherwise $0$. The linear map 
corresponding to $a$ is specified by interpreting label $a$ as identify map, 
$a=\iota$ as the inclusion map, $e$ (and $f$) as the idempotent linear map, and 
$\mu$, $\lambda/(\lambda-\mu)$, $1/(\lambda-\mu)$ as the scalar map. 

Crawley-Boevey proved that these are all the indecomposables of $\Lambda_\ell$, 
and if two such indecomposables differ on either the continuous part or the 
discrete part, they are non-isomorphic.

\end{document}